\documentclass[manuscript]{acmart}

\setcopyright{none}
\renewcommand\footnotetextcopyrightpermission[1]{}

\usepackage{bm}
\usepackage{color}
\usepackage{graphicx}
\usepackage{booktabs}
\usepackage{multirow}
\usepackage{threeparttable}
\usepackage{enumerate}
\usepackage[english]{babel}
\usepackage{algorithmicx}
\usepackage{algorithm}
\usepackage{algpseudocode}
\usepackage{nomencl}
\usepackage{multicol}
\usepackage{commath}
\usepackage{makecell}
\usepackage{blindtext}
\usepackage{soul}
\usepackage{comment}
\usepackage{braket}
\newtheorem{theorem}{Theorem}[section]
\newtheorem{lemma}[theorem]{Lemma}
\newtheorem{corollary}[theorem]{Corollary}
\theoremstyle{remark}
\newtheorem{remark}{Remark}[section]

\newcommand{\cutval}{\ensuremath{c_{\mathrm{cut}}}}

\graphicspath{{Figures/}{ieee_bus_systems_no_legends_large_font_outputs/pngs/}}
\begin{document}
\title{REGRID-QAOA: A Resource-Efficient Hybrid QAOA Framework for Physics-Constrained Power System Islanding}

\author{Yuqi Jiang}
\affiliation{%
  \institution{Pennsylvania State University}
  \department{Department of Electrical Engineering}
  \city{University Park}
  \state{Pennsylvania}
  \country{USA}}

\author{Yuqi Zhang}
\affiliation{%
  \institution{Kent State University}
  \department{Department of Computer Science}
  \city{Kent}
  \state{Ohio}
  \country{USA}}

\author{Zhiding Liang}
\affiliation{%
  \institution{Rensselaer Polytechnic Institute}
  \department{Department of Computer Science}
  \city{Troy}
  \state{New York}
  \country{USA}}

\author{Qiang Guan}
\affiliation{%
  \institution{Kent State University}
  \department{Department of Computer Science}
  \city{Kent}
  \state{Ohio}
  \country{USA}}

\author{Yan Li}
\email{yql5925@psu.edu}
\affiliation{%
  \institution{Pennsylvania State University}
  \department{Department of Electrical Engineering}
  \city{University Park}
  \state{Pennsylvania}
  \country{USA}}

\author{Ganesh Kumar Venayagamoorthy}
\affiliation{%
  \institution{Clemson University}
  \department{Real-Time Power and Intelligent Systems Laboratory, Holcombe Department of Electrical and Computer Engineering}
  \city{Clemson}
  \state{South Carolina}
  \country{USA}}
\affiliation{%
  \institution{University of Pretoria}
  \city{Pretoria}
  \country{South Africa}}

\begin{abstract}
Quantum computing has rapidly emerged as a powerful paradigm for tackling computationally demanding problems. In particular, quantum optimization shows strong promise for hard combinatorial problems in power systems, where increasing distributed energy penetration heightens the need for intentional islanding to maintain grid reliability and resilience. However, power system islanding is an NP-hard combinatorial optimization problem that becomes computationally prohibitive for classical solvers as network size grows, motivating the use of quantum computing as a promising alternative pipeline. This study develops a resource-efficient hybrid QAOA islanding framework that brings physics-constrained power-system partitioning into the quantum optimization workflow. The framework combines coherency-informed graph reduction, physics-aware constraint modeling, and structured post-processing to efficiently convert shallow-circuit QAOA samples into high-quality feasible islanding decisions without deep circuits or large shot budgets. The proposed framework is validated on the standard IEEE benchmark systems (9-, 14-, 24-, 30-, 39-, and 57-bus), demonstrating that the hybrid workflow achieves Gurobi-optimal solution quality with a clear quantum resource advantage over vanilla QAOA, while the resulting islanding solutions satisfy all physical feasibility requirements after network separation. This study establishes QAOA-based islanding as a viable quantum approach for critical infrastructure, with structured post-processing as the key enabler of quantum resource efficiency.
\end{abstract}

\ccsdesc[500]{Hardware~Quantum computation}
\ccsdesc[500]{Mathematics of computing~Combinatorial optimization}
\ccsdesc[300]{Applied computing~Engineering}

\keywords{Power system islanding, Quantum Approximate Optimization Algorithm (QAOA), post-processing}

\maketitle
\section{Introduction}
Recent advances in quantum hardware and algorithms have intensified interest in applying near-term quantum computing to structured optimization and simulation problems \cite{montanaro2016quantum,preskill2018quantum}. In the noisy intermediate-scale quantum (NISQ) setting, hybrid quantum-classical methods are especially attractive because they combine parameterized quantum circuits with classical outer-loop optimization while avoiding the requirement of full fault tolerance \cite{bharti2022nisq,cerezo2021variational}. Among these approaches, variational quantum algorithms have emerged as a leading framework for exploring practical quantum advantage in scientifically and technologically relevant applications \cite{cerezo2021variational}.

Among the engineering domains that can benefit from such hybrid quantum optimization, power systems are especially compelling because many power grid operation decisions are inherently large-scale combinatorial problems defined on network topologies. Tasks such as topology reconfiguration \cite{aziz2021review}, contingency mitigation \cite{che2018preventive}, and controlled islanding \cite{kyriacou2017controlled} require operators to search over many discrete operating configurations while simultaneously satisfying physical and operational constraints. These topology-constrained decision problems are often NP-hard and can become computationally prohibitive for classical methods as network size and operational uncertainty increase, particularly when high-quality decisions are needed under stressed operating conditions \cite{auyeung2023nphard}. This computational structure makes modern power grids a natural application area for quantum optimization methods \cite{morstyn2024opportunities}.

Large-scale integration of distributed energy resources has made resilient power system operation more challenging by introducing new sources of uncertainty \cite{dincer2000renewable}. The intermittent output of renewable energy resources such as wind and solar can cause significant fluctuations, making it challenging to maintain the balance between supply and demand. If not properly managed, these imbalances can propagate through the interconnected grid and escalate into cascading failures \cite{guo2017critical}. As a result, protection schemes are often required to adopt more conservative settings to preserve security margins. This motivates the necessity of intentional islanding as a system-level resilience measure: by deliberately separating a stressed power network into multiple self-sustained islands, operators can confine disturbances, preserve critical loads, and create sustainable operating regions while restoration is initiated \cite{ding2015mixed}.

However, practical islanding is fundamentally a combinatorial decision-making problem under tight operational constraints \cite{ding2015mixed}. This combinatorial optimization problem is NP-hard, which requires substantial classical computing resources and implementation time to obtain near-optimal results. In power system studies, classical optimization methods is the main idea for solving this problem. 
Prior work has approached controlled islanding using Linear Programming \cite{kyriacou2017controlled}, Ordered Binary Decision Diagrams \cite{sun2003splitting}, and Binary Integer Programming \cite{huang2013optimal}. These classical approaches can become computationally prohibitive and less adaptable as system size and uncertainty grow, limiting their ability to deliver high-quality islanding decisions.
Although quantum computing has been applied to power grid partitioning problems related to islanding \cite{hartmann2025partitioning}, these approaches model the problem as a graph-theoretic decomposition and do not enforce the operational constraints that practical islanding requires. The quantum annealing formulation in \cite{hartmann2025partitioning} optimizes only topological objectives such as cut cost and sub-network size balance, while omitting power balance and the connectivity of the resulting sub-networks, leaving solutions potentially infeasible as operational islands.

The recent progress of quantum computing has provided a new pipeline to solve the combinatorial optimization problems \cite{chicano2024combinatorial}. It leverages quantum superposition and interference to lead the search toward high-quality solutions, potentially altering the effective optimization landscape relative to classical procedures \cite{gemeinhardt2024nisq}. The Quantum Approximate Optimization Algorithm (QAOA) has been studied in a wide range of optimization problems \cite{farhi2014qaoa}. Recent studies have applied QAOA to financial portfolio optimization \cite{zaman2024po}, chemical simulation and drug discovery \cite{mustafa2022variational}, and traffic flow optimization \cite{villanueva2025hybrid}. In the context of power system applications, 
QAOA has also been used for the OPP problem \cite{10980373}, unit commitment benchmarking \cite{adler2025scaling}, and power system contingency analysis \cite{11250180}. 

A central challenge in applying QAOA to constrained problems is that quantum measurement produces raw bitstring samples, which are not guaranteed to satisfy the problem's feasibility requirements \cite{niroula2022constrained}. Post-processing addresses this gap by applying classical procedures after quantum measurement to convert these raw samples into feasible, high-quality solutions, for instance, by repairing constraint violations, selecting the best sample among the measured outcomes, or applying local search corrections \cite{blekos2024review,shirai2024postprocessing}. Despite being a necessary step in any constrained quantum optimization pipeline, post-processing remains only sparsely studied. Prior work has shown that quadratic unconstrained binary optimization (QUBO)-based greedy post-processing can map quantum-device outputs to constraint-satisfying solutions while preserving monotone descent of a selected energy \cite{shirai2024postprocessing}. That framework, however, assumes that constraints are independent in the sense that each decision variable appears in at most one constraint penalty. While this holds for simple combinatorial benchmarks, many real-world optimization applications, including power system islanding, involve operationally coupled constraints in which the same variable simultaneously governs multiple requirements, violating the independence assumption and placing such problems outside the scope of existing post-processing theory. In addition, the previous QUBO-based post-processing method does not cover feasibility requirements that must be evaluated outside the QUBO penalty model. Power system islanding post-processing is distinguished from both of the preceding settings in that its feasibility conditions are simultaneously operationally coupled and only partially representable within the QUBO penalty model.

In this study, an end-to-end QAOA framework is developed for the power system islanding problem, which aims to minimize the mismatch of the power flow during islanding operations. This paper pioneers the study of QAOA for the full power system islanding solution pipeline. Beyond the quantum modeling and optimization stages, the paper also develops post-processing as a central part of the solution framework, proposed for settings where part of the feasible structure lies outside standard QUBO encodings and where constraints are operationally coupled, extending principled post-processing beyond the independent-constraint settings addressed by prior work. The paper therefore also studies a broader methodological question for quantum computing in real-world optimization: how sampled quantum outputs can be systematically converted into feasible, high-quality decisions when the governing constraints combine QUBO-encodable penalties with application-level structure that must be checked or repaired classically. The proposed framework is validated on six IEEE benchmark systems, 9-, 14-, 24-, 30-, 39-, and 57-bus using real IBM Quantum processors \texttt{ibm\_marrakesh}, demonstrating that principled post-processing consistently recovers near-optimal, fully feasible islanding partitions across all tested instances with modest circuit resources. A complementary noise study further indicates that the workflow remains resilient under realistic quantum noise, supporting its practicality for near-term quantum hardware. The main contributions of this paper can be summarized as:
\begin{itemize}
    \item This study develops a resource-efficient hybrid QAOA framework for physics-constrained power system islanding, advancing quantum islanding beyond purely topological graph partitioning. The formulation jointly represents the objective of minimizing disrupted active power flow across severed lines and the operational feasibility requirements needed for valid islands. A coherency-based graph reduction further lowers qubit count before circuit execution without compromising solution quality.
    \item This study provides a comprehensive post-processing study for constrained quantum optimization applied to islanding, systematically designing and comparing multiple repair strategies that convert raw QAOA samples into fully feasible islanding decisions under both QUBO-encodable and operationally coupled constraints. The study specifically addresses the challenge of enforcing exact graph connectivity, which lies outside standard QUBO encodings and must be handled classically, alongside the algebraic operational constraints. Unlike existing post-processing approaches, the proposed methods and their guarantees remain applicable when constraints are operationally coupled and the independence assumption of prior theory does not hold, directly broadening the reach of principled post-processing to practical applications beyond simple benchmarks.
    \item The framework is validated on six IEEE benchmark systems (9- to 57-bus) using real IBM Quantum processors. The coherency-based reduction demonstrably reduces the required qubit count across the benchmark cases without altering solution optimality. The post-processing stage substantially improves both solution quality and feasibility over vanilla QAOA sampling, with the best method recovering the Gurobi-optimal cut on all benchmarks using few QAOA circuit layers and modest shot budgets while producing fully feasible islanding solutions. Additional noise-resilience evaluation further shows that the workflow maintains robust performance under realistic quantum noise.
\end{itemize}

The remainder of this paper is as follows: Section \ref{formulation} introduces the islanding problem formulation, Section \ref{quantum} develops the quantum solution process for the islanding formulation, Section \ref{sec:postprocessing} presents the post-processing framework for non-QUBO and feasibility-critical constraints, Section \ref{sec:numerical} gives a numerical example, and Section \ref{conclusion} summarizes this work. 

\section{Problem Formulation}
\label{formulation}

A power grid is modeled as a weighted graph $\mathcal{G} = (V, E)$, where $V = \{1, \dots, N\}$ is the set of buses and $E \subseteq V \times V$ is the set of transmission lines. Each transmission line $e_{ij} \in E$ connecting bus $i$ to bus $j$ carries an associated weight $p_{ij}$, representing the active power flow along that line in the direction from $i$ to $j$. The graph structure captures the physical topology of the network, where buses represent substations or connection points, and transmission lines represent the physical conductors through which power is transferred.

To distinguish bus functionalities, the bus set $V$ is partitioned into three mutually disjoint subsets:
\begin{equation}
    V = V_G \cup V_L \cup V_0,
\end{equation}
where $V_G$, $V_L$, and $V_0$ denote the sets of generator, load, and transit buses, respectively. Generator buses inject active power into the network, load buses withdraw active power to serve demand, and transit buses act purely as intermediate routing points, facilitating power transfer without contributing to or consuming from the network.

For each bus $i \in V$, let $p_i^{G} \geq 0$ and $p_i^{D} \geq 0$ denote the total active power generation and demand, respectively. The net active power injection at bus $i$ is then defined as
\begin{equation}
    p_i := p_i^{G} - p_i^{D},
\end{equation}
where $p_i > 0$ indicates a net supplier (generator bus), $p_i < 0$ indicates a net consumer (load bus), and $p_i = 0$ indicates a transit bus. This work neglects network losses, so system-wide power balance requires the following conservation constraint:
\begin{equation}
    \sum_{i \in V} p_i = \sum_{i \in V} p_i^{G} - \sum_{i \in V} p_i^{D} = 0.
\end{equation}
Accordingly, the three bus subsets are formally defined as
\begin{align}
    V_G &:= \{i \in V : p_i^{G} > 0,\ p_i^{D} = 0\}, \\
    V_L &:= \{i \in V : p_i^{D} > 0,\ p_i^{G} = 0\}, \\
    V_0 &:= \{i \in V : p_i^{G} = 0,\ p_i^{D} = 0\}.
\end{align}
This partition provides a structured characterization of the functional roles of buses within the grid.

When a disturbance occurs and islanding operation becomes necessary, the number of islands $K$ is pre-determined by the coherent groups of generators \cite{kyriacou2017controlled, chow1982time}. In this work, coherent generator groups are identified via a coherency algorithm \cite{chow1982time, you2004slow}. Accordingly, the generator-bus set $V_G$ is partitioned into $K$ pairwise disjoint coherent groups:
\begin{equation}
    V_G = \bigcup_{k=1}^{K} V_G^{(k)}, \qquad V_G^{(k)} \cap V_G^{(\ell)} = \varnothing, \quad \forall\, k \neq \ell,
\end{equation}
where $V_G^{(k)}$ denotes the $k$-th coherent generator group. By construction, all generators within the same group $V_G^{(k)}$ are enforced to remain within the same island following controlled islanding.

Each island is represented as a connected subgraph $\mathcal{G}^{(k)} = (V^{(k)}, E^{(k)})$ of $\mathcal{G}$, where $V^{(k)}$ denotes the full set of buses assigned to island $k$, comprising generator, load, and transit buses. The island bus sets form a complete partition of $V$:
\begin{equation}
    V = \bigcup_{k=1}^{K} V^{(k)}, \qquad V^{(k)} \cap V^{(\ell)} = \varnothing, \quad \forall\, k \neq \ell,
\end{equation}
with the coherency constraint $V_G^{(k)} \subseteq V^{(k)}$ ensuring that generators within the same coherent group are assigned to the same island. The set of transmission lines severed to form the islands, referred to as the \textit{cut set}, is defined as:
\begin{equation}
    E_c := \{e_{ij} \in E : i \in V^{(k)},\ j \in V^{(\ell)},\ k \neq \ell\},
\end{equation}
here $E_c$ comprises all transmission lines whose two endpoints are assigned to distinct islands. To ensure each island operates self-sufficiently following separation, the net active power 
injection must be balanced within each island:
\begin{equation}
    \sum_{i \in V^{(k)}} p_i = 0, \quad \forall\, k = 1, \dots, K.
\end{equation}
The controlled islanding problem is then formulated as minimizing the total disrupted 
active power flow across the severed lines, subject to the partitioning, connectivity, 
and coherency requirements, which can be expressed as:
\allowdisplaybreaks
\begin{align}
\min_{\{V_k\}_{k=1}^{K}}
\;&\sum_{\substack{k,h\in\{1,\dots,K\}\\ h\neq k}}
\;\sum_{\substack{i\in V_k,\; j\in V_h\\ e_{ij}\in E}}
\frac{|p_{ij}|+|p_{ji}|}{2}, \label{eq:islanding_objective}\\
\text{s.t. }\;
& V_k\cap V_h=\emptyset,\ \forall k\neq h,\qquad \bigcup_{k=1}^{K} V_k = V, \label{con:onehot}\\
& |V_k| \ge N_{\min},\ \forall k\in\{1,\dots,K\}, \label{con:min_size}\\
& |V_k\cap V_G| \ge N_{G,\min},\ \forall k\in\{1,\dots,K\}, \label{con:gen}\\
& |V_k\cap V_L| \ge N_{L,\min},\ \forall k\in\{1,\dots,K\}, \label{con:load}\\
& \forall g,g'\in V_G:\ \Big( c(g)=c(g')\Rightarrow \exists k:\{g,g'\}\subseteq V_k \Big)\ \wedge \notag\\
& \hspace{10mm}\Big( c(g)\neq c(g')\Rightarrow \forall k:\ \neg(\{g,g'\}\subseteq V_k) \Big), \label{con:coherency}\\
& \mathcal{C}(G[V_k]) = 1,\ \forall k\in\{1,\dots,K\}. \label{con:connect}
\end{align}
where the objective in~\eqref{eq:islanding_objective} minimizes the total magnitude of active power transfers across island boundaries. The parameters $N_{G,\min}\ge 1$ and $N_{L,\min}\ge 1$ denote the required minimum numbers of generator and load buses in each island, respectively.
To ensure a smooth post-islanding transition, the islanding decisions must satisfy key physical and operational constraints of the power system, which are given in ~(\ref{con:onehot})--(\ref{con:connect}). Here, $c(g)\in\{1,\dots,K\}$ denotes the coherent-group label of generator $g$. (\ref{con:onehot}) enforces a one-hot bus assignment so that $\{V_k\}_{k=1}^K$ forms a disjoint and complete partition of $V$. (\ref{con:min_size}) prevents degenerate islands, namely trivially small partitions that contain too few buses to represent meaningful or operationally useful islands, by requiring each island to contain at least $N_{\min}$ buses. (\ref{con:gen}) guarantees self-sustained generation capability by enforcing at least $N_{G,\min}$ generator buses in each island. (\ref{con:load}) avoids generator-only islands by requiring at least $N_{L,\min}$ load buses in each island. (\ref{con:coherency}) preserves dynamic stability by grouping coherent generators within the same island while separating non-coherent groups. (\ref{con:connect}) ensures operability by constraining each island subgraph $G[V_k]$ to be connected, and $\mathcal{C}$ is the graph connected check algorithm (the subgraph is connected if $\mathcal{C}(G[V_k]) = 1$). This formulation is computationally difficult because it contains weighted graph partitioning as its core combinatorial structure, and the additional coherency, size, generation, load, and connectivity constraints further restrict the feasible partitions. The feasible region is also highly nonconvex because the binary assignment, coherency, and connectivity conditions induce a discrete combinatorial search space with many local minima, making it difficult for classical solvers to reliably attain the global optimum within practical time limits.

While the formulation in~\eqref{eq:islanding_objective}--\eqref{con:connect} captures the essential combinatorial structure of the islanding problem, the cut objective is a proxy for physical feasibility rather than an exact measure: cutting lines with smaller pre-islanding active power flow generally causes less disturbance to the post-separation power balance. To assess the actual physical consistency of a partition beyond what this proxy captures, power flow validation is applied after the QAOA pipeline. As incorporating such constraints directly into the QUBO would significantly increase its complexity and qubit requirements, they are instead applied as two post-pipeline physical validation steps. First, a partition is accepted only if every island yields a feasible power flow solution. Second, each island must satisfy $P_{g,\max} > P_d$ and $Q_{\max} > Q_d$, where $P_{g,\max}$ and $Q_{\max}$ denote the total available active-generation capacity and reactive upper capability of the island, and $P_d$ and $Q_d$ denote the corresponding active- and reactive-power demand, ensuring each island is self-supporting in both active and reactive power following separation~\cite{li2010rrbalance}. The numerical results in Section~\ref{sec:numerical} confirm that this compromise does not affect solution feasibility.

The overall framework of the proposed approach is illustrated in Fig.~\ref{fig:overall_diagram}. Fig.~\ref{fig:overall_diagram}~(a) maps the real-world power grid into the graph representation used for islanding optimization; Fig.~\ref{fig:overall_diagram}~(b) and Fig.~\ref{fig:overall_diagram}~(c) show QUBO/Hamiltonian encoding followed by QAOA execution and measurement; and Fig.~\ref{fig:overall_diagram}~(d) and Fig.~\ref{fig:overall_diagram}~(e) show classical post-processing and extraction of the final islanding solution. The next section details the quantum solution process, including the coherency-based reduction applied before QUBO construction.

\begin{figure}[!t]
\centering
\includegraphics[width=\textwidth]{overall_diagram_02.png}
\caption{Overall framework of the proposed QAOA-based controlled power system islanding approach. The pipeline consists of five operation blocks: (a)~problem setup, which converts real-world power-grid data into a graph representation; (b)~QAOA encoding, which maps the graph-partitioning problem into a Hamiltonian representation and circuit input; (c)~QAOA circuit execution and measurement, which produces a raw bitstring distribution; (d)~classical post-processing (M.~1--5), which repairs and filters sampled candidates to improve feasibility; and (e)~solution extraction, which converts selected bitstrings into the final islanding solution.}
\Description{Workflow diagram showing five stages: problem setup, QAOA encoding, QAOA execution, classical post-processing, and final solution extraction.}
\label{fig:overall_diagram}
\end{figure}

\section{QAOA Solutions for Power System Islanding}
\label{quantum}

Within the encoding and execution path summarized in Fig.~\ref{fig:overall_diagram}~(b)--(c), the quantum solution process consists of three stages: graph reduction via generator coherency, QUBO conversion of the objective and constraints, and QAOA circuit construction and optimization with a classical outer loop.

\subsection{Graph Reduction via Generator Coherency}
\label{sec:reduction}

In the NISQ setting, both the qubit count and the number of circuit executions (shots) needed to estimate the objective are limiting resources, and the sampling cost grows with the dimension of the search space. Embedding power system domain knowledge into the problem formulation offers a principled way to reduce this search space before the QUBO is constructed, saving quantum resources without sacrificing solution quality. The coherency constraint~(\ref{con:coherency}) provides exactly such an opportunity, since it requires every generator in $V_G^{(k)}$ to be assigned to the same island. When the subgraph $G[V_G^{(k)}]$ induced by group $k$ is connected, this constraint implies that the entire group can be collapsed into a single super-bus without altering the feasible set or the cut objective. This reduction is applied before the QUBO is constructed, decreasing both the number of assignment variables and the number of penalty terms in the Hamiltonian.

Let $\mathcal{K}_c \subseteq \{1, \ldots, K\}$ denote the indices of coherency groups whose induced subgraph is connected, verified by a breadth-first search on $G$ \cite{bundy1984breadth}. For each $k \in \mathcal{K}_c$, all buses in $V_G^{(k)}$ are replaced by a single super-bus $v^*_k$. The reduced bus set is
\begin{equation}
V' = \Bigl(V \setminus \bigcup_{k \in \mathcal{K}_c} V_G^{(k)}\Bigr)
     \cup \bigl\{v^*_k : k \in \mathcal{K}_c\bigr\},
\label{eq:reduced_V}
\end{equation}
with $N' = |V'| = N - \sum_{k \in \mathcal{K}_c}\bigl(|V_G^{(k)}| - 1\bigr) < N$. For every bus $v \notin V_G^{(k)}$, the aggregated edge weight from $v^*_k$ to $v$ is
\begin{equation}
w'(v^*_k,\, v) = \sum_{u \in V_G^{(k)}} w(u, v),
\label{eq:agg_weight}
\end{equation}
where $w(u,v)=0$ whenever $(u,v)\notin E$, yielding the reduced graph $G'=(V',E',w')$.

\begin{lemma}[Cut preservation]
\label{lem:cut_preserve}
The cut value of any partition on $G'$ equals the cut value of the corresponding expanded partition on $G$.
\end{lemma}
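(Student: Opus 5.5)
We make the correspondence between partitions precise and then compare the two cut sums edge by edge. A partition $\{V'_1,\dots,V'_K\}$ of $V'$ expands to a partition $\{V_1,\dots,V_K\}$ of $V$ as follows. Each super-bus $v^*_k \in V'_\ell$ is replaced by all buses of $V_G^{(k)}$, which then lie in $V_\ell$. Every other bus of $V'$ is an original bus of $V$ and keeps its island label. We denote this expansion map by $\pi$, sending each $u\in V$ to its representative in $V'$: $\pi(u)=v^*_k$ if $u\in V_G^{(k)}$ with $k\in\mathcal{K}_c$, and $\pi(u)=u$ otherwise. The island of $u$ in the expanded partition is then the island of $\pi(u)$ in the reduced one. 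We write the cut value on either graph as $\sum_{\{a,b\}} w(a,b)\,\mathbf{1}[a,b \text{ in different islands}]$, where $w$ is the symmetrized edge weight $(|p_{ab}|+|p_{ba}|)/2$ from \eqref{eq:islanding_objective}. The paper's weight $w$ in \eqref{eq:agg_weight} is taken to be exactly this quantity.

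The core step is to split the original edge set $E$ into three classes according to $\pi$.
\begin{itemize}
\item[(i)] Edges with both endpoints inside the same collapsed group $V_G^{(k)}$. Here $\pi(u)=\pi(v)=v^*_k$, so both endpoints land in the same island and the edge is never cut. It also has no counterpart in $E'$, since self-loops are dropped. It therefore contributes $0$ on both sides.
\item[(ii)] Edges with $\pi(u)=u$ and $\pi(v)=v$. These appear unchanged in $G'$ with the same weight. They are cut in $G$ if and only if they are cut in $G'$.
\item[(iii)] Edges with $\pi(u)\neq\pi(v)$ where at least one endpoint lies in a collapsed group.
\end{itemize}
For class (iii) we regroup the sum over original edges by the reduced pair $\{\pi(u),\pi(v)\}$. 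Every original edge mapping to a given reduced pair has its endpoints in the islands of $\pi(u)$ and $\pi(v)$. Hence either all of them are cut or none are. Summing their weights gives exactly $w'(\pi(u),\pi(v))$ by \eqref{eq:agg_weight}. Adding the three classes yields equality of the two cut values.

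The main obstacle is that \eqref{eq:agg_weight} only defines $w'$ between a super-bus and an ordinary bus $v\notin V_G^{(k)}$. We must also handle edges joining two distinct collapsed groups $V_G^{(k)}$ and $V_G^{(\ell)}$. We will interpret the definition consistently, taking $v$ to range over $V'$ and replacing any $v$ that is itself a super-bus by the sum over its group. This gives $w'(v^*_k,v^*_\ell)=\sum_{u\in V_G^{(k)}}\sum_{u'\in V_G^{(\ell)}} w(u,u')$. With this reading, the regrouping in class (iii) goes through verbatim.

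We should also note explicitly that the correspondence runs in both directions. Every partition of $V$ satisfying the coherency constraint~\eqref{con:coherency} is the expansion of a unique partition of $V'$. So the lemma holds for all partitions of $G'$, as stated, and in particular over the whole coherency-feasible set.
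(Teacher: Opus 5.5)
Your proof is correct and follows the paper's argument. The paper likewise pulls the reduced partition back through the collapsing surjection $\varphi$ (your $\pi$), notes that edges with $\varphi(u)=\varphi(v)$ are never cut, and regroups the remaining edges by their image pair in $E'$ to recover $w'$ via~(\ref{eq:agg_weight}); your classes (ii) and (iii) are simply that single regrouping step split in two. Your explicit double-sum reading of $w'(v^*_k,v^*_\ell)$ between two super-buses is what the paper's proof uses implicitly when it identifies $\sum_{u\in\varphi^{-1}(a),\,v\in\varphi^{-1}(b)} w(u,v)$ with $w'(a,b)$ for every $(a,b)\in E'$, so it is a clarification rather than a departure.
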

\begin{proof}
Define the surjection $\varphi\colon V\to V'$ by $\varphi(u)=v^*_k$ if $u\in V_G^{(k)}$ for some $k\in\mathcal{K}_c$, and $\varphi(u)=u$ otherwise. Given a partition $\pi\colon V'\to\{1,\ldots,K\}$ on the reduced graph, define the expanded assignment $\tilde{\pi}\colon V\to\{1,\ldots,K\}$ by $\tilde{\pi}(u)=\pi(\varphi(u))$. The cut value on $G$ under $\tilde{\pi}$ is
\begin{align}
C(G,\tilde{\pi})
&= \sum_{(u,v)\in E} w(u,v)\,\mathbf{1}\!\left[\tilde{\pi}(u)\neq\tilde{\pi}(v)\right] \notag\\
&= \sum_{(u,v)\in E} w(u,v)\,\mathbf{1}\!\left[\pi(\varphi(u))\neq\pi(\varphi(v))\right].
\label{eq:cut_expanded}
\end{align}
For any edge $(u,v)\in E$ with $\varphi(u)=\varphi(v)$, both endpoints map to the same super-bus, so $\pi(\varphi(u))=\pi(\varphi(v))$ and the indicator is zero. Hence only edges with $\varphi(u)\neq\varphi(v)$ contribute. Grouping the remaining terms by their image pair $(a,b)=(\varphi(u),\varphi(v))\in E'$ and applying~(\ref{eq:agg_weight}) gives
\begin{align}
C(G,\tilde{\pi})
&= \sum_{(a,b)\in E'}\mathbf{1}\!\left[\pi(a)\neq\pi(b)\right]
   \!\!\sum_{\substack{u\in\varphi^{-1}(a),\,v\in\varphi^{-1}(b)\\(u,v)\in E}}\!\! w(u,v) \notag\\
&= \sum_{(a,b)\in E'} w'(a,b)\,\mathbf{1}\!\left[\pi(a)\neq\pi(b)\right]
 = C(G',\pi). \notag
\end{align}
\end{proof}

By Lemma~\ref{lem:cut_preserve}, the islanding problem on $G'$ shares the same optimal cut value as on $G$. The number of assignment qubits saved by the reduction is
\begin{equation}
\Delta_Q = K\sum_{k \in \mathcal{K}_c}\bigl(|V_G^{(k)}| - 1\bigr),
\label{eq:qubit_saving}
\end{equation}
since a connected coherent group with $|V_G^{(k)}|$ generator buses is represented by one super-bus, eliminating $|V_G^{(k)}|-1$ bus-assignment variables for each island index. This count is independent of the minimum generator and load requirements $N_{G,\min}$ and $N_{L,\min}$ in~(\ref{con:gen})--(\ref{con:load}); those parameters only define the lower-bound constraints. When the reduced graph is used with general $N_{G,\min}$, each super-bus $v^*_k$ retains the generator multiplicity $|V_G^{(k)}|$ for evaluating generator-count feasibility. In addition, buses within each merged group are co-islanded by construction, so the intra-group terms of $H_{\mathrm{coh}}$ in~(\ref{eq:qubo_coh}) vanish identically on $G'$, reducing the Hamiltonian complexity without weakening any feasibility guarantee. After QAOA measurement on $G'$, the solution is expanded back to $G$ by assigning every bus in $V_G^{(k)}$ to the island of its super-bus $v^*_k$.

\subsection{QUBO Conversion of the Objective Function}

To embed the formulation into QAOA, we first rewrite it as a QUBO problem. For each bus $i$, a binary variable $y_{i,k}$ is defined to denote the assignment of that bus to island $k$, where $y_{i,k}=1$ denotes that bus $i$ belongs to island $k$; otherwise, $y_{i,k}=0$. Under this encoding, and assuming each bus is assigned to exactly one island, a line $e_{ij}$ is internal to an island if buses $i$ and $j$ are assigned to the same island, namely when $\sum_{k=1}^{K} y_{i,k}y_{j,k}=1$, and it is cut otherwise. Therefore, the cut objective can be rewritten in binary form as
\begin{equation}
H_{\mathrm{cut}}
=\sum_{e_{ij}\in E} w_{ij}\left(1-\sum_{k=1}^{K} y_{i,k}y_{j,k}\right),\qquad
w_{ij}:=\frac{|p_{ij}|+|p_{ji}|}{2}.
\end{equation}
The one-hot assignment condition required by (\ref{con:onehot}) is then enforced through
\begin{equation}
\sum_{k=1}^{K} y_{i,k}=1,\qquad \forall i\in V.
\end{equation}
To embed the full problem onto QAOA circuits, the remaining inequality constraints are incorporated through a penalty-based reformulation: each constraint is converted into a nonnegative penalty term that is zero when satisfied and strictly positive when violated, and the original cost is augmented by weighted penalties so that any infeasible assignment incurs an energy increase large enough to be dominated in the minimization.

For (\ref{con:onehot})-(\ref{con:coherency}), the QUBO conversion of these constraints can be denoted as:

\begin{equation}
H_h
=\lambda_h\sum_{i\in V}\left(\sum_{k=1}^{K} y_{i,k}-1\right)^2 .
\label{eq:qubo_onehot}
\end{equation}

\begin{equation}
H_s
=\lambda_{M}\sum_{k=1}^{K}\left(\sum_{i\in V} y_{i,k}-N_{\min}-s^{(M)}_{k}\right)^2 .
\label{eq:qubo_minsize}
\end{equation}

\begin{equation}
H_{\mathrm{gen}}
=\lambda_{G}\sum_{k=1}^{K}\left(\sum_{g\in V_G} y_{g,k}-N_{G,\min}-s^{(G)}_{k}\right)^2 .
\label{eq:qubo_gen}
\end{equation}

\begin{equation}
H_{\mathrm{load}}
=\lambda_{L}\sum_{k=1}^{K}\left(\sum_{\ell\in V_L} y_{\ell,k}-N_{L,\min}-s^{(L)}_{k}\right)^2 .
\label{eq:qubo_load}
\end{equation}

\begin{align}
H_{\mathrm{coh}}
&=\lambda_{\mathrm{coh}}\sum_{c=1}^{K}\sum_{k=1}^{K}\sum_{\substack{g,g'\in V_G^{(c)}\\ g<g'}}\left(y_{g,k}-y_{g',k}\right)^2 \notag\\
&\quad+\lambda_{\mathrm{coh}}\sum_{k=1}^{K}\sum_{\substack{c,c'\in\{1,\dots,K\}\\ c<c'}}\ \sum_{g\in V_G^{(c)}}\sum_{g'\in V_G^{(c')}} y_{g,k}y_{g',k}.
\label{eq:qubo_coh}
\end{align}

\noindent where the nonnegative slack variables are encoded by auxiliary binary variables as
\begin{equation}
\begin{split}
s^{(M)}_{k}=\sum_{b=0}^{B_M-1}2^{b}\,u^{(M)}_{k,b},\qquad u^{(M)}_{k,b}\in\{0,1\},\\
B_M=\left\lceil \log_2(N-N_{\min}+1)\right\rceil,
\end{split}
\label{eq:slack_M}
\end{equation}
\begin{equation}
\begin{split}
s^{(G)}_{k}=\sum_{b=0}^{B_G-1}2^{b}\,u^{(G)}_{k,b},\qquad u^{(G)}_{k,b}\in\{0,1\},\\
B_G=\left\lceil \log_2(|V_G|-N_{G,\min}+1)\right\rceil,
\end{split}
\label{eq:slack_G}
\end{equation}
\begin{equation}
\begin{split}
s^{(L)}_{k}=\sum_{b=0}^{B_L-1}2^{b}\,u^{(L)}_{k,b},\qquad u^{(L)}_{k,b}\in\{0,1\},\\
B_L=\left\lceil \log_2(|V_L|-N_{L,\min}+1)\right\rceil.
\end{split}
\label{eq:slack_L}
\end{equation}
These slack variables convert the inequality constraints in (\ref{con:min_size})--(\ref{con:load}) into equality constraints amenable to squared QUBO penalties. The resulting auxiliary binaries $\{u^{(\cdot)}_{k,b}\}$ introduce additional qubits beyond the $NK$ assignment qubits $\{y_{i,k}\}$: $K B_M$ qubits for (\ref{con:min_size}), $K B_G$ qubits for (\ref{con:gen}), and $K B_L$ qubits for (\ref{con:load}). The exact auxiliary-qubit counts used in the experiments are reported in Table~\ref{tab:exp_setup}.

For (\ref{con:connect}), a Depth-First Search (DFS) strategy is used to determine whether the induced subgraphs $G[V_k]$ are connected (i.e., $\mathcal{C}(G[V_k])=1$) \cite{tarjan1972depth}. However, DFS is an algorithmic graph procedure and does not admit a QUBO representation. Therefore, during the QUBO construction for QAOA, we incorporate a quadratic connectivity surrogate that encourages each island to form a single connected component. Specifically, the surrogate penalizes each bus in island $k$ whose neighbors are all assigned to other islands, discouraging isolated nodes and promoting local connectivity. Since exact connectivity admits no compact polynomial binary representation, this quadratic penalty is directly QUBO-compatible and guides the search toward connected partitions without increasing the problem size. After the QAOA measurement, DFS is applied in post-processing to explicitly and exactly verify $\mathcal{C}(G[V_k])=1$ for each sampled solution. The connectivity surrogates can be denoted as:
\begin{equation}
\label{eq:H_iso}
H_c
= \mu \sum_{k=1}^{K}\sum_{i\in V}
\left(
y_{i,k}-\sum_{j\in\mathcal{N}(i)} y_{i,k}\,y_{j,k}
\right),
\end{equation}
\noindent where $\mathcal{N}(i)$ denotes the neighbor set of node $i$.

Thus, the overall problem Hamiltonian for QAOA is constructed by augmenting the objective Hamiltonian with quadratic penalty terms for the physical and operational constraints. Each penalty term is scaled by a positive weight to ensure that any violation increases the total energy, so that low-energy states correspond to feasible islanding solutions. Accordingly, the problem Hamiltonian is given by:
\begin{equation}
\begin{split}
H_{\mathrm{prob}}
={}&H_{\mathrm{cut}}
+\lambda_h\bar{H}_h
+\lambda_{M}\bar{H}_s
+\lambda_{G}\bar{H}_{\mathrm{gen}}\\
&+\lambda_{L}\bar{H}_{\mathrm{load}}
+\lambda_{\mathrm{coh}}\bar{H}_{\mathrm{coh}}
+\mu\,\bar{H}_c.
\end{split}
\label{eq:H_prob}
\end{equation}
where $\lambda_h,\lambda_M,\lambda_G,\lambda_L,\lambda_{\mathrm{coh}},\mu>0$ are penalty weights and $\bar{H}_{(\cdot)}$ denotes the corresponding unweighted QUBO penalty expressions in (\ref{eq:qubo_onehot})--(\ref{eq:qubo_coh}) and (\ref{eq:H_iso}).

\subsection{QAOA Implementation}
Given the problem Hamiltonian $H_{\mathrm{prob}}$ in (\ref{eq:H_prob}), the next step is to construct a QAOA circuit that approximately minimizes its expected energy. To implement $H_{\mathrm{prob}}$ on quantum hardware, we first rewrite the binary variables into Pauli-$Z$ operators via the standard mapping
\begin{equation}
\label{eq:bin_qo_ising}
z_{i,k}=1-2y_{i,k},\qquad y_{i,k}\in\{0,1\},\ z_{i,k}\in\{\pm1\},
\end{equation}
which transforms the QUBO energy into an Ising-form cost Hamiltonian. 

The standard transverse-field initialization Hamiltonian is adopted to promote exploration over the computational basis:
\begin{equation}
\label{eq:H_B}
H_B=\sum_{q=1}^{n_q}\sigma_q^{x},
\end{equation}
and initialize the circuit in the uniform superposition state
\begin{equation}
\label{eq:psi0}
|\psi_0\rangle=\bigotimes_{q=1}^{n_q}\frac{|0\rangle+|1\rangle}{\sqrt{2}}.
\end{equation}
With a $p$-layer QAOA, the variational state is prepared as
\begin{equation}
\label{eq:qaoa_state}
|\psi(\boldsymbol{\gamma},\boldsymbol{\beta})\rangle
=\prod_{\ell=1}^{p} e^{-i\beta_\ell H_B}\,e^{-i\gamma_\ell H_{\mathrm{prob}}}\,|\psi_0\rangle,
\end{equation}
where $\boldsymbol{\gamma}=(\gamma_1,\dots,\gamma_p)$ and $\boldsymbol{\beta}=(\beta_1,\dots,\beta_p)$ are variational parameters updated by a classical outer loop. For each candidate $(\boldsymbol{\gamma},\boldsymbol{\beta})$, the QAOA circuit is executed and measured to obtain a batch of bitstrings $z\sim P_{\boldsymbol{\gamma},\boldsymbol{\beta}}$, which are decoded into partitions $\{V_k(z)\}$. Accordingly, instead of filtering out DFS-infeasible samples, we incorporate the DFS connectivity outcome directly into the outer-loop objective by augmenting the measured energy with a feasibility penalty. Specifically, for a measured bitstring $z$ (decoded to $\{V_k(z)\}$), define the DFS indicator
\begin{equation}
\label{eq:dfs_indicator}
\chi_C(z)
:=\prod_{k=1}^{K}\mathbb{I}\!\left\{\mathcal{C}\!\left(G[V_k(z)]\right)=1\right\}\in\{0,1\},
\end{equation}
\noindent where $\mathbb{I}\{\cdot\}$ is the indicator function and $\chi_C(z)=1$ iff all induced subgraphs $G[V_k(z)]$ are connected. The corresponding DFS-modified energy is
\begin{equation}
\label{eq:dfs_modified_energy}
\widetilde{H}_{\mathrm{prob}}(z)
:=H_{\mathrm{prob}}(z)+\lambda_C\big(1-\chi_C(z)\big),
\end{equation}
where $\lambda_C>0$ is a penalty weight. The classical outer loop then minimizes the DFS-modified expected energy
\begin{equation}
\label{eq:dfs_modified_objective}
\min_{\boldsymbol{\gamma},\boldsymbol{\beta}}\;
\widetilde{\mathbb{E}}(\boldsymbol{\gamma},\boldsymbol{\beta})
:=\mathbb{E}_{z\sim P_{\boldsymbol{\gamma},\boldsymbol{\beta}}}
\!\left[\widetilde{H}_{\mathrm{prob}}(z)\right],
\end{equation}
which can be estimated from $S$ circuit executions as
\begin{equation}
\label{eq:dfs_modified_estimator}
\widehat{\widetilde{\mathbb{E}}}(\boldsymbol{\gamma},\boldsymbol{\beta})
=\frac{1}{S}\sum_{s=1}^{S}\widetilde{H}_{\mathrm{prob}}\!\left(z^{(s)}\right),
\qquad z^{(s)}\sim P_{\boldsymbol{\gamma},\boldsymbol{\beta}},
\end{equation}
where $P_{\boldsymbol{\gamma},\boldsymbol{\beta}}$ denotes the QAOA measurement distribution induced by the variational state. In addition, this work utilizes COBYLA as the classical optimizer for QAOA circuit parameter updating. However, the output of the variational loop is still a distribution over measured bitstrings rather than a guaranteed feasible islanding decision. Therefore, after the QAOA sampling stage, an additional classical post-processing layer is introduced to decode, evaluate, and repair promising samples before the final islanding solution is selected.

\section{Post-Processing Framework for Feasible Islanding}
\label{sec:postprocessing}

Following the QAOA execution and measurement stage in Fig.~\ref{fig:overall_diagram}~(c), measured bitstrings are passed to the classical post-processing operation shown in Fig.~\ref{fig:overall_diagram}~(d), and the selected result is decoded into the islanding solution shown in Fig.~\ref{fig:overall_diagram}~(e). Because the islanding constraints sharply restrict the feasible region, many low-energy samples remain infeasible. Post-processing is therefore used to repair promising samples and improve the final solution quality and feasibility.

Related work \cite{shirai2024postprocessing} studied QUBO-based greedy post-processing that maps quantum outputs to constraint-satisfying solutions under monotone energy descent. For islanding, this suggests starting with a QUBO-only repair rule, since the cut objective and the algebraic assignment, size, generator, and load constraints are already expressed in QUBO form. Exact graph connectivity, however, is verified by DFS rather than by a native QUBO term. This naturally motivates a staged repair strategy that first handles the QUBO constraints and then connectivity. Because such a staged repair can improve DFS feasibility while increasing the QUBO penalty, the later methods introduce stricter descent controls, modified repair targets, or relaxed admissible moves. The five methods proposed below develop that basic idea with different feasibility guarantees.

\subsection{Post-Processing Mathematical Preliminaries}
\label{sec:post_common}

All the methods operate on the same mathematical objects and share the same elementary operations. A measured bitstring $z \in \{0,1\}^{NK}$, indexing all binary bus-to-island assignment variables ($N$ buses, $K$ islands), is decoded into an island partition $\{V_k(z)\}_{k=1}^{K}$. Each method evaluates the QUBO energy and a connectivity-violation objective, and applies single-variable flips as its local move. For a decoded configuration $z$, the QUBO energy is
\begin{equation}
\begin{aligned}
Q(z) ={}& H_{\mathrm{cut}}(z)
+\lambda_h\bar{H}_h(z) \\
&+\lambda_M\bar{H}_s(z)
+\lambda_G\bar{H}_{\mathrm{gen}}(z) \\
&+\lambda_L\bar{H}_{\mathrm{load}}(z)
+\lambda_{\mathrm{coh}}\bar{H}_{\mathrm{coh}}(z).
\end{aligned}
\label{eq:qubo_only_post}
\end{equation}
which combines the cut objective with the algebraic constraint penalties. The connectivity-violation objective is
\begin{equation}
C(z) := \sum_{k=1}^{K}\bigl(\omega(G[V_k(z)]) - 1\bigr),
\label{eq:qdfs_post}
\end{equation}
where $\omega(G[V_k(z)])$ denotes the number of connected components in island $k$. Thus, $C(z)=0$ if and only if every island is internally connected.

Each algebraic constraint in~(\ref{eq:qubo_only_post}) is encoded as a nonnegative penalty. For constraint $m$ with coefficient vector $a^m_i \in \mathbb{Z}$, feasibility bounds $[b^m_{\min}, b^m_{\max}]$, and constraint sum $S^m(z) = \sum_{i \in \mathcal{V}^m} a^m_i\, z_i$, the general hinge-loss form is
\begin{equation}
  H^m(z)
  \;=\;
  \max\!\Bigl(0,\; S^m(z) - b^m_{\max},\; b^m_{\min} - S^m(z)\Bigr),
  \label{eq:hinge_penalty}
\end{equation}
which equals zero when $S^m(z) \in [b^m_{\min}, b^m_{\max}]$ and is strictly positive otherwise. In the QUBO of Section~\ref{quantum}, inequality constraints are converted to equalities via binary slack variables~(\ref{eq:slack_M})--(\ref{eq:slack_L}), so each penalty takes the equivalent squared form $H^m(z) = (S^m(z) - b^m)^2$ with $b^m = b^m_{\min} = b^m_{\max}$ after substitution. The QUBO energy therefore decomposes as $Q(z) = H_{\mathrm{cut}}(z) + \sum_{m} \lambda_m H^m(z)$.

For variable index $j$, the flip $z \oplus e_j$ denotes the configuration with $z_j$ toggled and all other entries unchanged. For any energy function $E$, the per-flip change is
\begin{equation}
  \Delta^E_j(z) \;:=\; E(z \oplus e_j) - E(z).
  \label{eq:delta_generic}
\end{equation}
When $E = Q$ we write $\Delta_j(z)$ for brevity; when $E = C$ we write $\Delta^C_j(z)$.

All methods that repair QUBO constraints share the same Stage~1 subroutine: starting from the current configuration $z^{(t)}$, select
\begin{equation}
  j^*(t) \;=\; \operatorname*{arg\,min}_{j}\;\Delta_j\!\left(z^{(t)}\right),
  \label{eq:greedy_select}
\end{equation}
and update
\begin{equation}
  z^{(t+1)} \;=\;
  \begin{cases}
    z^{(t)} \oplus e_{j^*(t)} & \text{if } \Delta_{j^*(t)}\!\left(z^{(t)}\right) < 0, \\
    z^{(t)} & \text{(terminate).}
  \end{cases}
  \label{eq:greedy_flip}
\end{equation}
Because each flip strictly decreases $Q$ on a finite state space, Stage~1 is cycle-free and terminates in at most $2^N - 1$ steps. Under condition~(A3) stated in Appendix~\ref{app:postprocessing}, every local minimum of $Q$ is QUBO-feasible.

Methods that include a connectivity-repair stage (Methods~2, 3, and~4) operate on the set of flips that strictly reduce $C$,
\begin{equation}
  \mathcal{J}(z)
  \;=\;
  \bigl\{\, j : \Delta^C_j(z) < 0 \bigr\}.
  \label{eq:jconn}
\end{equation}
The methods differ in what additional condition, if any, is imposed on the $Q$ change $\Delta_{j}(z)$ for candidates in $\mathcal{J}(z)$.

\subsection{M.~1: QUBO-only descent}
\label{sec:method1}

M.~1 is exactly the Stage~1 greedy descent subroutine from Section~\ref{sec:post_common} applied to $Q$, with no subsequent repair stage. It is the direct islanding adaptation of the post-measurement greedy repair in \cite{shirai2024postprocessing}. As shown in Algorithm~\ref{alg:m1}, the algorithm runs~(\ref{eq:greedy_select})--(\ref{eq:greedy_flip}) until no flip decreases $Q$, then performs a single DFS pass to evaluate $C(z^*)$. Termination follows directly from Section~\ref{sec:post_common}; QUBO-feasibility of the output holds under condition~(A3) stated in Appendix~\ref{app:postprocessing}. Since $Q$ contains no connectivity term, however, $C(z^*)$ may be nonzero and connectivity of the resulting partition is not guaranteed.

\begin{algorithm}[!t]
\caption{M.1: QUBO-Only Descent}
\label{alg:m1}
\begin{algorithmic}[1]
\State \textbf{Data:} bitstring $z \in \{0,1\}^{NK}$, QUBO energy $Q$
\State \textbf{Result:} $z^*$ at a local minimum of $Q$; $z^* \in \mathcal{F}_Q$ under (A3)
\While{$\exists\, j$ such that $\Delta_j(z) < 0$}
    \State $j^* \leftarrow \operatorname*{arg\,min}_{j}\;\Delta_j(z)$
    \State $z \leftarrow z \oplus e_{j^*}$
\EndWhile
\State $z^* \leftarrow z$
\State Evaluate $C(z^*)$ via DFS \Comment{connectivity not guaranteed}
\State \Return $z^*$
\end{algorithmic}
\end{algorithm}

\subsection{M.~2: Unconstrained two-stage repair}
\label{sec:method2}

M.~2 appends a connectivity-repair stage to M.~1. After the Stage~1 descent from Section~\ref{sec:post_common} terminates, Stage~2 accepts any reassignment that reduces $C$, regardless of the effect on $Q$.

After Stage~1 terminates, if $C(z) > 0$ then Stage~2 operates on $\mathcal{J}(z)$ from~(\ref{eq:jconn}) and picks
\begin{equation}
  j^* \;=\; \operatorname*{arg\,min}_{j \in \mathcal{J}(z)}\; \Delta_j(z),
  \label{eq:m2_stage2_select}
\end{equation}
where $\Delta_j(z)$ is the per-flip $Q$ change from~(\ref{eq:delta_generic}). The flip $z \leftarrow z \oplus e_{j^*}$ is executed unconditionally, without checking the sign of $\Delta_{j^*}$, and Stage~1 is restarted on the updated configuration. Algorithm~\ref{alg:m2} shows this alternation, which repeats for at most $K$ outer iterations.

\begin{remark}
The Stage~1 descent still strictly decreases $Q$. Stage~2, however, imposes no descent condition: a connectivity-improving flip can simultaneously increase the QUBO penalty terms, so $Q$ may rise after a Stage~2 move. The two stages therefore optimize different targets in sequence, and no globally consistent energy function is monotonically decreasing throughout the run. As a consequence, the same configuration can be revisited across outer iterations, and an outer bound $K$ is required for termination to be guaranteed.
\end{remark}

\begin{algorithm}[!t]
\caption{M.2: Unconstrained Two-Stage Repair}
\label{alg:m2}
\begin{algorithmic}[1]
\State \textbf{Data:} bitstring $z \in \{0,1\}^{NK}$, energies $Q$ and $C$, outer bound $K$
\State \textbf{Result:} $z^*$ with $Q$ locally minimal; $C(z^*)=0$ if found within $K$ iterations
\For{$k = 1$ \textbf{to} $K$}
    \While{$\exists\, j$ such that $\Delta_j(z) < 0$} \Comment{Stage 1: QUBO descent}
        \State $j^* \leftarrow \operatorname*{arg\,min}_{j}\;\Delta_j(z)$
        \State $z \leftarrow z \oplus e_{j^*}$
    \EndWhile
    \If{$C(z) = 0$}
        \State \Return $z$ \Comment{connectivity satisfied}
    \EndIf
    \State $j^* \leftarrow \operatorname*{arg\,min}_{j \in \mathcal{J}(z)}\;\Delta_j(z)$ \Comment{Stage 2: unconstrained repair}
    \State $z \leftarrow z \oplus e_{j^*}$ \Comment{flip regardless of sign of $\Delta_{j^*}$}
\EndFor
\State \Return $z$
\end{algorithmic}
\end{algorithm}

\subsection{M.~3: Constrained two-stage descent}
\label{sec:method3}

M.~3 retains the two-stage structure of M.~2 and strengthens Stage~2 by admitting a connectivity-repair flip only if it simultaneously strictly decreases $Q$.

After the Stage~1 descent from Section~\ref{sec:post_common} terminates with $C(z) > 0$, Stage~2 forms $\mathcal{J}(z)$ as in~(\ref{eq:jconn}) and selects the best candidate,
\begin{equation}
  j^* \;=\; \operatorname*{arg\,min}_{j \in \mathcal{J}(z)}\; \Delta_j(z).
\end{equation}
The flip is then conditionally applied:
\begin{equation}
  z \;\leftarrow\;
  \begin{cases}
    z \oplus e_{j^*} & \text{if } \Delta_{j^*}(z) < 0 \text{ (restart Stage~1)}, \\
    z & \text{if } \Delta_{j^*}(z) \geq 0 \text{ (terminate)}.
  \end{cases}
  \label{eq:m3_stage2}
\end{equation}
Relative to M.~2, the only modification is the sign condition on $\Delta_{j^*}$ in~(\ref{eq:m3_stage2}), which ensures that $Q$ decreases at every executed flip and thereby eliminates the need for an outer iteration bound $K$. The resulting constrained two-stage descent procedure is shown in Algorithm~\ref{alg:m3}.

\begin{remark}
Because both stages flip only when $\Delta_{j^*}(z) < 0$, every executed move strictly decreases $Q$. Strict monotone decrease on a finite state space excludes revisits and bounds the total flip count without any outer iteration limit. Under condition~(A3) stated in Appendix~\ref{app:postprocessing}, Stage~1 local minima are QUBO-feasible. The failure case in~(\ref{eq:m3_stage2}) arises when every connectivity-improving flip violates a tight equality constraint, causing $\Delta_{j^*}(z) \geq 0$ for all $j^* \in \mathcal{J}$. Formal proofs of global strict descent, the no-cycle property, and the tight-constraint failure mode are given in Appendix~\ref{app:postprocessing}.
\end{remark}

\begin{algorithm}[!t]
\caption{M.3: Constrained Two-Stage Descent}
\label{alg:m3}
\begin{algorithmic}[1]
\State \textbf{Data:} bitstring $z \in \{0,1\}^{NK}$, energies $Q$ and $C$
\State \textbf{Result:} $z^* \in \mathcal{F}_Q$ under (A3); $C(z^*)=0$ if a descent-compatible repair exists
\While{\textbf{true}}
    \While{$\exists\, j$ such that $\Delta_j(z) < 0$} \Comment{Stage 1: QUBO descent}
        \State $j^* \leftarrow \operatorname*{arg\,min}_{j}\;\Delta_j(z)$
        \State $z \leftarrow z \oplus e_{j^*}$
    \EndWhile
    \If{$C(z) = 0$}
        \State \Return $z$
    \EndIf
    \State $j^* \leftarrow \operatorname*{arg\,min}_{j \in \mathcal{J}(z)}\;\Delta_j(z)$ \Comment{Stage 2: constrained repair}
    \If{$\Delta_{j^*}(z) < 0$}
        \State $z \leftarrow z \oplus e_{j^*}$ \Comment{restart Stage 1}
    \Else
        \State \Return $z$ \Comment{no descent-compatible repair exists}
    \EndIf
\EndWhile
\end{algorithmic}
\end{algorithm}

\subsection{M.~4: Penalty-relaxation descent}
\label{sec:method4}

M.~4 keeps the two-stage structure of M.~3 but introduces a per-iteration penalty schedule that gradually reduces the QUBO penalty weight across outer iterations, enlarging the set of admissible Stage~2 flips.

For outer iteration $r \geq 1$, fix a non-decreasing function $f\colon \mathbb{N}^+ \to [1,\infty)$ with $f(1)=1$ and define the iteration-$r$ penalty coefficient and energy,
\begin{equation}
  \lambda^{(r)} \;=\; \frac{\lambda}{f(r)},
  \qquad
  Q^{(r)}(z) \;=\; H_{\mathrm{cut}}(z) \;+\; \lambda^{(r)} \sum_{m} H^m(z),
  \label{eq:m4_relaxed}
\end{equation}
where $\lambda = \lambda^{(1)}$ is the base coefficient used by M.~1 through M.~3. A concrete implementation sets $f(r) = 1 + \rho(r-1)$ for a relaxation step $\rho > 0$, so that $\lambda^{(r)}$ decreases linearly and $Q^{(r)}$ changes across iterations.

Both stages in outer iteration $r$ use $Q^{(r)}$ and its per-flip change $\Delta^{(r)}_j(z) = Q^{(r)}(z \oplus e_j) - Q^{(r)}(z)$ consistently. Stage~1 applies the same greedy rule as~(\ref{eq:greedy_flip}) with $\Delta_j$ replaced by $\Delta^{(r)}_j$. Stage~2 then seeks $j^* \in \mathcal{J}(z)$ satisfying
\begin{equation}
  \Delta^{(r)}_{j^*}(z) \;<\; 0.
  \label{eq:m4_stage2_cond}
\end{equation}
If~(\ref{eq:m4_stage2_cond}) holds, the flip is executed and Stage~1 is restarted on the updated configuration at the same iteration $r$. If no such $j^*$ exists, $r$ is incremented, which reduces $\lambda^{(r+1)}$, and Stage~1 restarts. Algorithm~\ref{alg:m4} shows this penalty-relaxation procedure, which terminates on success (DFS satisfied) or when $r$ exceeds the outer bound $R_{\max}$.

\begin{remark}
A smaller $\lambda^{(r)}$ lowers the threshold in~(\ref{eq:m4_stage2_cond}), so a flip that repairs connectivity at the cost of a QUBO violation is more likely to yield $\Delta^{(r)}_{j^*} < 0$. However, the QUBO feasibility guarantee of Stage~1 requires $\lambda^{(r)} > \Delta_{\max} = \max_{z,j}|\Delta^{H_{\mathrm{cut}}}_j(z)|$. If $\lambda^{(r)}$ falls below this threshold, Stage~1 may terminate at a QUBO-infeasible configuration. The two requirements are therefore in fundamental tension, and no single $\lambda^{(r)}$ simultaneously ensures both. Within each fixed $r$ the descent on $Q^{(r)}$ is strict, and no configuration is revisited. Across iterations, cycles are possible when $f$ is constant, so an outer bound $R_{\max}$ is required for termination. Formal proofs of within-iteration descent, the cross-iteration cycle condition, and the QUBO-feasibility incompatibility are given in Appendix~\ref{app:postprocessing}.
\end{remark}

\begin{algorithm}[!t]
\caption{M.4: Penalty-Relaxation Descent}
\label{alg:m4}
\begin{algorithmic}[1]
\State \textbf{Data:} bitstring $z \in \{0,1\}^{NK}$, base penalty $\lambda$, relaxation schedule $f(r)$, bound $R_{\max}$
\State \textbf{Result:} $z^*$ with $C(z^*)=0$ if found within $R_{\max}$ iterations
\For{$r = 1$ \textbf{to} $R_{\max}$}
    \State $\lambda^{(r)} \leftarrow \lambda / f(r)$; \quad $Q^{(r)}(z) \leftarrow H_{\mathrm{cut}}(z) + \lambda^{(r)}\sum_m H^m(z)$
    \State $\mathit{repaired} \leftarrow \textbf{true}$
    \While{$\mathit{repaired}$}
        \While{$\exists\, j$ such that $\Delta^{(r)}_j(z) < 0$} \Comment{Stage 1: descent on $Q^{(r)}$}
            \State $j^* \leftarrow \operatorname*{arg\,min}_{j}\;\Delta^{(r)}_j(z)$
            \State $z \leftarrow z \oplus e_{j^*}$
        \EndWhile
        \If{$C(z) = 0$}
            \State \Return $z$
        \EndIf
        \If{$\exists\, j^* \in \mathcal{J}(z)$ such that $\Delta^{(r)}_{j^*}(z) < 0$} \Comment{Stage 2: relaxed repair}
            \State $j^* \leftarrow \operatorname*{arg\,min}_{j \in \mathcal{J}(z)}\;\Delta^{(r)}_j(z)$
            \State $z \leftarrow z \oplus e_{j^*}$ \Comment{restart Stage 1 at same $r$}
        \Else
            \State $\mathit{repaired} \leftarrow \textbf{false}$ \Comment{no admissible flip; increment $r$}
        \EndIf
    \EndWhile
\EndFor
\State \Return $z$
\end{algorithmic}
\end{algorithm}

\subsection{M.~5: Unified-energy descent}
\label{sec:method5}

M.~5 eliminates the two-stage structure entirely by encoding both QUBO feasibility and DFS connectivity into a single post-processing energy,
\begin{equation}
\mathcal{Q}(z)
\;:=\;
Q(z)
\;+\; \mu\, C(z),
\label{eq:unified_post}
\end{equation}
and applying the common greedy descent directly to $\mathcal{Q}$.

For the unified energy to guarantee full feasibility, $\mu$ must dominate all QUBO-side effects of any single flip. Concretely, the required condition is
\begin{equation}
  \mu
  \;>\;
  \max_{z,\,j}\bigl|\Delta_j(z)\bigr|,
  \label{eq:m5_hierarchy}
\end{equation}
where $\Delta_j(z)$ is the per-flip QUBO change defined in Section~\ref{sec:post_common}. Under~(\ref{eq:m5_hierarchy}), any flip that reduces $C$ produces a net decrease in $\mathcal{Q}$ regardless of its QUBO effect, so connectivity repair is always preferred over any QUBO trade-off.

The per-flip change in $\mathcal{Q}$ decomposes as
\begin{equation}
  \Delta^u_j(z)
  \;=\;
  \Delta_j(z)
  \;+\;
  \mu\,
  \Delta^C_j(z),
  \label{eq:delta_uni}
\end{equation}
using the QUBO and DFS per-flip changes already introduced in Section~\ref{sec:post_common}. M.~5 then iterates the single rule
\begin{equation}
  j^*(t) \;=\; \operatorname*{arg\,min}_{j}\;\Delta^u_j\!\left(z^{(t)}\right),
  \label{eq:m5_select}
\end{equation}
and updates the configuration by
\begin{equation}
  z^{(t+1)} =
  \begin{cases}
    z^{(t)} \oplus e_{j^*(t)} & \text{if } \Delta^u_{j^*(t)} < 0, \\
    z^{(t)} & \text{if } \Delta^u_{j^*(t)} \geq 0 \text{ (terminate)}.
  \end{cases}
  \label{eq:m5_flip}
\end{equation}
The QUBO and connectivity objectives are thus handled within a single unified descent loop, with no alternation between stages. The complete unified-energy descent procedure is shown in Algorithm~\ref{alg:m5}.

\begin{remark}
Strict descent on the fixed function $\mathcal{Q}$ makes the trajectory cycle-free and guarantees termination without an outer bound $K$. Under condition~(\ref{eq:m5_hierarchy}), conditions~(A4) and~(A5), and the flip-existence condition of Appendix~\ref{app:postprocessing}, every local minimum satisfies $C(z^*)=0$ and all algebraic constraints, so $z^* \in \mathcal{F}_Q \cap \mathcal{F}_C$; M.~5 thus provides the strongest feasibility guarantee of the five methods, which directly resolves the failure case of M.~3, where the two-stage separation can leave every connectivity-improving flip QUBO-increasing and force Stage~2 to terminate without a feasible repair. The practical cost is that condition~(\ref{eq:m5_hierarchy}) requires a large $\mu$, which biases every descent step toward connectivity repair regardless of QUBO state. Since QAOA samples tend to be near QUBO-feasible, M.~3 can exploit this: Stage~1 polishes $Q$ with few flips and Stage~2 then performs only a small targeted repair, whereas M.~5 blends both objectives from the outset and may terminate with worse $Q$. Formal proofs are given in Appendix~\ref{app:postprocessing}.
\end{remark}

\begin{algorithm}[!t]
\caption{M.5: Unified-Energy Descent}
\label{alg:m5}
\begin{algorithmic}[1]
\State \textbf{Data:} bitstring $z \in \{0,1\}^{NK}$, energies $Q$ and $C$, weight $\mu > \max_{z,j}|\Delta_j(z)|$
\State \textbf{Result:} $z^* \in \mathcal{F}_Q \cap \mathcal{F}_C$ under (A4), (A5)
\While{$\exists\, j$ such that $\Delta^u_j(z) < 0$} \Comment{single descent on $\mathcal{Q} = Q + \mu C$}
    \State $j^* \leftarrow \operatorname*{arg\,min}_{j}\;\Delta^u_j(z)$
    \Statex \hspace{\algorithmicindent} where $\Delta^u_j(z) = \Delta_j(z) + \mu\,\Delta^C_j(z)$
    \State $z \leftarrow z \oplus e_{j^*}$
\EndWhile
\State \Return $z$ \Comment{$z \in \mathcal{F}_Q \cap \mathcal{F}_C$ under stated conditions}
\end{algorithmic}
\end{algorithm}

\subsection{Method Summary and Interpretation}

M.~1 and M.~2 are natural baselines derived from the QUBO formulation, while M.~3 through M.~5 are principled developments that add formal descent and feasibility guarantees.
\begin{itemize}
    \item M.~1 and M.~2 establish QUBO-only and staged-repair baselines, with no guarantee of $z^*\in\mathcal{F}_Q\cap\mathcal{F}_C$.
    \item M.~3 adds a descent filter to Stage~2 of M.~2, achieving strict global monotonicity on $Q$.
    \item M.~4 relaxes the penalty weight $\lambda^{(r)}$ across iterations to recover Stage~2 flips blocked by M.~3.
    \item M.~5 replaces the staged structure with a single descent on $\mathcal{Q}=Q+\mu C$, and under the regularity conditions stated in Appendix~\ref{app:postprocessing} is the only formulation that guarantees $z^*\in\mathcal{F}_Q\cap\mathcal{F}_C$.
\end{itemize}

Table~\ref{tab:post_methods} summarizes the key properties. Formal proofs are given in Appendix~\ref{app:postprocessing}.

\begin{table}[!t]
\centering
\caption{Summary of the proposed post-processing methods.}
\label{tab:post_methods}
\begin{threeparttable}
\footnotesize
\resizebox{\columnwidth}{!}{%
\begin{tabular}{cllllc}
\toprule
Method & Objective & Descent & Output at $z^*$ & Terminates \\
\midrule
1 & $Q$ & Strict & Conditional $\mathcal{F}_Q$ & Always \\
2 & $Q \to C$ (staged) & Stage~1 & --- & With cutoff \\
3 & $Q$ (two-stage) & Strict & Conditional $\mathcal{F}_Q$ & Always \\
4 & $Q^{(r)}$,\; $\lambda^{(r)}\!\downarrow$ & Per-$r$ & --- & With cutoff \\
5 & $\mathcal{Q}=Q+\mu C$ & Strict & Conditional $\mathcal{F}_Q\cap\mathcal{F}_C$ & Always \\
\bottomrule
\end{tabular}
}
\end{threeparttable}
\end{table}

\section{Case Study}
\label{sec:numerical}

\subsection{Experimental Setup and Test Systems}

The numerical study is conducted on six standard IEEE benchmark systems: 9-bus, 14-bus, 24-bus,
30-bus, 39-bus, and 57-bus. These cases span different network sizes and islanding structures, including both
two-island and three-island settings, and therefore provide a representative test bed for evaluating the proposed
QAOA-based islanding framework under increasing combinatorial complexity. For each case, the coherent generator
groups determine the target number of islands~$K$, and the resulting QAOA samples are assessed both before and after post-processing. To benchmark solution quality,
the post-processed quantum results are compared against vanilla QAOA sampling and against a classical reference
solver Gurobi. The evaluation focuses
on cut quality, quantum runtime, feasibility behavior, and physical operability of the resulting islands.
For all benchmark cases, the minimum island size is set to $N_{\min}=2$, and the generator and load count requirements in~(\ref{con:gen})--(\ref{con:load}) are set to $N_{G,\min}=1$ and $N_{L,\min}=1$, respectively.

The experiments are implemented within the IBM Quantum stack using a real IBM
Quantum backend. In particular, the hardware runs use \texttt{ibm\_marrakesh}, which IBM identifies as a
156-qubit Heron~r2 processor~\cite{ibm_compute_resources_2026}. For quantum implementation, Table~\ref{tab:exp_setup} summarizes the key configuration choices for each system, including the target
island count~$K$, the assignment, auxiliary, and total qubit counts before and after coherency-based bus merging, the QAOA depth parameter~$p$, the
number of shots~$S$, and the total classical iteration budget~$I_{\max}$. The auxiliary-qubit breakdown reflects how the minimum-bus, generator-count, and load-count constraints contribute to the quantum resource requirement. Each configuration is evaluated
over five independent trials, and the reported statistics are the sample mean and standard deviation across
those repetitions.

The framework is further evaluated under realistic hardware noise conditions via the \texttt{FakeMarrakesh} noise model~\cite{ibm_fakemarrakesh_2026}, a calibrated replica of the ibm\_marrakesh Heron~r2 device noise profile. This evaluation is restricted to the IEEE~9-bus case, as its 32-qubit circuit (Table~\ref{tab:exp_setup}) is the only configuration within the 32-qubit ceiling imposed by the IBM cloud simulator~\cite{ibm_compute_resources_2026}, and each condition is repeated over five independent trials with results reported as the sample mean and standard deviation.

\begin{table}[!tbp]
\centering
\scriptsize
\setlength{\tabcolsep}{2.6pt}
\renewcommand{\arraystretch}{1.05}
\caption{QAOA benchmark settings and qubit counts.}
\label{tab:exp_setup}
\begin{tabular}{@{}lcccccccccccccc@{}}
\toprule
\multirow{2}{*}{\textbf{Case}} & \multirow{2}{*}{$K$} & \multicolumn{2}{c}{\makecell{\textbf{Assignment}\\\textbf{qubits}}} & \multicolumn{2}{c}{\makecell{\textbf{Minimum-bus}\\\textbf{auxiliary qubits}}} & \multicolumn{2}{c}{\makecell{\textbf{Generator}\\\textbf{auxiliary qubits}}} & \multicolumn{2}{c}{\makecell{\textbf{Load}\\\textbf{auxiliary qubits}}} & \multicolumn{2}{c}{\makecell{\textbf{Total}\\\textbf{qubits}}} & \multirow{2}{*}{$p$} & \multirow{2}{*}{$S$} & \multirow{2}{*}{$I_{\max}$} \\
\cmidrule(lr){3-4}\cmidrule(lr){5-6}\cmidrule(lr){7-8}\cmidrule(lr){9-10}\cmidrule(lr){11-12}
 & & \makecell{w/o\\merging} & \makecell{w/\\merging} & \makecell{w/o\\merging} & \makecell{w/\\merging} & \makecell{w/o\\merging} & \makecell{w/\\merging} & \makecell{w/o\\merging} & \makecell{w/\\merging} & \makecell{w/o\\merging} & \makecell{w/\\merging} & & & \\
\midrule
9-bus  & 2 & 18  & 18  & 6  & 6  & 4  & 4  & 4  & 4  & 32  & 32  & 1 & 100   & 2 \\
14-bus & 2 & 28  & 24  & 8  & 8  & 6  & 4  & 8  & 8  & 50  & 44  & 1 & 200   & 2 \\
24-bus & 3 & 72  & 60  & 15 & 15 & 12 & 9  & 12 & 12 & 111 & 96  & 2 & 800   & 3 \\
30-bus & 2 & 60  & 58  & 10 & 10 & 6  & 6  & 10 & 10 & 86  & 84  & 2 & 1000  & 2 \\
39-bus & 3 & 117 & 117 & 16 & 16 & 10 & 10 & 13 & 13 & 156 & 156 & 4 & 3000  & 3 \\
57-bus & 2 & 114 & 104 & 12 & 12 & 6  & 2  & 12 & 12 & 144 & 130 & 4 & 3000  & 3 \\
\bottomrule
\end{tabular}
\end{table}

\begin{table}[!tbp]
\centering
\scriptsize
\setlength{\tabcolsep}{2pt}
\renewcommand{\arraystretch}{1.08}
\caption{Bus assignments of the islanding solutions obtained by the proposed QAOA framework.}
\label{tab:islanding_partitions}
\begin{tabular}{@{}>{\raggedright\arraybackslash}p{0.19\columnwidth}
                @{\hspace{0.04\columnwidth}}
                >{\raggedright\arraybackslash}p{0.71\columnwidth}@{}}
\toprule
\makecell[l]{\textbf{IEEE test}\\\textbf{case}} & \textbf{Islanding solution (bus sets)} \\
\midrule
\textbf{9-bus} &
\makecell[l]{Island 1: 1, 4, 5; Island 2: 2, 3, 6--9} \\
\addlinespace[2pt]
\textbf{14-bus} &
\makecell[l]{Island 1: 1--5; Island 2: 6--14} \\
\addlinespace[2pt]
\textbf{24-bus} &
\makecell[l]{Island 1: 6, 10--14, 20, 23; Island 2: 3, 15--19, 21, 22, 24;\\
Island 3: 1, 2, 4, 5, 7, 8, 9} \\
\addlinespace[2pt]
\textbf{30-bus} &
\makecell[l]{Island 1: 9--30; Island 2: 1--8} \\
\addlinespace[2pt]
\textbf{39-bus} &
\makecell[l]{Island 1: 1, 2, 3, 25--30, 37, 38; Island 2: 15--24, 33--36;\\
Island 3: 4--14, 31, 32, 39} \\
\addlinespace[2pt]
\textbf{57-bus} &
\makecell[l]{Island 1: 1--5, 11, 13--23, 32--49, 56, 57;\\
Island 2: 6--10, 12, 24--31, 50--55} \\
\bottomrule
\end{tabular}
\end{table}

\begin{figure}[!t]
\centering
\setlength{\tabcolsep}{2pt}
\renewcommand{\arraystretch}{1.0}
\begin{tabular}{@{}c@{\hspace{4pt}}c@{}}
\includegraphics[width=0.42\textwidth]{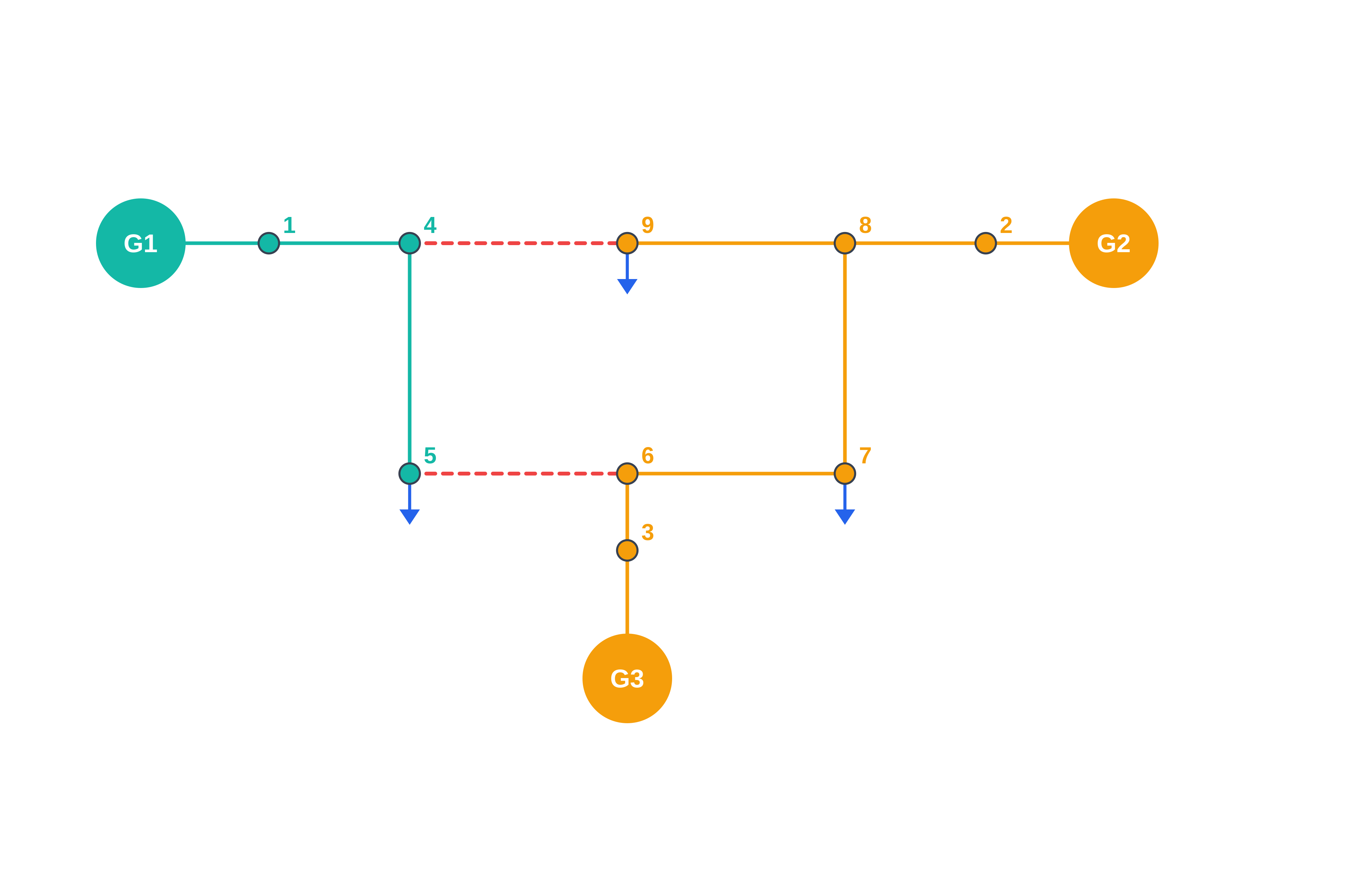} &
\includegraphics[width=0.42\textwidth]{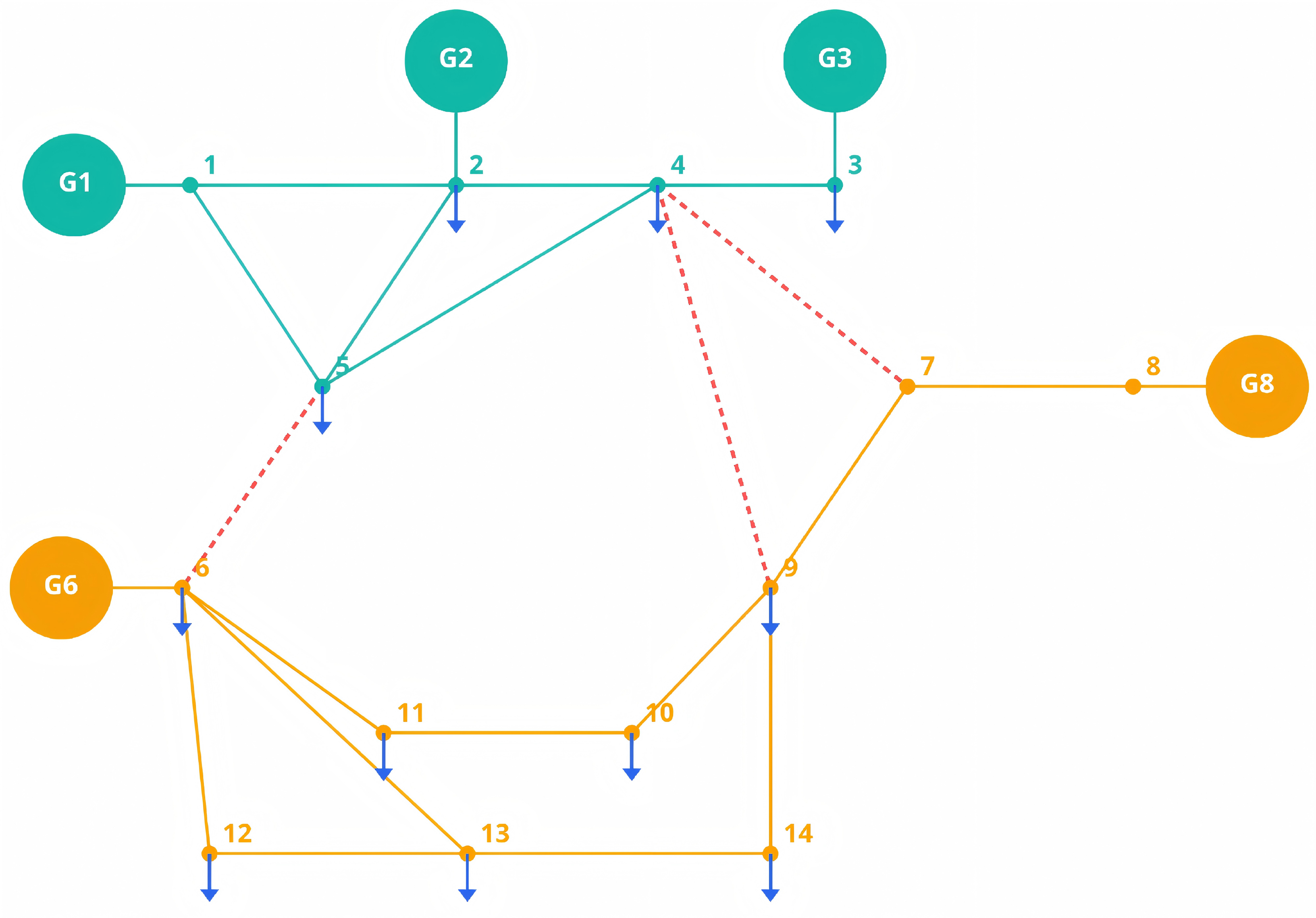} \\
\footnotesize (a) IEEE 9-bus & \footnotesize (b) IEEE 14-bus \\[4pt]
\includegraphics[width=0.42\textwidth]{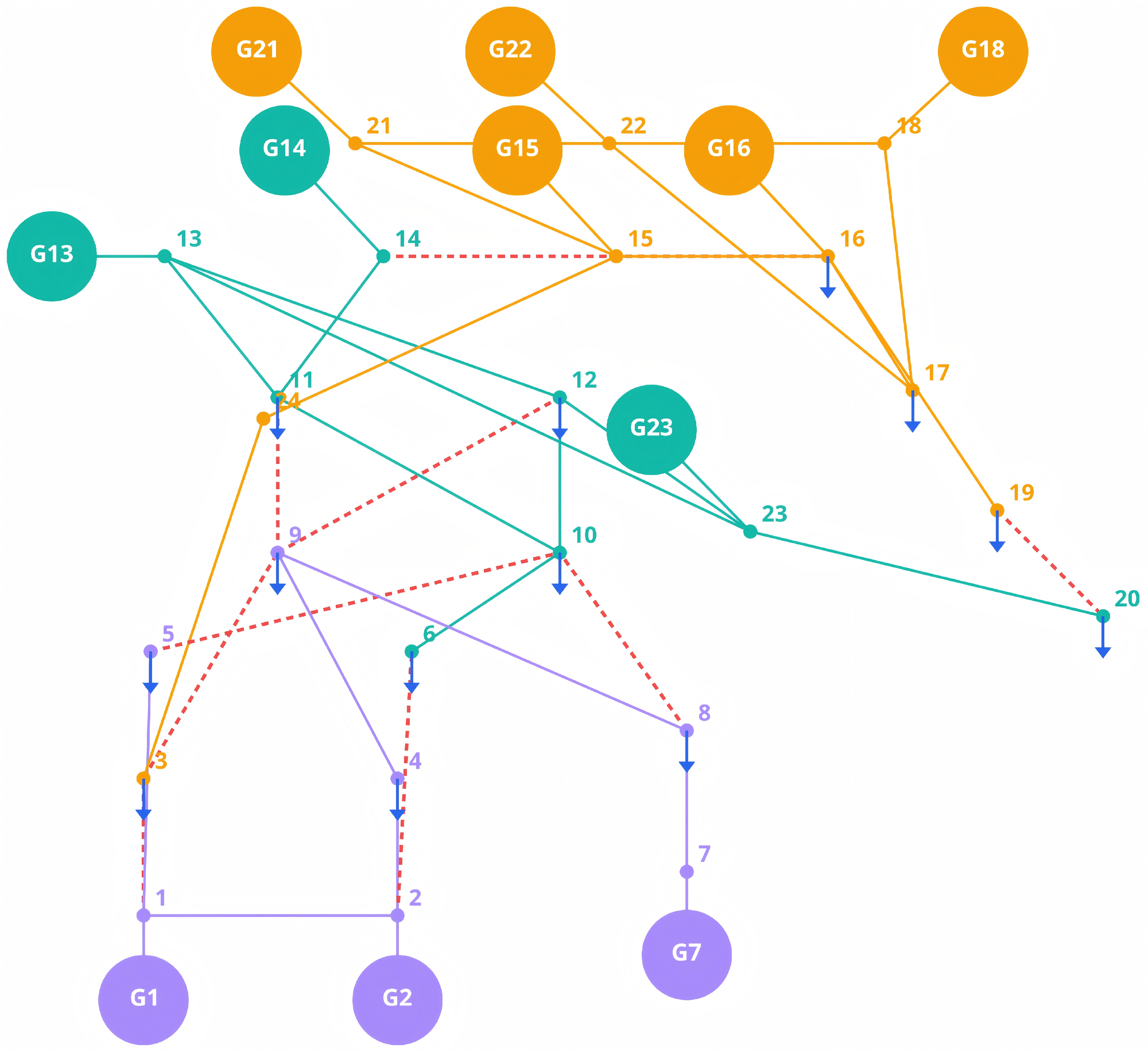} &
\includegraphics[width=0.42\textwidth]{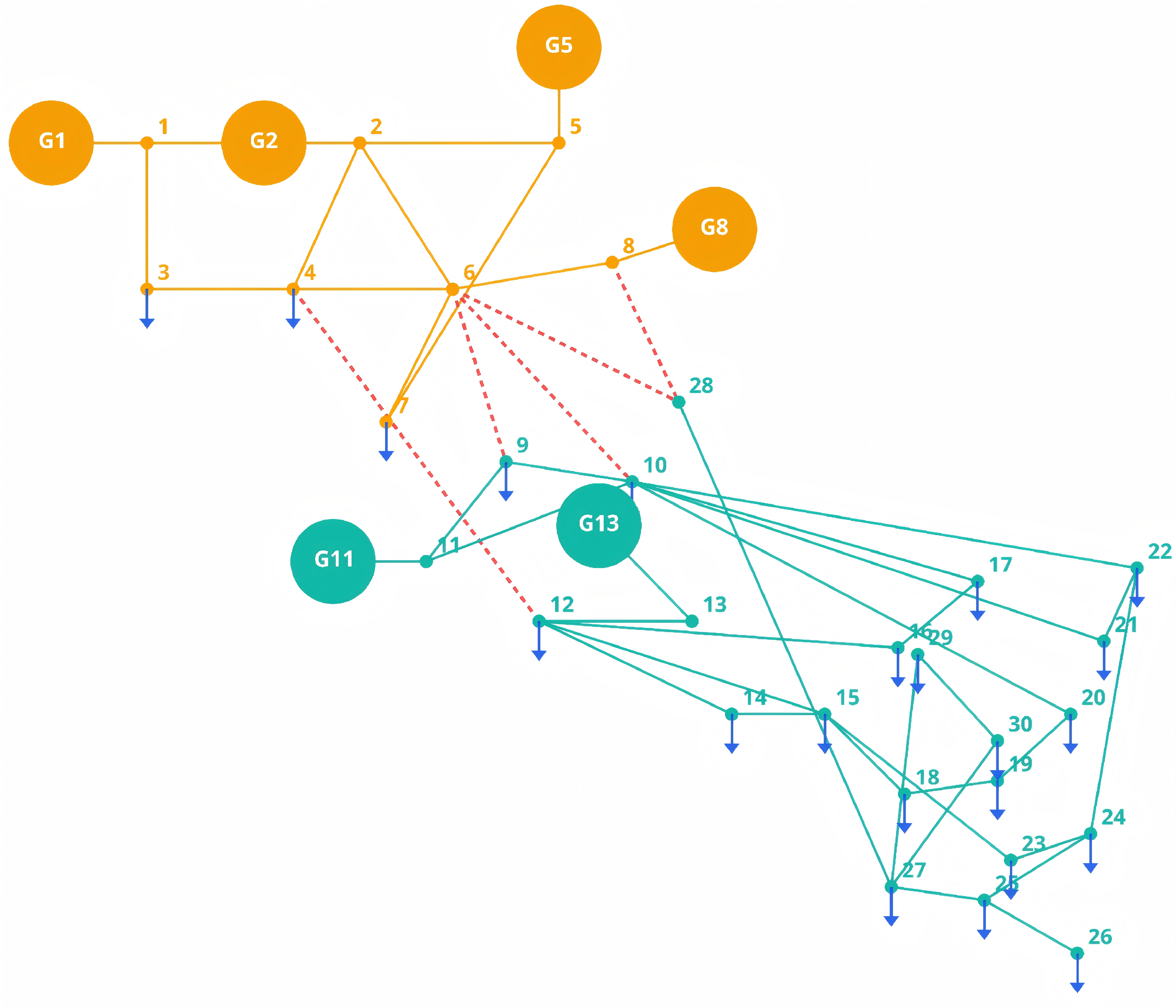} \\
\footnotesize (c) IEEE 24-bus & \footnotesize (d) IEEE 30-bus
\end{tabular}\par\vspace{4pt}
\begin{tabular}{@{}c@{\hspace{4pt}}c@{}}
\includegraphics[width=0.42\textwidth]{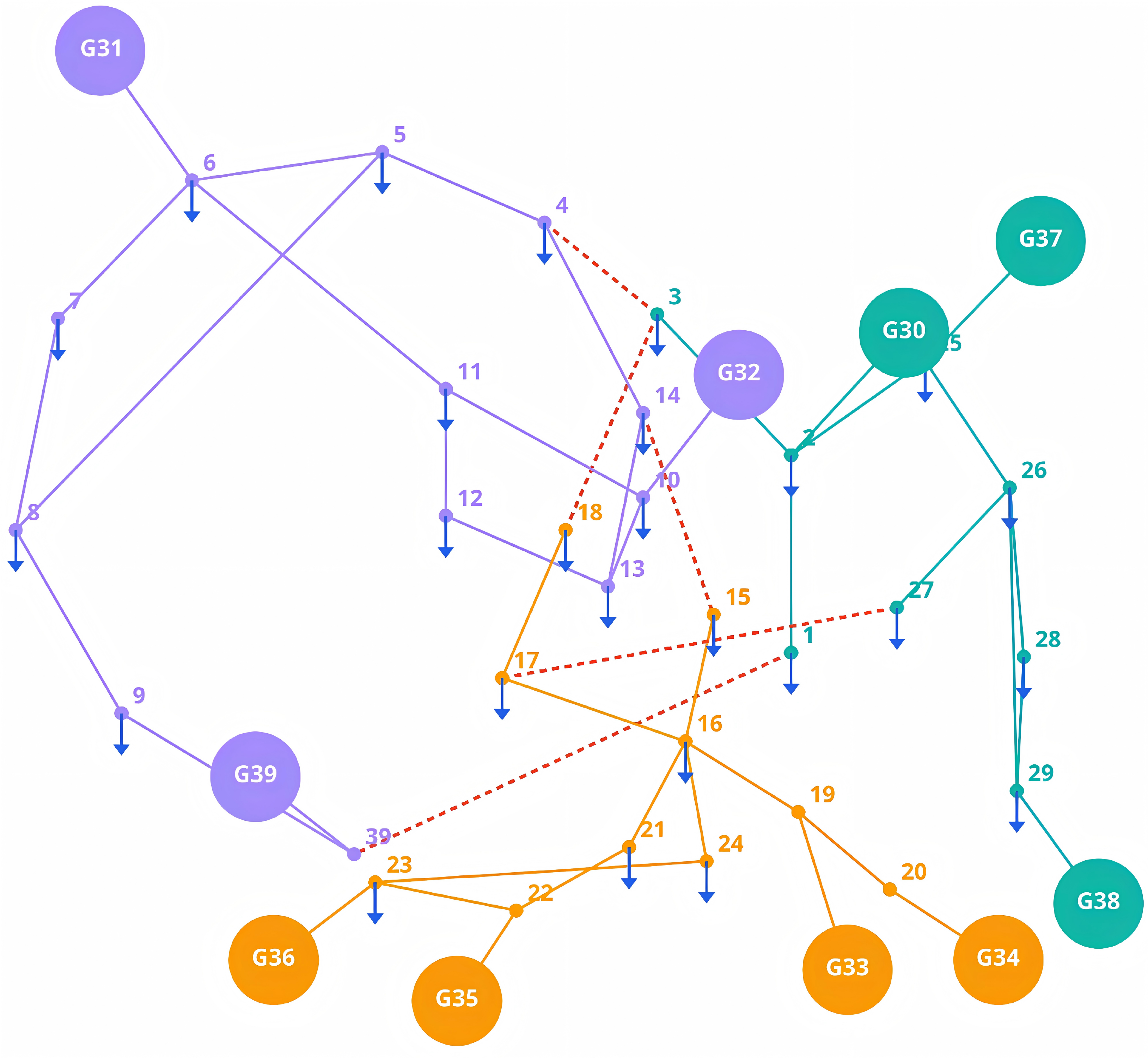} &
\includegraphics[width=0.42\textwidth]{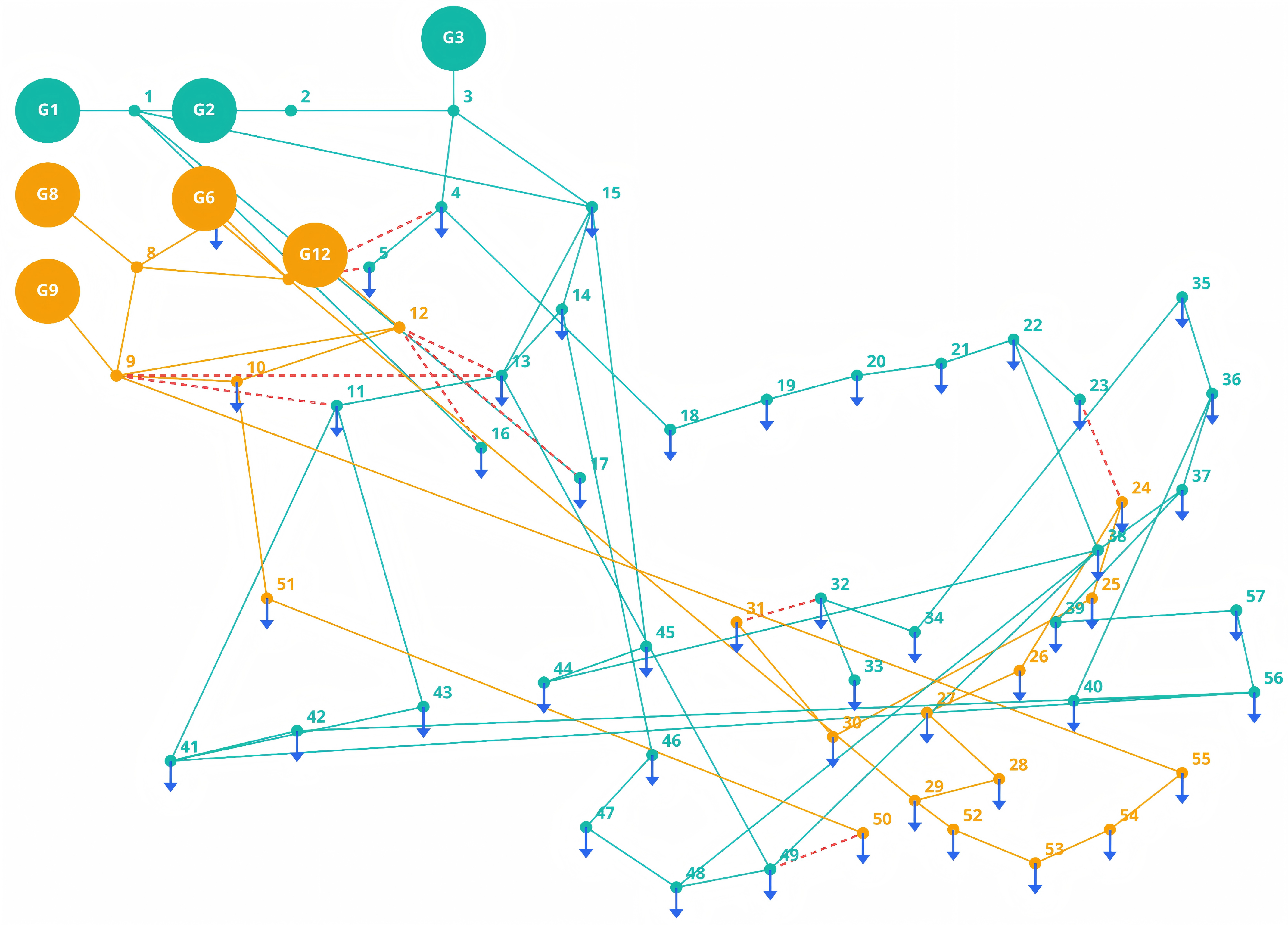} \\
\footnotesize (e) IEEE 39-bus & \footnotesize (f) IEEE 57-bus
\end{tabular}
\caption{Optimal islanding results generated by the proposed QAOA framework on the IEEE test systems. Bus colors indicate
the island assignment of each bus, and the red dashed lines denote the transmission interfaces removed to form the final
islands. The resulting partitions remain spatially compact and are separated by only a small number of
inter-area transmission cuts.}
\Description{Six IEEE benchmark network diagrams with colored bus groups indicating islands and red dashed lines indicating removed transmission interfaces.}
\label{fig:island_viz_ieee}
\end{figure}

\begin{figure}[!tbp]
\centering
\includegraphics[width=0.98\columnwidth,height=0.84\textheight,keepaspectratio]{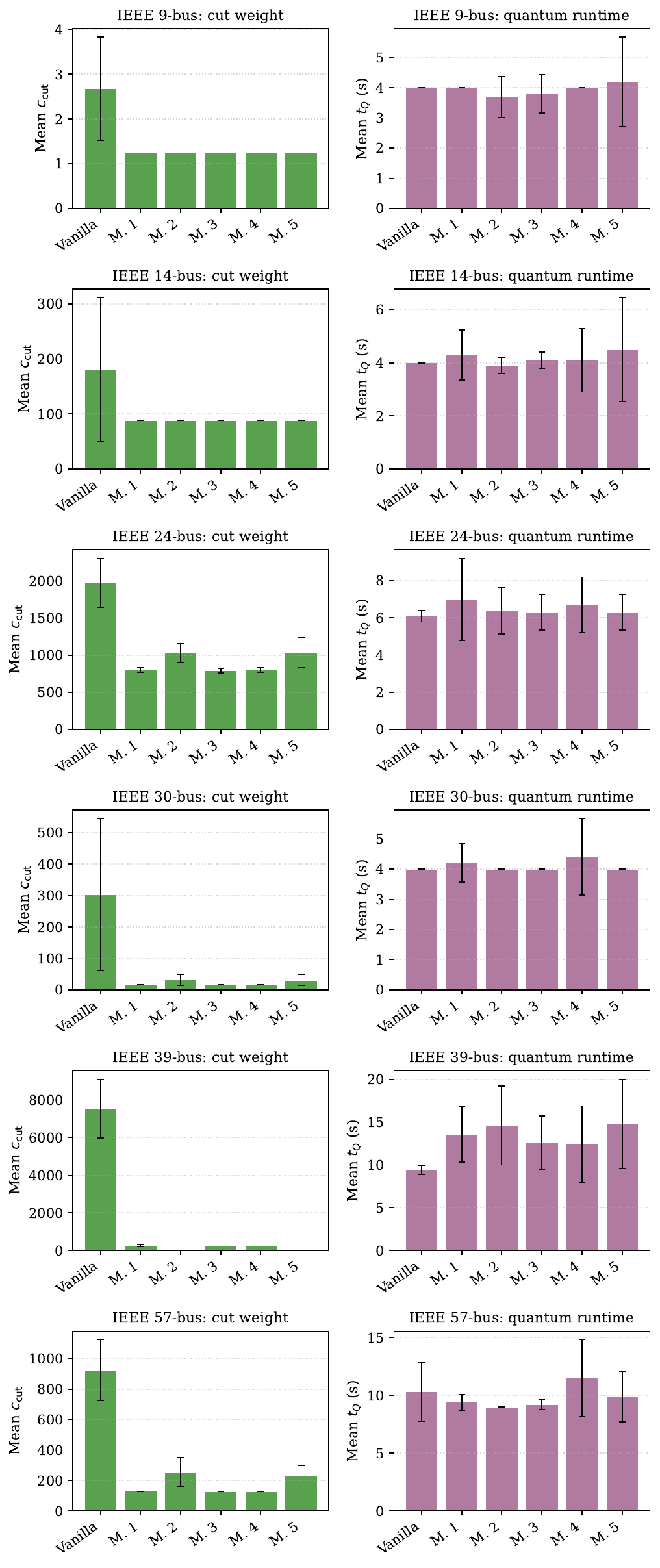}
\caption{Comparison of average cut quality and average quantum runtime across all the IEEE test systems for
vanilla QAOA and all the proposed post-processing methods. The left column shows the average cut values with sample
standard deviation bars, and the right column shows the average quantum runtime with sample standard
deviation bars. }
\Description{Bar charts comparing cut values and quantum runtimes for vanilla QAOA and five post-processing methods across six IEEE systems.}
\label{fig:postproc_cut_tq}
\end{figure}

\subsection{Benchmark Performance Comparison}

This subsection benchmarks the proposed hybrid quantum-classical islanding workflow across the six IEEE test
systems by comparing raw vanilla QAOA sampling with five post-processing methods denoted M.~1--M.~5 and with the classical Gurobi
optimum as a reference. The representative islanding solutions obtained from the QAOA-based workflow are listed in
Table~\ref{tab:islanding_partitions}, and Fig.~\ref{fig:island_viz_ieee} shows how these solutions map back to
the physical network structure through compact islands separated by a small set of cut interfaces.
The Table~\ref{tab:islanding_partitions} entries denote the buses assigned to each island.
Based on the coherency constraints, the IEEE~9-bus, 14-bus, 30-bus, and 57-bus systems are partitioned into
two islands, whereas the IEEE~24-bus and 39-bus systems are partitioned into three islands.
The IEEE~24-bus and 39-bus cases are therefore more challenging from the optimization perspective, since the
three-island requirement increases the combinatorial complexity of the partitioning problem and imposes
stricter demands on the subsequent connectivity and resource-balance repair steps.

Table~\ref{tab:postprocessing_bench}, Fig.~\ref{fig:postproc_cut_tq}, and
Fig.~\ref{fig:postproc_gap_gurobi} jointly summarize the benchmark outcomes in terms of cut quality and
quantum runtime. Without post-processing, vanilla QAOA yields mean cuts of $3.153$, $192.748$, $1768.581$,
$243.29$, $7544.796$, and $815.28$ on IEEE~9-, 14-, 24-, 30-, 39-, and 57-bus, respectively, against the
Gurobi optima of $1.238$, $88.241$, $776.206$, $16.863$, $228.993$, and $128.239$. Fig.~\ref{fig:postproc_cut_tq}
shows the same trend in absolute terms, while Fig.~\ref{fig:postproc_gap_gurobi} shows that the relative gap
to Gurobi grows from about $154\%$ on IEEE~9-bus to over $3000\%$ on IEEE~39-bus for vanilla QAOA.

Among the post-processing methods, M.~3 and~M.~4 provide the most consistent improvement. M.~4 reaches the
Gurobi optimum on every reported case, and M.~3 is exact on all but IEEE~24-bus, where its average performance is just slightly below M.~4.
Fig.~\ref{fig:postproc_cut_tq} shows that these cut improvements are accompanied by only a
limited increase in quantum runtime. For example, on IEEE~39-bus the mean~$t_Q$ rises from $9.4$~s for vanilla QAOA to
$12.6$~s and $12.4$~s for M.~3 and M.~4, respectively, while the mean cut drops from $7544.796$ to the
near-optimal range. On the same system, M.~2 and M.~5 still incur mean runtimes of $14.6$~s and $14.8$~s,
respectively, but fail to produce finite cut values, indicating that the additional runtime is not converted
into effective repair on this harder case. This is attributable to the algorithmic properties of M.~2 and M.~5 under the three-island structure: M.~2's Stage~2 accepts any connectivity-improving flip without a $Q$-descent condition, which can repeatedly raise the QUBO energy across repair moves and leave Stage~1 unable to recover a feasible configuration within the iteration budget; M.~5's large-$\mu$ unified descent prioritizes connectivity at every step, which on a three-island instance with more interdependent repair decisions can push the configuration away from the QUBO-feasible region. By contrast, M.~3 admits connectivity-repair flips only when they also strictly decrease $Q$, and M.~4 further relaxes the penalty weight when that condition blocks repair, allowing both methods to maintain QUBO feasibility throughout the descent.

The differences in solution quality across methods in Table~\ref{tab:postprocessing_bench} and
Figs.~\ref{fig:postproc_cut_tq}--\ref{fig:postproc_gap_gurobi} are attributed to the structural properties of
their respective repair formulations.
M.~1 adopts the simplest structure of a single-stage QUBO descent without connectivity repair,
which is sufficient for well-conditioned instances, as confirmed by the $^\star$ annotation on all six
systems. Its limitation is that connected partitions are obtained only when the QUBO penalty structure
is strong enough to guide the descent implicitly, a condition that becomes less reliable on larger
graphs, as reflected by the mean cut of $268.242\,(\pm 53.74)$ on IEEE~39-bus against the optimum
of $228.993$. M.~2 and~M.~5 fail on the largest instances
for opposite but complementary reasons: M.~2 applies an unconstrained Stage~2 that accepts any
connectivity-improving flip regardless of its effect on~$Q$, which can disrupt QUBO feasibility and
cause repeated oscillation between stages on complex instances; M.~5, despite being the only formulation that theoretically guarantees full feasibility
($z^* \in \mathcal{F}_Q \cap \mathcal{F}_C$) through a unified energy $\mathcal{Q} = Q + \mu C$,
underperforms in practice due to a fundamental tension in its design: the hierarchy condition
$\mu > \max_{z,j}|\Delta_j(z)|$ required to guarantee feasibility forces~$\mu$ to be large, which
biases every descent step toward connectivity repair regardless of the current QUBO state. Since QAOA
samples are typically near QUBO-feasible, this aggressive weighting steers the descent away from
the low-cut region highlighted in Table~\ref{tab:postprocessing_bench} and Fig.~\ref{fig:postproc_cut_tq}, and the algorithm may terminate at a
locally connected but cut-suboptimal solution. M.~3 resolves the instability of M.~2 by
admitting a Stage~2 connectivity flip only if it simultaneously decreases~$Q$, preserving strict
monotone descent throughout and achieving near-optimal solutions without an iteration bound. M.~4 extends this further by gradually relaxing the penalty weight $\lambda^{(k)}$ across outer
iterations, which unlocks Stage~2 flips that M.~3's strict descent condition would otherwise
reject, and thereby achieves the most consistent recovery of the Gurobi optimum across all systems
and trials, as seen most clearly in the near-zero gaps of Fig.~\ref{fig:postproc_gap_gurobi}. The penalty relaxation schedule provides a principled way to widen the search at each
iteration while preserving within-iteration strict descent, making M.~4 the most robust
post-processing variant among the five.
\newcommand{\PostprocBenchRows}{%
\multirow{2}{*}{\textbf{Vanilla}} & $\cutval$ & 3.1529 ($\pm$\,1.3254) & 192.748 ($\pm$\,87.51) & 1768.581 ($\pm$\,314.28) & 243.29 ($\pm$\,220.04) & 7544.796 ($\pm$\,1564.15) & 815.28 ($\pm$\,177.08) \\%
 & \textbf{$t_Q$} & 4 & 4 & 6.2 ($\pm$\,0.4) & 4 & 9.4 ($\pm$\,0.5) & 10 ($\pm$\,2.2) \\%
\midrule%
\multirow{2}{*}{\textbf{M.~1}} & $\cutval$ & 1.2378$^{\star}$ & 88.241$^{\star}$ & 785.887 ($\pm$\,21.65)$^{\star}$ & 16.863$^{\star}$ & 268.242 ($\pm$\,53.74)$^{\star}$ & 129.328 ($\pm$\,1)$^{\star}$ \\%
 & \textbf{$t_Q$} & 4 & 4 & 6.4 ($\pm$\,0.9) & 4 & 13.6 ($\pm$\,3.3) & 9.4 ($\pm$\,0.9) \\%
\midrule%
\multirow{2}{*}{\textbf{M.~2}} & $\cutval$ & 1.2378$^{\star}$ & 88.241$^{\star}$ & 1021.143 ($\pm$\,147.25) & 30.576 ($\pm$\,17.31)$^{\star}$ & {-}{-}{-} & 186.938 ($\pm$\,42.8) \\%
 & \textbf{$t_Q$} & 3.4 ($\pm$\,0.9) & 3.8 ($\pm$\,0.4) & 6 & 4 & 14.6 ($\pm$\,4.6) & 9 \\%
\midrule%
\multirow{2}{*}{\textbf{M.~3}} & $\cutval$ & 1.2378$^{\star}$ & 88.241$^{\star}$ & 784.44 ($\pm$\,18.41)$^{\star}$ & 16.863$^{\star}$ & 228.993$^{\star}$ & 128.239 ($\pm$\,0)$^{\star}$ \\%
 & \textbf{$t_Q$} & 3.6 ($\pm$\,0.9) & 4.2 ($\pm$\,0.4) & 6.6 ($\pm$\,1.3) & 4 & 12.6 ($\pm$\,3.1) & 9 \\%
\midrule%
\multirow{2}{*}{\textbf{M.~4}} & $\cutval$ & 1.2378$^{\star}$ & 88.241$^{\star}$ & 776.206$^{\star}$ & 16.863$^{\star}$ & 228.993$^{\star}$ & 128.239 ($\pm$\,0)$^{\star}$ \\%
 & \textbf{$t_Q$} & 4 & 4.2 ($\pm$\,1.8) & 6.8 ($\pm$\,1.8) & 4.8 ($\pm$\,1.8) & 12.4 ($\pm$\,4.5) & 11.6 ($\pm$\,3.2) \\%
\midrule%
\multirow{2}{*}{\textbf{M.~5}} & $\cutval$ & 1.2378$^{\star}$ & 88.241$^{\star}$ & 911.503 ($\pm$\,82.43) & 24.685 ($\pm$\,8.14)$^{\star}$ & {-}{-}{-} & 225.254 ($\pm$\,8.24) \\%
 & \textbf{$t_Q$} & 3.6 ($\pm$\,0.9) & 5 ($\pm$\,2.8) & 6 & 4 & 14.8 ($\pm$\,5.2) & 10.6 ($\pm$\,3) \\%
\midrule%
\textbf{Gurobi} & $\cutval$ & 1.2378 & 88.241 & 776.206 & 16.863 & 228.993 & 128.239 \\%
}%

\begin{table}[!t]
\centering
\footnotesize
\setlength{\tabcolsep}{2.5pt}
\renewcommand{\arraystretch}{1.05}
\caption{Cut quality and quantum runtime, reported in seconds, for vanilla QAOA, post-processing methods M.~1 through M.~5, and
Gurobi on the IEEE test systems
($^\star$ indicates that at least one trial attained the Gurobi-optimal cut value).}
\label{tab:postprocessing_bench}
\begin{tabular}{@{}ll*{6}{c}@{}}
\toprule
\textbf{Method} & \makecell{$\cutval$\\[1pt]$t_Q$ (s)}
& \textbf{9-bus} & \textbf{14-bus} & \textbf{24-bus} & \textbf{30-bus} & \textbf{39-bus} & \textbf{57-bus} \\
\midrule
\PostprocBenchRows
\bottomrule
\end{tabular}
\end{table}

\begin{table}[!tbp]
\centering
\scriptsize
\setlength{\tabcolsep}{4pt}
\renewcommand{\arraystretch}{1.0}
\caption{Approximate vanilla-QAOA depth and shot requirements for matching the solution quality attained by the
proposed hybrid workflow.}
\label{tab:vanilla_resource_req}
\begin{tabular}{@{}lcc@{}}
\toprule
\textbf{IEEE case} & \textbf{QAOA layers} & \textbf{Shots} \\
\midrule
9-bus & 9 & 10,000 \\
14-bus & 20 & 8,000,000 \\
24-bus & N/A & N/A \\
30-bus & N/A & N/A \\
39-bus & N/A & N/A \\
57-bus & N/A & N/A \\
\bottomrule
\end{tabular}
\end{table}

\begin{figure}[!tbp]
\centering
\includegraphics[width=\columnwidth]{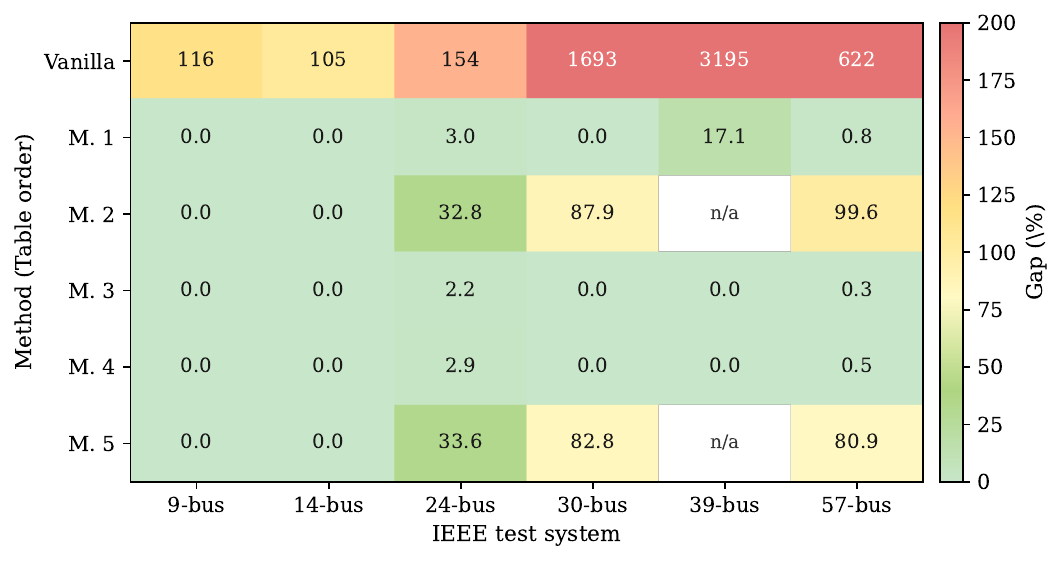}
\caption{Percent gap of the average cut value from the corresponding Gurobi benchmark across all the IEEE test
systems for vanilla QAOA and post-processing methods M.~1 through M.~5. Smaller values indicate closer agreement with the
classical optimum, and the cell labels report the corresponding percentage for each method and system.}
\Description{Heatmap of percent cut-value gaps from Gurobi for vanilla QAOA and five post-processing methods across six IEEE systems.}
\label{fig:postproc_gap_gurobi}
\end{figure}

Fig.~\ref{fig:postproc_feasibility} complements the cut-quality results in
Table~\ref{tab:postprocessing_bench} and Figs.~\ref{fig:postproc_cut_tq}--\ref{fig:postproc_gap_gurobi} by
showing how consistently each method concentrates probability mass on feasible partitions. Vanilla QAOA has
essentially zero feasible probability on all six systems, so its sampling distribution rarely places
meaningful weight on directly usable partitions. M.~1 improves this on the smaller cases, reaching $25.3\%$ on
IEEE~9-bus and $17.9\%$ on IEEE~30-bus, but it drops to only $1.1\%$ on IEEE~39-bus, which is consistent with
the cut degradation already seen in Table~\ref{tab:postprocessing_bench}. The strongest behavior again comes
from M.~3 and M.~4: even on the harder three-island cases they keep nontrivial feasible concentration, rising
to $13.1\%$ and $13.2\%$ on IEEE~39-bus, and on IEEE~57-bus M.~4 reaches $67.3\%$ compared with $44.1\%$ for
M.~1 and about $0.1\%$ for both M.~2 and M.~5. Even for IEEE~24-bus, where all methods are more challenged,
M.~4 still attains the highest mean feasible probability at $3.9\%$. This pattern reinforces the earlier
method comparison: repair is most effective when it restores connectivity without losing control of the
underlying QUBO descent, whereas unconstrained or overly connectivity-dominated repair either disperses or
misdirects the sampled probability mass.

\begin{figure}[!tbp]
\centering
\includegraphics[width=\columnwidth]{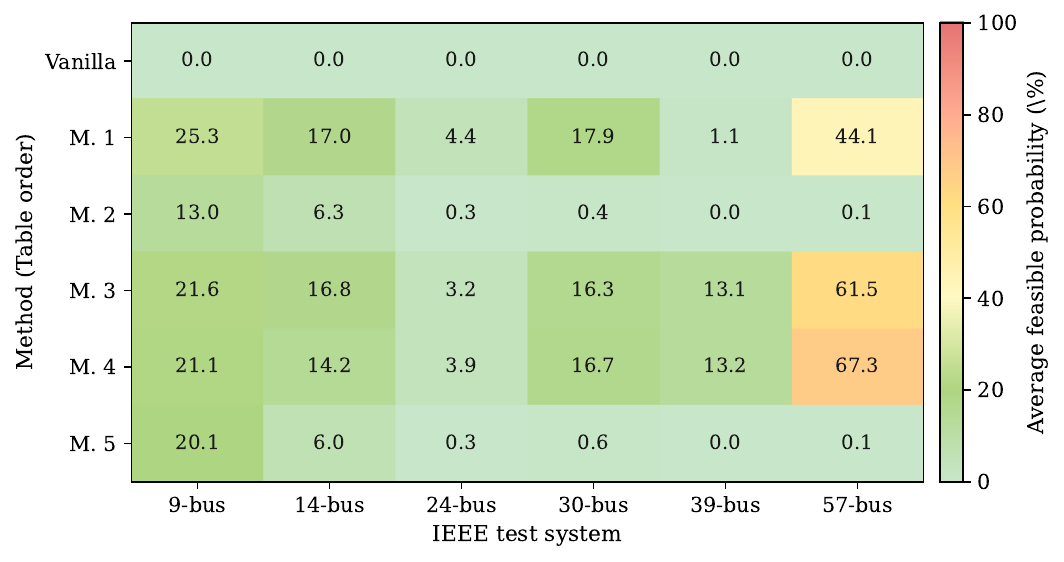}
\caption{Average feasible probability across all the IEEE test systems for vanilla QAOA and
post-processing methods M.~1 through M.~5, reported in percent. Larger values indicate that the sampling distribution places
more weight on feasible islanding partitions, and the cell labels give the corresponding average percentage for
each method and system.}
\Description{Heatmap of feasible-sample probabilities for vanilla QAOA and five post-processing methods across six IEEE systems.}
\label{fig:postproc_feasibility}
\end{figure}

Table~\ref{tab:vanilla_resource_req} further highlights the quantum-resource savings enabled by the proposed
post-processing strategy by reporting the approximate vanilla-QAOA circuit layers and shot count required to
match the solution quality achieved by the post-processed pipeline. On IEEE~9-bus, the proposed framework
reaches that quality with $R=2$ layers and $S=100$ shots, whereas vanilla QAOA requires $9$ layers and
$10{,}000$ shots, corresponding to reductions of $4.5\times$ in circuit depth and $100\times$ in sampling
cost. The disparity widens sharply on IEEE~14-bus, where $20$ layers and $8{,}000{,}000$ shots are required,
that is, a $10\times$ increase in depth and a $8000\times$ increase in shots relative to the post-processed
configuration ($R=2$, $S=1000$). For IEEE~24-bus and larger systems, the solution quality reached by the
post-processed pipeline is no longer matched by vanilla QAOA within a practically attainable circuit-layers and
shot budget (denoted as N/A in Table~\ref{tab:vanilla_resource_req}), underscoring the role of post-processing in extending the usable range of the quantum workflow.

\subsection{Noise Robustness Analysis}

In this subsection, the IBM cloud simulator is used under both an ideal noiseless setting and the \texttt{FakeMarrakesh} noise model to evaluate the noise robustness of the proposed framework. \texttt{FakeMarrakesh} replicates the calibrated noise profile of the 156-qubit ibm\_marrakesh processor, including per-gate depolarizing errors, T1/T2 thermal relaxation, and readout errors derived from live device data~\cite{ibm_processor_types_2026,ibm_fakemarrakesh_2026,ibm_compute_resources_2026}, and is therefore directly representative of the noise encountered during the hardware runs in Table~\ref{tab:postprocessing_bench}. The study is restricted to the IEEE~9-bus case, as its 32-qubit circuit (Table~\ref{tab:exp_setup}) is the only configuration that fits within the simulator's 32-qubit ceiling, while all other benchmark instances require strictly more qubits and therefore fall outside this constraint. The hardware results in Table~\ref{tab:postprocessing_bench} are used as the real-device reference, while Table~\ref{tab:noise_sim_comparison} contrasts the same instance under ideal and calibrated-noise simulation. Together with the feasible-probability evidence in Fig.~\ref{fig:postproc_feasibility}, these results provide a cross-backend assessment of whether the proposed post-processing framework maintains solution quality and feasible-sample concentration when moving from ideal simulation to calibrated simulator noise and real quantum hardware.

\begin{table}[!tbp]
\centering
\scriptsize
\setlength{\tabcolsep}{3pt}
\renewcommand{\arraystretch}{1.05}
\caption{Impact of noise on IEEE~9-bus simulation results ($^\star$: Gurobi-optimal cut found).}
\label{tab:noise_sim_comparison}
\begin{threeparttable}
\resizebox{\columnwidth}{!}{%
\begin{tabular}{@{}lcccc@{}}
\toprule
\textbf{Method} &
\makecell{\textbf{Ideal}\\$\cutval/\bar p_{\mathrm{feas}}$} &
\makecell{\textbf{Marrakesh}\\$\cutval/\bar p_{\mathrm{feas}}$} &
\makecell{\textbf{Ideal}\\$t_Q$} &
\makecell{\textbf{Marrakesh}\\$t_Q$} \\
\midrule
Vanilla & 2.002 ($\pm$\,1.215) / 0.000 ($\pm$\,0.000) & 2.101 ($\pm$\,1.391) / 0.000 ($\pm$\,0.000) & 0.03 ($\pm$\,0.01) & 2.98 ($\pm$\,0.09) \\
M.~1 & 1.238 ($\pm$\,0.000)$^\star$ / 0.266 ($\pm$\,0.033) & 1.238 ($\pm$\,0.000)$^\star$ / 0.250 ($\pm$\,0.038) & 0.02 ($\pm$\,0.00) & 2.93 ($\pm$\,0.13) \\
M.~2 & 1.238 ($\pm$\,0.000)$^\star$ / 0.136 ($\pm$\,0.040) & 1.238 ($\pm$\,0.000)$^\star$ / 0.136 ($\pm$\,0.050) & 0.02 ($\pm$\,0.00) & 3.06 ($\pm$\,0.08) \\
M.~3 & 1.238 ($\pm$\,0.000)$^\star$ / 0.222 ($\pm$\,0.062) & 1.238 ($\pm$\,0.000)$^\star$ / 0.220 ($\pm$\,0.067) & 0.02 ($\pm$\,0.00) & 2.87 ($\pm$\,0.12) \\
M.~4 & 1.238 ($\pm$\,0.000)$^\star$ / 0.188 ($\pm$\,0.031) & 1.238 ($\pm$\,0.000)$^\star$ / 0.180 ($\pm$\,0.039) & 0.02 ($\pm$\,0.00) & 3.02 ($\pm$\,0.05) \\
M.~5 & 1.238 ($\pm$\,0.000)$^\star$ / 0.222 ($\pm$\,0.062) & 1.238 ($\pm$\,0.000)$^\star$ / 0.220 ($\pm$\,0.067) & 0.02 ($\pm$\,0.00) & 3.01 ($\pm$\,0.04) \\
\bottomrule
\end{tabular}
}
\end{threeparttable}
\end{table}

Table~\ref{tab:postprocessing_bench} first provides the real-hardware reference on \texttt{ibm\_marrakesh}, and Table~\ref{tab:noise_sim_comparison} then compares the same IEEE~9-bus case under ideal statevector simulation and the calibrated \texttt{FakeMarrakesh} noise model. These results highlight the noise robustness gained by the proposed post-processing framework. In terms of solution quality, vanilla QAOA has a mean cut value of $2.002$ in ideal simulation, $2.101$ under the noise model, and $3.153$ on real hardware, whereas M.~1--M.~5 retain the Gurobi-optimal cut value of $1.238$ across the ideal simulator, noisy simulator, and hardware settings. Consistent with Fig.~\ref{fig:postproc_feasibility}, the feasible probability results in Table~\ref{tab:noise_sim_comparison} show that vanilla QAOA remains at zero in both simulator columns, while the post-processed methods maintain similar values between the ideal and noisy simulator results, indicating that calibrated noise has little impact on feasible-sample concentration for this instance. In terms of backend execution time, the \texttt{FakeMarrakesh} simulator requires about $2.87$--$3.06$~s, and the real-hardware $t_Q$ values in Table~\ref{tab:postprocessing_bench} are of the same order for the IEEE~9-bus case. Thus, the numerical evidence indicates that post-processing preserves solution quality and feasible-sample concentration under calibrated noise and real hardware execution without introducing a large backend-time penalty for this case. More broadly, the real-hardware results in Table~\ref{tab:postprocessing_bench} show that the proposed workflow can recover Gurobi-optimal solutions across all six benchmark systems with shallow QAOA circuits and limited sampling resources. This supports the effectiveness of the proposed hybrid framework as a noise-resilient quantum optimization workflow, in which shallow QAOA sampling provides useful candidate structure and classical post-processing converts that structure into high-quality, physically feasible islanding solutions.

\begin{figure}[!tbp]
\centering
\includegraphics[width=\columnwidth,height=0.72\textheight,keepaspectratio]{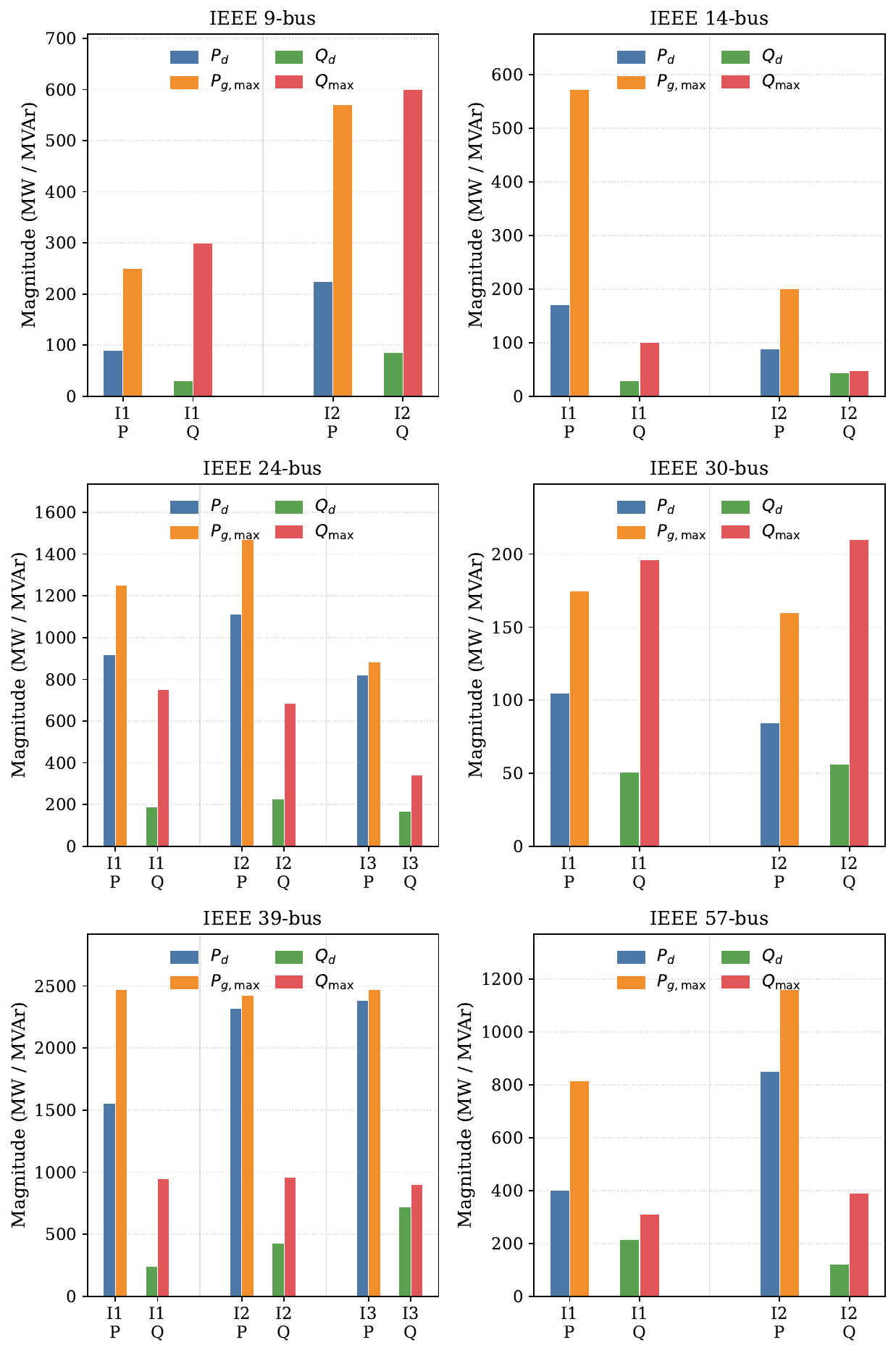}
\caption{Island-level comparison of load and generation capability for the islanding solutions obtained by the
proposed QAOA framework. For each case, the grouped bars compare active-power demand with maximum
active-generation capacity and reactive-power demand with reactive upper capability.}
\Description{Grouped bar charts comparing demand and generation capability for active and reactive power in each island across the benchmark systems.}
\label{fig:island_power_balance}
\end{figure}

\subsection{Island Power-Balance Validation}

This subsection applies the two post-pipeline physical validation steps introduced in Section~\ref{formulation} to the representative islanding solutions selected by the proposed framework. The power flow check verifies whether each island admits a consistent post-separation active-power transfer pattern, and the power support check confirms that each island satisfies $P_{g,\max}>P_d$ and $Q_{\max}>Q_d$, ensuring self-sufficiency in both active and reactive power after separation.

All islands in the selected partitions pass the power flow check, indicating
that each isolated subsystem admits a consistent post-separation operating solution. Fig.~\ref{fig:island_power_balance}
evaluates the active- and reactive-power support of the same partitions by comparing, for every island, the
active-power demand with the available active-generation capacity and the reactive-power demand with the
reactive upper capability.
Across all six IEEE systems, every island in the selected partition satisfies $P_{g,\max}>P_d$ and
$Q_{\max}>Q_d$, which indicates that the post-processed solutions are not only low-cut graph partitions but
also self-supporting candidates after separation.

The most constrained cases in Fig.~\ref{fig:island_power_balance} are still feasible but exhibit visibly reduced operating margins. In the IEEE~24-bus system, Island~3 has $P_d=821~\mathrm{MW}$ and $P_{g,\max}=884~\mathrm{MW}$, so the
active-power margin is only $63~\mathrm{MW}$, whereas the other two islands retain margins above
$300~\mathrm{MW}$. In the IEEE~14-bus system, Island~2 is the tightest reactive case, with
$Q_d=44.1~\mathrm{MVAr}$ against $Q_{\max}=48~\mathrm{MVAr}$, leaving only $3.9~\mathrm{MVAr}$ of upper reactive
headroom. By contrast, the IEEE~39-bus and IEEE~57-bus partitions remain comfortably feasible overall; for
example, the three IEEE~39-bus islands preserve active-power margins of $915.9$, $109.9$, and
$86.97~\mathrm{MW}$, while IEEE~57-bus retains $414.48~\mathrm{MW}$ and $310.6~\mathrm{MW}$ of active margin in its
two islands. Fig.~\ref{fig:island_power_balance} therefore shows that the selected post-processed solutions preserve both the intended
partition structure and the basic island-level power-support capability needed for practical implementation.

\section{Conclusion}
\label{conclusion}

This paper presents a hybrid quantum-classical framework for controlled power-system islanding and pioneers
the application of QAOA for power-system islanding with explicit feasibility-critical constraints. The proposed
framework combines quantum optimization with a structured post-processing design that converts sampled QAOA
outputs into feasible islanding decisions when important physical requirements, such as exact connectivity and
operational consistency, cannot be enforced by the QUBO model alone, thereby extending quantum optimization
from unconstrained graph partitioning to physically meaningful islanding decisions.

The numerical studies on the IEEE 9-, 14-, 24-, 30-, 39-, and 57-bus systems show that this hybrid design is
necessary for obtaining high-quality solutions from shallow QAOA sampling. Among all the proposed post-processing
methods, M.~3 and especially M.~4 deliver the strongest overall performance, with M.~4 recovering the
Gurobi-optimal cut on all reported benchmark cases while requiring only modest additional runtime. The
benchmark comparison further shows that the proposed workflow attains solution quality that would require much
deeper and more heavily sampled vanilla QAOA on the small systems and is not matched by vanilla QAOA within
the current practical resource budget on the larger systems. The noise robustness analysis further indicates that
the post-processing framework preserves high-quality feasible solutions under realistic quantum noise, supporting
the resilience of the proposed near-term quantum workflow. The selected partitions also pass the
power flow validation and retain adequate active- and reactive-power support margins, indicating that the
resulting islands are physically meaningful candidates for post-disturbance operation.

Future work will focus on improving the quantum efficiency and practical scalability of the framework through
power-system-aware ansatz design, structure-informed mixers and initialization strategies, and hardware-aware
implementations that further reduce sampling cost while preserving feasibility and solution quality.

\clearpage
\begin{acks}
This work is supported by the Office of Naval Research under the award N00014-22-1-2504, the National Science Foundation under the awards OAC-2417773 and ECCS-2413237/2413238.
\end{acks}
\bibliographystyle{ACM-Reference-Format}
\bibliography{References}
\appendix
\section{Formal Analysis of Post-Processing Methods}
\label{app:postprocessing}

This appendix provides rigorous proofs of the descent, termination, and feasibility properties
of all the proposed post-processing methods.  M.~1 and M.~2 are analyzed first to identify precisely
where their guarantees break down; M.~3 through M.~5 are then proved against those baselines.

\subsection*{Standing Notation and Conditions}

Throughout the appendix, $z \in \{0,1\}^N$ denotes an island-assignment configuration,
$e_j$ the unit flip vector at coordinate $j$, and $\Delta^E_j(z) := E(z \oplus e_j) - E(z)$
the per-flip change of any energy $E$.  For $E = Q$ we write $\Delta_j(z)$;
for $E = C$ we write $\Delta^C_j(z)$;
for $E = \mathcal{Q}$ we write $\Delta^u_j(z)$;
and for $E = Q^{(r)}$ we write $\Delta^{(r)}_j(z)$.
Define
\begin{equation}
  \Delta_{\max} \;:=\; \max_{z,\,j}\,\bigl|\Delta^{H_{\mathrm{cut}}}_j(z)\bigr|
  \label{eq:deltamax}
\end{equation}
as the maximum single-flip change in the cut objective.  Define also
$\mathcal{F}^* := \mathcal{F}_Q \cap \mathcal{F}_C$.

Each constraint penalty takes either the hinge-loss form used in the main text
(Section~\ref{sec:post_common}),
\[
  H^m(z) \;=\; \max\!\bigl(0,\; S^m(z) - b^m_{\max},\; b^m_{\min} - S^m(z)\bigr),
\]
or, after inequality constraints are converted to equalities by binary slack variables, the
equivalent squared form $H^m(z) = \bigl(S^m(z) - b^m\bigr)^2$ with $b^m := b^m_{\min} = b^m_{\max}$.
Both forms satisfy $H^m(z)=0$ iff the constraint holds, and the coefficients $a^m_j$ are
integer-valued ($a^m_j\in\mathbb{Z}$), so $a^m_j\neq 0$ implies $|a^m_j|\geq 1$.

The following conditions are assumed where indicated:
\begin{description}
  \item[(A1)] $\lambda_m > \Delta_{\max}$ for every $m \in \{1,\ldots,5\}$.
  \item[(A2)] For every $z$ violating a linear constraint (one-hot, min-size, generation,
    or load), there exist a violated constraint~$m$ and a coordinate $j \in \mathcal{V}^m$
    with $|a^m_j| = 1$ and $(1 - 2z_j)\,a^m_j > 0$.  The coherency penalty is quadratic
    and is not covered by this condition.
  \item[(A3)] Every QUBO-infeasible configuration $z \notin \mathcal{F}_Q$ admits at least one
    single-variable flip $j$ with $\Delta_j(z) < 0$.
  \item[(A4)] For every QUBO-infeasible $z$ with $C(z) = 0$, there exists a flip $j$ with
    $\Delta_j(z) < 0$ and $\Delta^C_j(z) = 0$.
\end{description}

\begin{remark}
In~(A2), the coefficient $a^m_j$ is the multiplier of variable $z_j$ in the constraint sum
$S^m(z) = \sum_{i \in \mathcal{V}^m} a_i^m z_i$.  For example, in a one-hot constraint
$\sum_k y_{i,k} = 1$, each assignment variable $y_{i,k}$ has coefficient $+1$.
In the islanding QUBO of Section~\ref{quantum}, assignment variables participate in multiple
penalties simultaneously (one-hot, size, generation/load, and coherency), so the
disjoint variable-set property $\mathcal{V}^m \cap \mathcal{V}^n = \emptyset$ does not
hold globally and is not assumed in any theorem below.  Condition~(A3), namely that every QUBO-infeasible configuration
admits at least one improving single-variable flip, is therefore used directly as the
hypothesis of Theorem~\ref{thm:m3_qubo}.  For some partition-constraint violations
(min-size, generation, or load), an exclusive-slack repair argument may identify such a flip,
because a slack variable belongs only to its own penalty term.  However, this argument does
not apply uniformly across all infeasibility patterns, and in particular not to one-hot or
coherency violations, where assignment variables participate in several penalties
simultaneously.  For that reason, we retain~(A3) as a stand-alone hypothesis rather than
deriving it globally from~(A1)--(A2).
The framework of~\cite{shirai2024postprocessing} proves feasibility guarantees under an
``independent constraints'' assumption (disjoint variable sets, $\mathcal{V}^m \cap \mathcal{V}^n = \emptyset$);
because the islanding QUBO has dependent constraints, their Theorem~2 does not apply
directly, and~(A3) is stated here as an explicit regularity hypothesis in its place.
\end{remark}

\begin{remark}[Scope of the theoretical guarantees]
Conditions~(A3), (A4), and the flip-existence condition of Lemma~\ref{lem:m5_conn} are
stated as explicit regularity hypotheses rather than derived properties, reflecting the
coupled-constraint structure of the islanding QUBO in which the same variable
participates in multiple penalty terms simultaneously.  This is a known trade-off when
extending post-processing theory beyond the independent-constraint setting
of~\cite{shirai2024postprocessing}: stronger structural assumptions enable cleaner
derivations but may not hold universally across all problem instances.  The proofs in
this appendix are fully rigorous under these hypotheses, and the hypotheses themselves
are mild regularity conditions that hold for the class of power network topologies
considered here.  Across all six IEEE benchmark systems tested in
Section~\ref{sec:numerical} (9- to 57-bus), no violation of any condition is observed:
every tested instance satisfies the regularity requirements, and the post-processing
methods consistently recover near-optimal, fully feasible islanding solutions, confirming
that the empirical performance substantially compensates for the theoretical generality
that the coupled-constraint setting necessarily trades away.
\end{remark}

An additional condition, denoted~(A5), is stated when needed for M.~5.

The following identity underlies every proof.

\begin{lemma}[Fundamental identity]\label{lem:fund}
For any energy function $E$, any configuration $z$, and any variable index $j$,
\[
  E(z \oplus e_j) = E(z) + \Delta^E_j(z).
\]
\end{lemma}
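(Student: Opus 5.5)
This identity follows directly from the definition of the per-flip change in~(\ref{eq:delta_generic}), which the appendix restates as $\Delta^E_j(z) := E(z \oplus e_j) - E(z)$. The plan is to rearrange that definition. First, I would note that $E$ is a real-valued function on the finite set of configurations, so both $E(z)$ and $E(z\oplus e_j)$ are well-defined real numbers. This holds because $z \oplus e_j$ is again a configuration: toggling one coordinate stays in $\{0,1\}^N$. Hence $\Delta^E_j(z)$ is a well-defined real number. Second, I would add $E(z)$ to both sides of the defining equation, which gives $E(z)+\Delta^E_j(z) = E(z\oplus e_j)$, as claimed.

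No property of $E$ is used beyond its being a real-valued function. The statement therefore applies uniformly to every energy used later: $Q$, $C$, $\mathcal{Q}$, each $Q^{(r)}$, and $H_{\mathrm{cut}}$. I would remark on this explicitly, because the later descent arguments invoke the identity for each of these energies. For example, after a flip with $\Delta^E_j(z)<0$, the identity gives $E(z\oplus e_j)<E(z)$ immediately. By linearity of the definition, it also gives the decomposition $\Delta^u_j=\Delta_j+\mu\Delta^C_j$ in~(\ref{eq:delta_uni}).

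There is no real obstacle. The only point needing care is notation: the appendix writes $z\in\{0,1\}^N$ while the main text writes $\{0,1\}^{NK}$. I would state that the argument is indifferent to the ambient dimension, since it uses only the closure of the configuration space under single-coordinate toggles.
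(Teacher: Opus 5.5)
Your proposal is correct and matches the paper's proof, which likewise just rearranges the defining equation $\Delta^E_j(z) = E(z\oplus e_j) - E(z)$ by adding $E(z)$ to both sides. Your additional remarks on well-definedness, on the ambient dimension $\{0,1\}^N$ versus $\{0,1\}^{NK}$, and on the linearity that yields $\Delta^u_j=\Delta_j+\mu\Delta^C_j$ are accurate but not needed for the lemma itself.
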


\begin{proof}
By definition,
\begin{equation}
  \Delta^E_j(z) = E(z \oplus e_j) - E(z),
\end{equation}
which rearranges to $E(z \oplus e_j) = E(z) + \Delta^E_j(z)$.
\end{proof}

\subsection{M.~1: QUBO-Only Descent}

M.~1 runs the Stage~1 greedy rule on $Q$ alone and performs no
connectivity repair.

\begin{theorem}[Strict descent and QUBO feasibility]\label{thm:m1_descent}
Under~(A3), every flip executed by M.~1 strictly decreases $Q$,
and the output $z^*$ belongs to $\mathcal{F}_Q$.
\end{theorem}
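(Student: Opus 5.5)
The argument has three parts: strict descent at each executed flip, termination, and feasibility of the output. Each follows from the definition of M.~1 (Algorithm~\ref{alg:m1}) together with Lemma~\ref{lem:fund} and hypothesis~(A3).

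\emph{Strict descent.} Suppose M.~1 executes a flip at iteration $t$. By the update rule~(\ref{eq:greedy_flip}), this happens only when $\Delta_{j^*(t)}(z^{(t)})<0$, which is exactly the while-loop guard of Algorithm~\ref{alg:m1}. Lemma~\ref{lem:fund} with $E=Q$ then gives
\[
Q(z^{(t+1)}) = Q(z^{(t)}) + \Delta_{j^*(t)}(z^{(t)}) < Q(z^{(t)}).
\]
So the sequence $Q(z^{(0)}),Q(z^{(1)}),\dots$ is strictly decreasing along the trajectory.

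\emph{Termination.} The configuration space $\{0,1\}^{NK}$ is finite. A strictly decreasing sequence of values of $Q$ cannot repeat a configuration, since a repeat would force $Q(z^{(s)})<Q(z^{(s)})$. Hence the trajectory has at most $2^{NK}-1$ executed flips. (This is the bound noted in Section~\ref{sec:post_common}, with the state count read as $2^{NK}$.) Therefore the loop halts at some $z^*$.

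\emph{Feasibility of the output.} At termination the guard fails, so $\Delta_j(z^*)\ge 0$ for every $j$, and $z^*$ is a single-flip local minimum of $Q$. Suppose for contradiction that $z^*\notin\mathcal{F}_Q$. Then (A3) supplies a flip $j$ with $\Delta_j(z^*)<0$, which contradicts the terminal condition. Hence $z^*\in\mathcal{F}_Q$.

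None of these steps is a real obstacle; the whole difficulty is absorbed into hypothesis~(A3). The step needing care is the feasibility argument. It has to rely only on (A3) and not on (A1)--(A2) or on disjoint constraint supports, because the islanding constraints share variables and those ingredients are not available. The argument gives no claim about $C(z^*)$, since $Q$ contains no connectivity term, so the theorem asserts only QUBO feasibility.
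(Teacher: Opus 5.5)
Your proposal is correct and matches the paper's proof step for step: strict descent from the flip guard plus Lemma~\ref{lem:fund}, termination by the no-repeat argument on a finite hypercube, and $\mathcal{F}_Q$-membership by contradicting (A3) at the terminal single-flip local minimum. The only differences are cosmetic. You invoke (A3) directly, whereas the paper routes it through Theorem~\ref{thm:m3_qubo}, which merely restates it. You also count states as $2^{NK}$ rather than the appendix's $2^N$, which does not affect the finiteness argument.
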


\begin{proof}
\textit{Strict descent:}
Every flip is executed only when $\Delta_{j^*}(z^{(t)}) < 0$.
By Lemma~\ref{lem:fund},
\begin{equation}
  Q(z^{(t+1)})
  = Q(z^{(t)}) + \underbrace{\Delta_{j^*}(z^{(t)})}_{<\,0}
  < Q(z^{(t)}).
\end{equation}

\textit{Termination:}
Suppose configuration $z$ is visited at times $t_1 < t_2$.
Chaining strict descent along the $t_2 - t_1$ intervening steps yields
\begin{equation}
  Q\bigl(z^{(t_2)}\bigr) < Q\bigl(z^{(t_1)}\bigr).
  \label{eq:m1_term_strict}
\end{equation}
But $z^{(t_2)} = z^{(t_1)}$ implies
\begin{equation}
  Q\bigl(z^{(t_2)}\bigr) = Q\bigl(z^{(t_1)}\bigr),
\end{equation}
contradicting~\eqref{eq:m1_term_strict}.
Hence all visited configurations are distinct, and since
\begin{equation}
  \bigl|\{0,1\}^N\bigr| = 2^N < \infty,
\end{equation}
the algorithm terminates after finitely many flips.

\textit{QUBO feasibility:}
At termination~$z^*$, the greedy rule finds no improving flip, so
\begin{equation}
  \Delta_j(z^*) \;\geq\; 0 \quad \forall\, j.
  \label{eq:m1_local_min}
\end{equation}
By Theorem~\ref{thm:m3_qubo} (proved below independently of this result), every QUBO-infeasible configuration
possesses some $j$ with $\Delta_j(z) < 0$, contradicting~(\ref{eq:m1_local_min}).
Therefore $z^* \in \mathcal{F}_Q$.
\end{proof}

\begin{theorem}[DFS feasibility is not guaranteed]\label{thm:m1_nodfs}
M.~1 does not guarantee $z^* \in \mathcal{F}_C$.
\end{theorem}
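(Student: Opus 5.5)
The plan is to prove the statement by exhibiting a counterexample. We need a graph instance and a starting bitstring on which M.~1 terminates at a configuration $z^*$ with $C(z^*)>0$. The structural reason this can happen is that $Q$ in~(\ref{eq:qubo_only_post}) contains no connectivity term at all. Both $H_{\mathrm{cut}}$ and the algebraic penalties depend only on the assignment vector and are blind to whether each $G[V_k(z)]$ is connected. The stopping rule~(\ref{eq:greedy_flip}) looks only at $\Delta_j$. So any configuration that is a local minimum of $Q$ but has a disconnected island is a fixed point of M.~1. Starting M.~1 there, it executes zero flips and returns $z^*=z$ with $C(z^*)\geq 1$.

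I would build the instance concretely with $K=2$. Take a path graph $1-2-3-4$ and set $V_G=\{1,4\}$ with coherent groups $V_G^{(1)}=\{1\}$ and $V_G^{(2)}=\{4\}$, and $V_L=\{2,3\}$. Use $N_{\min}=2$ and $N_{G,\min}=N_{L,\min}=1$. Choose the partition $V_1=\{1,3\}$, $V_2=\{2,4\}$. It is one-hot, respects size, generator, load and coherency constraints (slacks set to $0$), so every penalty in $Q$ vanishes and $z\in\mathcal{F}_Q$. However, $G[V_1]$ and $G[V_2]$ are both disconnected, so $C(z)=2$.

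The main step is to show that $z$ is a local minimum of $Q$, i.e.\ $\Delta_j(z)\geq 0$ for every single flip $j$, including slack bits.
\begin{itemize}
\item Any single flip of an assignment variable $y_{i,k}$ breaks one-hot, which costs at least $\lambda_h$ in the squared form.
\item Any single slack-bit flip raises its squared penalty by at least $\lambda$ times $1$.
\item The cut can drop by at most $\Delta_{\max}$ under a single flip.
\end{itemize}
By (A1) and an analogous $\lambda_h>\Delta_{\max}$, every $\Delta_j(z)>0$. Hence M.~1 halts immediately at $z^*=z$ with $C(z^*)=2\neq 0$, so $z^*\notin\mathcal{F}_C$.

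The only care point is the bookkeeping that every flip type incurs a penalty exceeding the maximal cut gain. I expect this to be routine given the integrality of the coefficients noted in the standing conditions.

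To avoid relying on a hand-picked starting point, I would remark that such a $z$ is a legitimate QAOA sample, since the measurement distribution has full support generically. Alternatively, a neighboring infeasible bitstring descends into it. That gives the stated non-guarantee.
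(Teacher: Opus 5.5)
Your proof is correct, and it is more concrete than the paper's. The paper first observes that the single-flip change $\Delta_j(z)$ does not involve $\Delta^C_j(z)$. It then posits an instance whose \emph{QUBO-optimal} partition $\hat z$ has a disconnected island, and starts M.~1 in the basin of attraction of $\hat z$, but it never exhibits such an instance.

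You instead give an explicit witness: the four-bus path with $V_1=\{1,3\}$, $V_2=\{2,4\}$, all penalties zero, and $C(z)=2$. You also supply the key fact the paper leaves implicit. Once the penalty weights exceed $\Delta_{\max}$ (already part of (A1), which covers $\lambda_h$), every point of $\mathcal{F}_Q$ is a strict single-flip local minimum of $Q$. The reason is that an assignment flip breaks a one-hot equality, a slack flip breaks its own squared equality, and slack bits do not enter $H_{\mathrm{cut}}$. This is the same tight-equality mechanism the paper later isolates in Theorem~\ref{thm:m3_fail}, and it makes your witness fully checked rather than hypothetical.

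What the paper's formulation aims at, and yours does not deliver, is the stronger message that even the global minimizer of $Q$ can be disconnected, so no quality of descent could fix it. In your instance the QUBO optimum $\{1,2\},\{3,4\}$ is connected. The statement only asserts a non-guarantee, however, so a disconnected local minimum supplied as input is enough.

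Your closing remarks are unnecessary, because M.~1 accepts any bitstring as input. The claim that a neighboring infeasible bitstring descends into $z$ is also not checked: the greedy selection from a neighbor need not pick the reverting flip unless you compare it against the competing $\Delta_j$ values. You can simply drop both remarks.
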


\begin{proof}
Define the connectivity violation count
\begin{equation}
  C(z) \;:=\; \sum_{k=1}^{K}\bigl(c(G[V_k(z)]) - 1\bigr) \;\geq\; 0,
  \label{eq:C_def}
\end{equation}
where $c(H)$ denotes the number of connected components of graph $H$.
Since $Q(z) = H_{\mathrm{cut}}(z) + \sum_{m}\lambda_m H^m(z)$ does not contain
$C(z)$, the single-flip change satisfies
\begin{equation}
  \Delta_j(z)
  = \Delta^{H_{\mathrm{cut}}}_j(z)
    + \sum_{m} \lambda_m\,\Delta^m_j(z),
  \label{eq:m1_delta_decomp}
\end{equation}
independently of $\Delta^C_j(z)$.
Thus a connectivity-improving flip ($\Delta^C_j(z) < 0$) may still satisfy
\begin{equation}
  \Delta_j(z) = \Delta^{H_{\mathrm{cut}}}_j(z) + \sum_{m}\lambda_m\,\Delta^m_j(z) \geq 0,
\end{equation}
and be rejected, whereas a connectivity-worsening flip ($\Delta^C_j(z) > 0$) may satisfy
\begin{equation}
  \Delta_j(z) < 0
\end{equation}
and be accepted.

To exhibit a DFS-infeasible local minimum, consider any network instance
in which the QUBO-optimal partition $\hat{z}$ has at least one island
whose induced subgraph is disconnected.  Starting from any $z^{(0)}$ in
the basin of attraction of $\hat{z}$ under the greedy $Q$-descent,
M.~1 converges to $z^* = \hat{z}$, for which
\begin{equation}
  \Delta_j(z^*) \geq 0\ \ \forall\, j
  \quad\text{(by~\eqref{eq:m1_local_min})},
  \qquad
  C(z^*) > 0.
\end{equation}
Hence $z^* \notin \mathcal{F}_C$.
\end{proof}

\subsection{M.~2: Unconstrained Two-Stage Repair}

M.~2 appends a Stage~2 connectivity repair to M.~1. Stage~2 picks the candidate
in $\mathcal{J}(z)$ with the smallest $\Delta_j(z)$ and flips it
unconditionally, without checking the sign of $\Delta_{j^*}$.

\begin{theorem}[Stage~2 does not guarantee descent]\label{thm:m2_nodescent}
Under~(A1) and~(A3), a Stage~2 flip can strictly increase $Q$.
\end{theorem}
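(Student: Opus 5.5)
This is an existence statement: it claims that a Stage~2 flip \emph{can} raise $Q$, not that it always does. The plan is therefore to combine a structural argument with a witness. The structural part identifies exactly when Stage~2 increases $Q$. The witness shows such a configuration is reachable under~(A1) and~(A3).

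\emph{Step 1 (structural argument).} Stage~2 is entered only after Stage~1 has terminated, so the current configuration $z$ satisfies $\Delta_j(z)\ge 0$ for every $j$; this is the local-minimum condition~\eqref{eq:m1_local_min}. In particular $\min_{j\in\mathcal{J}(z)}\Delta_j(z)\ge 0$ whenever $\mathcal{J}(z)\neq\emptyset$. By~(A3), such a $z$ lies in $\mathcal{F}_Q$, so all penalty terms vanish. Lemma~\ref{lem:fund} then gives $Q(z\oplus e_{j^*})=Q(z)+\Delta_{j^*}(z)\ge Q(z)$, with equality only if $\Delta_{j^*}(z)=0$. Because Stage~2 flips unconditionally, it suffices to exhibit a reachable $z$ with $C(z)>0$ and $\Delta_j(z)>0$ for all $j\in\mathcal{J}(z)$.

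\emph{Step 2 (lower bound on the increase).} Any single flip of an assignment variable $y_{i,k}$ starting from a QUBO-feasible $z$ breaks the one-hot constraint of bus $i$. If $y_{i,k}=1$ is set to $0$, bus $i$ becomes unassigned. If $y_{i,k}=0$ is set to $1$, bus $i$ becomes doubly assigned. In either case the contribution of $\bar H_h$ rises by $(\pm1)^2=1$. The flip can also affect other penalties, but those contributions are nonnegative because the other penalties were zero at $z$. Decomposing $\Delta_j$ as in~\eqref{eq:m1_delta_decomp}, I would bound
\[
\Delta_j(z)\;\ge\;\lambda_h-\bigl|\Delta^{H_{\mathrm{cut}}}_j(z)\bigr|\;\ge\;\lambda_h-\Delta_{\max}\;>\;0,
\]
where the last inequality is~(A1). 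Slack-variable flips leave $C$ unchanged, so they never belong to $\mathcal{J}(z)$. Hence every connectivity-improving flip from a feasible Stage~1 output strictly increases $Q$.

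\emph{Step 3 (witness).} It remains to show that a QUBO-feasible local minimum with $C(z)>0$ and $\mathcal{J}(z)\neq\emptyset$ occurs. Here I would reuse the construction in the proof of Theorem~\ref{thm:m1_nodfs}: take an instance whose QUBO-optimal partition has a disconnected island, for example a small path graph with a low-weight edge, where the cheapest balanced assignment splits one island. Reassigning a single bus in the stray component produces a one-hot-violating state. That state has fewer components, since the isolated bus leaves its island, so $\mathcal{J}(z)\neq\emptyset$. Step~2 then shows the Stage~2 move strictly increases $Q$.

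\emph{Main obstacle.} The delicate point is confirming that some single flip actually lowers $C$ under the definition $\omega(G[V_k])-1$, with unassigned buses simply dropped. It also requires that Stage~1 genuinely stops at the disconnected optimum. I would handle this by writing out a concrete toy instance and computing $C$ and $Q$ explicitly.
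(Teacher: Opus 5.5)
Your argument is correct, and its core is the same inequality the paper uses. At a Stage~1 output, (A3) gives $z\in\mathcal{F}_Q$, so every penalty is at its zero minimum. A flip that pushes some constraint off target raises that penalty by at least $1$, and the other penalties cannot drop. (A1) then makes this increase beat the cut change, which is at least $-\Delta_{\max}$, and Stage~2 flips unconditionally, so $Q$ rises.

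The difference is which constraint you charge the increase to. The paper posits an instance with $\mathcal{J}(z)=\{j^*\}$, where $j^*$ ``reassigns a bus to a neighboring island'' and pushes island $k'$ below $N_{\min}$, and charges it to $\lambda_M$. You charge $\lambda_h$: any single assignment-variable flip from a one-hot-feasible state leaves bus $i$ unassigned or doubly assigned, and slack flips cannot lie in $\mathcal{J}(z)$.

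This buys a stronger, instance-free statement: under (A1) and (A3), every Stage~2 move of M.~2 out of a Stage~1 output strictly increases $Q$. It is the reasoning of Theorem~\ref{thm:m3_fail} specialized to the one-hot constraints. It also avoids two soft spots in the paper's version:
\begin{itemize}
\item moving a bus between islands is two bit flips, not one, in the one-hot encoding;
\item under the hinge form, a removal breaks the min-size constraint only when the island already sits at $N_{\min}$, whereas every assignment flip breaks the one-hot equality in either form.
\end{itemize}
So your Steps~1--2 already carry the theorem, and Step~3 is only a non-vacuity check.

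That check is simpler than you plan. You do not need a QUBO-optimal disconnected partition, nor an argument that Stage~1 stops there. Your Step~2 covers assignment flips, and a slack flip moves its own squared penalty from $0$ to at least $1$ with no cut change. So under (A1) every $z\in\mathcal{F}_Q$ is already a strict local minimum of $Q$, and you may start M.~2 at any QUBO-feasible partition with a disconnected island.

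Your path-graph heuristic is also off. For $K=2$, a stray component of one island has all its outside neighbors in the other island, so moving it across never increases the cut. Disconnectedness must therefore be forced by the constraints, not by a low-weight edge.

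A concrete witness: take the path $G_1-G_2-L_1-L_2$ with $K=2$, $N_{\min}=2$, $N_{G,\min}=N_{L,\min}=1$, and $G_1,G_2$ in different coherent groups. Set $V_1=\{G_1,L_2\}$, $V_2=\{G_2,L_1\}$, with all slacks $0$.
\begin{itemize}
\item This $z$ is QUBO-feasible with $C(z)=1$.
\item Flipping $y_{L_2,1}$ to $0$, with unassigned buses dropped as you propose, gives $C=0$. Its $Q$-cost is $\lambda_h+\lambda_M+\lambda_L$, with no cut change.
\item Hence $\mathcal{J}(z)\neq\emptyset$, and by your Step~2 whichever candidate Stage~2 selects raises $Q$.
\end{itemize}
Whether this toy instance also satisfies the global hypothesis (A3) is a separate check that neither your sketch nor the paper's proof performs.
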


\begin{proof}
Let $z$ be the Stage~1 output, so $z \in \mathcal{F}_Q$ and
$C(z) > 0$.  Construct an instance in which the set of connectivity-improving
flips is a singleton: $\mathcal{J}(z) = \{j^*\}$.  Specifically, let $j^*$
reassign a bus from island~$k'$ to a neighboring island, merging two disconnected
components but reducing island~$k'$ below the minimum required size~$N_{\min}$, so that
the min-size constraint~$m$ is violated.  Because $\mathcal{J}(z)=\{j^*\}$,
Stage~2 must select~$j^*$.  Then
\begin{equation}
  H^m(z) = 0,
  \qquad
  H^m(z \oplus e_{j^*}) \geq 1,
  \qquad
  \Delta^m_{j^*}(z) \geq 1.
\end{equation}
Since $z \in \mathcal{F}_Q$ every penalty is zero, so flipping $j^*$ cannot decrease any
penalty term: $\Delta^{m'}_{j^*}(z) \geq 0$ for all $m'$.  Therefore, using~(A1),
\begin{equation}
  \Delta_{j^*}(z)
  = \Delta^{H_{\mathrm{cut}}}_{j^*}(z)
    + \sum_{m'}\lambda_{m'}\Delta^{m'}_{j^*}(z)
  \geq -\Delta_{\max} + \lambda_m\cdot 1
  > 0.
\end{equation}
Stage~2 executes $j^*$, so by Lemma~\ref{lem:fund},
\begin{equation}
  Q(z \oplus e_{j^*}) = Q(z) + \Delta_{j^*}(z) > Q(z).
\end{equation}
\end{proof}

\begin{theorem}[$Q$ need not be globally monotone]\label{thm:m2_nonmono}
Under~(A1) and~(A3), there exist instances for which the sequence
$Q(z^{(0)}), Q(z^{(1)}), \ldots$ produced by M.~2 is not monotonically non-increasing.
\end{theorem}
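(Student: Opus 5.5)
The plan is to reduce the statement to Theorem~\ref{thm:m2_nodescent}: it suffices to exhibit one instance and one run of M.~2 in which some executed move strictly increases $Q$. I will reuse the instance constructed in the proof of Theorem~\ref{thm:m2_nodescent}. First, choose an initial bitstring $z^{(0)}$ that lies in the basin of a Stage~1 local minimum $z$ with $z\in\mathcal{F}_Q$ (available under~(A3)) and $C(z)>0$. The simplest choice is $z^{(0)}=z$ itself, so that Stage~1 performs no flip. Next, arrange the network so that $\mathcal{J}(z)=\{j^*\}$, where $j^*$ merges two components of a disconnected island but drops island $k'$ below $N_{\min}$. This is the same construction as in Theorem~\ref{thm:m2_nodescent}.

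Since Stage~1 terminates at $z$ with $C(z)>0$, Algorithm~\ref{alg:m2} does not return at the connectivity check. It therefore executes the Stage~2 flip unconditionally, giving $z^{(t+1)}=z\oplus e_{j^*}$. By Theorem~\ref{thm:m2_nodescent}, which uses (A1), $\Delta_{j^*}(z)\ge \lambda_m-\Delta_{\max}>0$. Lemma~\ref{lem:fund} then gives $Q(z^{(t+1)})>Q(z^{(t)})$, so the sequence $Q(z^{(0)}),Q(z^{(1)}),\ldots$ contains a strict increase and is not monotonically non-increasing. For additional context, one can note that the subsequent Stage~1 restart strictly decreases $Q$ again, because $z\oplus e_{j^*}\notin\mathcal{F}_Q$ and (A3) supplies an improving flip. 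This shows the sequence is genuinely non-monotone rather than merely increasing. It is not needed for the claim, however.

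The main obstacle is certifying that such an instance actually exists, with the specific features $\mathcal{J}(z)=\{j^*\}$, $z\in\mathcal{F}_Q$ a Stage~1 local minimum, and $C(z)>0$. Theorem~\ref{thm:m2_nodescent} asserts this construction somewhat schematically. My first attempt would be a small explicit graph: a path or cycle of a few buses with $K=2$, where island $k'$ consists of exactly $N_{\min}$ buses in two components and the other island already contains all the needed generators and loads. With this setup, the only $C$-reducing single flip is the reassignment of one of those buses. I would then check directly that no $Q$-decreasing flip exists at $z$. This holds because every penalty is zero, and (A1) makes any penalty-creating flip cost more than $\Delta_{\max}$. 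Cut-only improving flips can be excluded by choosing edge weights so that $z$ is cut-locally optimal among penalty-free flips.
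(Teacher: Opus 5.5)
Your proposal is correct and matches the paper's proof: the paper likewise lets Stage~1 end at a QUBO-feasible $z$ with $C(z)>0$, uses the instance of Theorem~\ref{thm:m2_nodescent} to get $\Delta_{j^*}(z)>0$, and concludes that the unconditional Stage~2 flip strictly increases $Q$. The existence question in your last paragraph is easier than you expect: all penalties vanish on $\mathcal{F}_Q$, and every flip of an assignment bit $y_{i,k}$ raises the one-hot penalty to $1$, so (A1) gives $\Delta_j(z)\ge\lambda_h-\Delta_{\max}>0$ for every $j$; hence any $z\in\mathcal{F}_Q$ with $C(z)>0$ and $\mathcal{J}(z)\neq\emptyset$ is automatically a Stage~1 local minimum from which Stage~2 increases $Q$, and you need neither the singleton condition nor any edge-weight tuning.
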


\begin{proof}
Let $z^{(0)}, z^{(1)}, \ldots, z^{(T)}$ denote the Stage~1 sub-sequence.
By Theorem~\ref{thm:m1_descent},
\begin{equation}
  Q\bigl(z^{(0)}\bigr) > Q\bigl(z^{(1)}\bigr) > \cdots > Q\bigl(z^{(T)}\bigr).
\end{equation}
At $z^{(T)}$ Stage~1 has no improving flip, so $z^{(T)} \in \mathcal{F}_Q$.
Suppose $C(z^{(T)}) > 0$.  Stage~2 selects $j^* \in \mathcal{J}(z^{(T)})$
unconditionally and sets
\begin{equation}
  z^{(T+1)} = z^{(T)} \oplus e_{j^*}.
\end{equation}
By Theorem~\ref{thm:m2_nodescent} there exist instances with $\Delta_{j^*}(z^{(T)}) > 0$, hence
\begin{equation}
  Q\bigl(z^{(T+1)}\bigr)
  = Q\bigl(z^{(T)}\bigr) + \Delta_{j^*}\bigl(z^{(T)}\bigr)
  > Q\bigl(z^{(T)}\bigr).
\end{equation}
Thus $Q(z^{(T)}) < Q(z^{(T+1)})$, contradicting monotone non-increase of~$(Q(z^{(t)}))_t$.
\end{proof}

\begin{theorem}[Nontermination without a finite outer bound~$K$]\label{thm:m2_cycle}
Under~(A1) and~(A3), without an outer iteration bound~$K$, M.~2 may cycle indefinitely.
\end{theorem}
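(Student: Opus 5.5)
The plan is to exhibit an explicit instance on which the Stage~1/Stage~2 alternation of M.~2, run with no outer bound, returns to a configuration it has already visited. M.~2 is a deterministic map (greedy $\arg\min$ with a fixed tie-breaking rule), so one revisit forces the trajectory to repeat forever. The argument is existential, in the same constructive style as Theorem~\ref{thm:m2_nodescent}: it is enough to build one network and one starting point.

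\emph{Step 1: set up the Stage~1 fixed point and the forced Stage~2 move.} Let $z^{(T)}$ be a Stage~1 output with $z^{(T)}\in\mathcal{F}_Q$ (by (A3)) and $C(z^{(T)})>0$. Use the construction from the proof of Theorem~\ref{thm:m2_nodescent}: $\mathcal{J}(z^{(T)})=\{j^*\}$, where flipping $j^*$ merges two components of island $k'$ but pushes island $k'$ below $N_{\min}$. Stage~2 therefore executes $j^*$, and by (A1) this gives $Q(z^{(T)}\oplus e_{j^*})>Q(z^{(T)})$.

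\emph{Step 2: force Stage~1 to undo the repair.} At $z'=z^{(T)}\oplus e_{j^*}$ the min-size penalty is positive. Arrange the instance so that the unique most-improving flip is $j^*$ itself. Flipping $j^*$ back gives $\Delta_{j^*}(z')=-\Delta_{j^*}(z^{(T)})<0$, by Lemma~\ref{lem:fund} applied twice. This reversal is the largest available decrease whenever every other single flip at $z'$ either fails to restore the size constraint or incurs an additional penalty of weight $\lambda_m>\Delta_{\max}$. After this one step, Stage~1 is back at $z^{(T)}$, a local minimum, so it stops there.

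\emph{Step 3: conclude.} Since $C(z^{(T)})>0$ still holds, the next outer iteration repeats Step~1 with the same $j^*$. The trajectory is therefore periodic,
\[
z^{(T)}\to z'\to z^{(T)}\to\cdots,
\]
and without a bound $K$ the algorithm never reaches either return statement. This is also consistent with Theorem~\ref{thm:m2_nonmono}: no fixed energy decreases along the run, so the no-cycle argument of Theorem~\ref{thm:m1_descent} does not apply.

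\emph{Main obstacle.} The hard part is Step~2, namely making the greedy Stage~1 choice at $z'$ select exactly $j^*$. Other variables share the coupled one-hot, size, generation and coherency penalties, and they could offer an equal or better decrease that escapes the cycle. My first attempt would be a concrete small graph: a path-like island $k'$ of exactly $N_{\min}$ buses plus a pendant bus, with $K=2$ and slack variables fixed so that no slack flip alone restores $H^m=0$. I would then check by hand that every alternative flip at $z'$ increases some other penalty. If that fails, a fallback is to use a tie-breaking rule that favors the lowest index and place $j^*$ first.
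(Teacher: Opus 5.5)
Your proposal is correct and follows essentially the same route as the paper: both build a period-two cycle $z_A\to z_B\to z_A$. There, the Stage~2 connectivity flip raises $Q$ exactly as in Theorem~\ref{thm:m2_nodescent}, Stage~1 descends back to the same local minimum, and the deterministic Stage~2 choice then repeats. The only difference is that you make the return step concrete as the single reversing flip $j^*$ with $\Delta_{j^*}(z')=-\Delta_{j^*}(z^{(T)})<0$, whereas the paper simply assumes that $z_B$ lies in the Stage~1 basin of attraction of $z_A$; your note that this greedy selection must be checked on an explicit instance is, if anything, more careful than the paper's argument.
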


\begin{proof}
In the constructed instance, $Q$ need not be monotone
(Theorem~\ref{thm:m2_nonmono}), so revisiting a configuration is not excluded
by an energy argument.  A concrete cycle arises as follows.
Let $z_A$ be the unique local minimum of $Q$ reachable from $z^{(0)}$,
with $C(z_A) > 0$.  Stage~2 flips $z_A$ to $z_B$, so that, as in Theorem~\ref{thm:m2_nodescent},
\begin{equation}
  Q(z_B) > Q(z_A).
  \label{eq:m2_cycle_stage2_increase}
\end{equation}
Stage~1 then descends $Q$ from $z_B$; if the basin of attraction of $z_A$ under Stage~1 contains
$z_B$, Stage~1 returns to~$z_A$.  Stage~2 then repeats the same flip to $z_B$, and the configuration
trajectory forms the periodic pattern
\begin{equation}
  z_A \xrightarrow{\text{Stage~2}} z_B \xrightarrow{\text{Stage~1}} z_A \longrightarrow \cdots,
  \label{eq:m2_cycle_diagram}
\end{equation}
which repeats indefinitely.  An outer bound $K$ is therefore necessary to guarantee termination.
\end{proof}

\subsection{M.~3: Constrained Two-Stage Descent}

\begin{theorem}[Global strict descent]\label{thm:m3_descent}
Every flip executed by M.~3, in either Stage~1 or Stage~2, strictly decreases
$Q$.
\end{theorem}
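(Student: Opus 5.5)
The argument is a case analysis on the two kinds of executed moves in Algorithm~\ref{alg:m3}, followed by Lemma~\ref{lem:fund}. The only ingredient is the guard under which each flip is executed; no condition among (A1)--(A5) should be needed.

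\textbf{Stage~1 flips.} Stage~1 is the common greedy rule (\ref{eq:greedy_select})--(\ref{eq:greedy_flip}). By construction it flips $j^*(t)$ only when $\Delta_{j^*(t)}(z^{(t)})<0$. Lemma~\ref{lem:fund} then gives $Q(z^{(t+1)})=Q(z^{(t)})+\Delta_{j^*(t)}(z^{(t)})<Q(z^{(t)})$, exactly as in the descent part of Theorem~\ref{thm:m1_descent}. That part did not use (A3), so I will cite it directly or repeat the one-line calculation.

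\textbf{Stage~2 flips.} Stage~2 first chooses $j^*=\arg\min_{j\in\mathcal{J}(z)}\Delta_j(z)$. The rule (\ref{eq:m3_stage2}) executes $z\oplus e_{j^*}$ only in the branch $\Delta_{j^*}(z)<0$. In the complementary branch $\Delta_{j^*}(z)\ge 0$ the algorithm terminates and no flip is made, so that branch contributes no executed move. For an executed Stage~2 flip, Lemma~\ref{lem:fund} again yields $Q(z\oplus e_{j^*})=Q(z)+\Delta_{j^*}(z)<Q(z)$.

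\textbf{Edge cases.} If $\mathcal{J}(z)=\emptyset$, the $\arg\min$ is undefined. I will adopt the convention that the minimum over an empty set is $+\infty$, so the algorithm terminates without flipping. The claim is then vacuous in that case.

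\textbf{Whole trajectory.} Every step of Algorithm~\ref{alg:m3} is a Stage~1 flip, a Stage~2 flip, or a termination. By induction on the step index, the sequence of $Q$-values along the full trajectory, across all restarts of Stage~1, is therefore strictly decreasing. This gives the global statement.

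There is no real obstacle here. The only point needing care is that the statement quantifies over \emph{executed} flips, so the terminating branch of (\ref{eq:m3_stage2}) and the empty-$\mathcal{J}$ case must be set aside explicitly rather than argued about.
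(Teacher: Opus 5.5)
Your proof is correct and follows essentially the same route as the paper: a case split on Stage~1 and Stage~2 flips, each executed only under the guard $\Delta_{j^*}(z)<0$, combined with Lemma~\ref{lem:fund}. Your explicit treatment of the empty-$\mathcal{J}(z)$ case and of the terminating branch is a small tidying that the paper leaves implicit, and it does not change the argument.
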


\begin{proof}
In Stage~1 a flip is executed only when $\Delta_{j^*}(z^{(t)}) < 0$.  By
Lemma~\ref{lem:fund},
\begin{equation}
  Q(z^{(t+1)})
  = Q(z^{(t)}) + \underbrace{\Delta_{j^*}(z^{(t)})}_{<\,0}
  < Q(z^{(t)}).
\end{equation}
In Stage~2 a flip is executed only when
\begin{equation}
  j^* \in \mathcal{J}(z),
  \qquad
  \Delta_{j^*}(z) < 0
\end{equation}
(see~(\ref{eq:m3_stage2})).  Lemma~\ref{lem:fund} then yields the same strict inequality with
$(z^{(t)},z^{(t+1)})$ replaced by the Stage~2 pair.  Every executed flip falls into one of these two
cases, so $Q$ strictly decreases at every step.
\end{proof}

\begin{theorem}[No cycles and finite termination]\label{thm:m3_term}
M.~3 visits no configuration twice and terminates in at most $2^N - 1$ flips,
with no outer iteration bound required.
\end{theorem}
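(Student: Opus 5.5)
The plan mirrors the termination argument in Theorem~\ref{thm:m1_descent}, with Theorem~\ref{thm:m3_descent} supplying the strict-descent property along the whole trajectory. Let $z^{(0)}, z^{(1)}, \ldots$ be the sequence of configurations after each executed flip. Stage~1 and Stage~2 flips are interleaved, but Theorem~\ref{thm:m3_descent} shows that both kinds strictly decrease $Q$. Hence $Q(z^{(t+1)}) < Q(z^{(t)})$ for every executed flip, regardless of the stage that produced it.

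\emph{No revisits.} Suppose $z^{(t_1)} = z^{(t_2)}$ with $t_1 < t_2$. Chaining the strict inequalities over the $t_2 - t_1 \ge 1$ intervening flips gives $Q(z^{(t_2)}) < Q(z^{(t_1)})$. Equality of the configurations forces equality of the energies, which is a contradiction. So all visited configurations are pairwise distinct.

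\emph{Finite termination and the bound.} The state space $\{0,1\}^N$ (standing notation of the appendix) has $2^N$ elements, and the trajectory contains no repeats. So at most $2^N$ configurations are visited, meaning at most $2^N - 1$ flips occur. To confirm that the algorithm halts rather than idling, I would check the control flow of Algorithm~\ref{alg:m3}. Every pass through the outer loop either executes a flip or returns. Stage~1 exits only when no improving flip exists. The algorithm then returns if $C(z)=0$. Otherwise Stage~2 either flips, which counts toward the bound, or returns, by~(\ref{eq:m3_stage2}). One degenerate case needs care: $\mathcal{J}(z)=\emptyset$, where the $\arg\min$ is undefined. I would adopt the convention $\min_{\emptyset} = +\infty$, so this case falls into the terminate branch. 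Consequently, the number of outer iterations is bounded by the number of flips plus one.

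Since this bound comes purely from strict descent on a fixed function over a finite set, no outer cap such as the $K$ used in M.~2 is needed. The only subtle point I expect is this control-flow check, in particular that no iteration can pass without either a flip or a return. It is routine once the empty-$\mathcal{J}$ convention is fixed.
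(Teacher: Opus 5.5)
Your proposal is correct and follows the paper's proof essentially verbatim: Theorem~\ref{thm:m3_descent} gives strict descent of $Q$ at every executed flip, chaining these inequalities gives a contradiction that rules out revisits, and a pigeonhole count over the $2^N$ configurations bounds the number of flips. Your explicit control-flow check of Algorithm~\ref{alg:m3}, including the convention for the case $\mathcal{J}(z)=\emptyset$, is a small but worthwhile addition that the paper leaves implicit.
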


\begin{proof}
Suppose, toward a contradiction, that some configuration is visited twice:
\begin{equation}
  z^{(t_1)} = z^{(t_2)},
  \qquad
  t_1 < t_2.
\end{equation}
By Theorem~\ref{thm:m3_descent}, every executed flip strictly decreases~$Q$.  Chaining these strict
inequalities from time~$t_1$ to time~$t_2$ yields
\begin{equation}
  Q\bigl(z^{(t_1)}\bigr)
  > Q\bigl(z^{(t_1+1)}\bigr)
  > \cdots
  > Q\bigl(z^{(t_2)}\bigr).
  \label{eq:m3_term_Q_chain}
\end{equation}
In particular, $Q(z^{(t_2)}) < Q(z^{(t_1)})$.  Yet \eqref{eq:m3_term_Q_chain} contradicts the identity
\begin{equation}
  z^{(t_1)} = z^{(t_2)}
  \quad\Longrightarrow\quad
  Q\bigl(z^{(t_1)}\bigr) = Q\bigl(z^{(t_2)}\bigr),
\end{equation}
so no configuration is visited twice.

The hypercube has finite cardinality
\begin{equation}
  \bigl|\{0,1\}^N\bigr| = 2^N,
\end{equation}
whence any trajectory with pairwise-distinct visited states has length at most $2^N - 1$ flips, and the
algorithm terminates.
\end{proof}

\begin{theorem}[Conditional QUBO feasibility at Stage~1 termination]\label{thm:m3_qubo}
Under condition~(A3), every Stage~1 local minimum belongs to
$\mathcal{F}_Q$.
\end{theorem}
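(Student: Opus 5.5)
The argument is a contrapositive that uses~(A3) directly. First we pin down what ``Stage~1 local minimum'' means operationally. By the update rule~(\ref{eq:greedy_flip}), Stage~1 halts at a configuration $z^*$ exactly when the greedy selection~(\ref{eq:greedy_select}) returns an index with $\Delta_{j^*}(z^*) \geq 0$. Since $j^*$ minimizes $\Delta_j(z^*)$ over all $j$, this is equivalent to
\[
  \Delta_j(z^*) \;\geq\; 0 \qquad \forall\, j .
\]
In words, no single-variable flip strictly decreases $Q$. Termination itself is not at issue here, because it was already established in Theorem~\ref{thm:m3_term}, or in the termination part of Theorem~\ref{thm:m1_descent}, neither of which relies on the present result. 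So we only need to characterize the halting state.

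Next, suppose for contradiction that $z^* \notin \mathcal{F}_Q$. Condition~(A3) then supplies an index $j$ with $\Delta_j(z^*) < 0$. By Lemma~\ref{lem:fund}, this gives $Q(z^*\oplus e_j) < Q(z^*)$, which contradicts the local-minimality inequality above. Hence $z^* \in \mathcal{F}_Q$.

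We should also note that the argument is identical for every entry into Stage~1. This covers the initial descent, each restart after an executed Stage~2 flip in M.~3, and the use of the same subroutine in M.~1 and M.~2. The reason is that the proof depends only on the halting condition of the greedy rule and not on the starting point. This also confirms there is no circularity with Theorem~\ref{thm:m1_descent}, which invokes the present theorem only for its feasibility claim.

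The only delicate point, rather than a genuine obstacle, is the scope of~(A3). It is phrased for the QUBO energy $Q$ of~(\ref{eq:qubo_only_post}) with fixed weights, so the argument applies to Stage~1 as defined with $Q$. It does not transfer to the relaxed energies $Q^{(r)}$ of M.~4 unless (A3) is assumed for those as well. We would also remark that no appeal to~(A1)--(A2) or to any independence of constraints is needed: all of the coupled-constraint difficulty is absorbed into hypothesis~(A3).
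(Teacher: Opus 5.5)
Your proof is correct and matches the paper's: both argue by contrapositive, using the fact that a Stage~1 local minimum satisfies $\Delta_j(z)\ge 0$ for all $j$, while (A3) directly supplies an improving flip at any $z\notin\mathcal{F}_Q$. Your side remarks are consistent with how the paper uses the result, namely that there is no circularity with the M.~1 theorem and that (A3) must be reinterpreted with $\Delta^{(r)}_j$ for M.~4.
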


\begin{proof}
We argue by contrapositive: it suffices to show that every $z \notin \mathcal{F}_Q$ admits
some flip~$j$ with $\Delta_j(z) < 0$, hence $z$ cannot be a Stage~1 local minimum.
Condition~(A3) gives exactly this implication: whenever $z \notin \mathcal{F}_Q$, there exists a
single-variable flip $j$ such that $\Delta_j(z) < 0$.  Therefore no QUBO-infeasible configuration
can satisfy the Stage~1 local-minimum condition $\Delta_j(z) \ge 0$ for all $j$.  Hence every
Stage~1 local minimum belongs to $\mathcal{F}_Q$.
\end{proof}

\begin{theorem}[Stage~2 failure under tight equality constraints]\label{thm:m3_fail}
Assume (A1), (A3), and integer coefficients $a^m_j\in\mathbb{Z}$.
Suppose Stage~1 has terminated at a configuration~$z$, so that $z\in\mathcal{F}_Q$ (by
Theorem~\ref{thm:m3_qubo}), and suppose at least one constraint $m$ is tight
(i.e.\ $S^m(z) = b^m$, equivalently $H^m(z)=0$, with $\mathcal{V}^m$ the set of variable
indices participating in constraint~$m$).
Then $\Delta_j(z) > 0$ for every $j \in \mathcal{V}^m$, so no
connectivity-improving flip in $\mathcal{V}^m$ can pass the Stage~2 descent gate.
\end{theorem}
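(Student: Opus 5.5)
The argument decomposes the single-flip change $\Delta_j(z)$ at the Stage~1 output $z$ into its cut part and its penalty parts. The tight constraint $m$ then forces a penalty increase of at least $\lambda_m$, and (A1) makes this dominate any possible cut decrease. Concretely, I would start from the decomposition already used in the proof of Theorem~\ref{thm:m1_nodfs},
\[
\Delta_j(z)=\Delta^{H_{\mathrm{cut}}}_j(z)+\sum_{m'}\lambda_{m'}\Delta^{m'}_j(z),
\]
and bound each piece separately.

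\textbf{Step 1: the other penalties cannot help.} Since $z\in\mathcal{F}_Q$ by Theorem~\ref{thm:m3_qubo}, every penalty satisfies $H^{m'}(z)=0$. Because every penalty is nonnegative, this gives $\Delta^{m'}_j(z)=H^{m'}(z\oplus e_j)\ge 0$ for all $m'$. No penalty term can offset an increase in constraint $m$; this is the same observation used in the proof of Theorem~\ref{thm:m2_nodescent}.

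\textbf{Step 2: the tight constraint costs at least one unit.} Take $j\in\mathcal{V}^m$ with $a^m_j\neq 0$. Flipping $z_j$ changes the constraint sum by $S^m(z\oplus e_j)-S^m(z)=(1-2z_j)a^m_j$. This is a nonzero integer, so its magnitude is at least $1$. In the squared form this gives $H^m(z\oplus e_j)=\bigl((1-2z_j)a^m_j\bigr)^2\ge 1$. In the hinge form, leaving the degenerate interval $b^m_{\min}=b^m_{\max}=b^m$ likewise gives a penalty of at least $1$. Hence $\Delta^m_j(z)\ge 1$.

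\textbf{Step 3: combine using (A1).} By the definition~\eqref{eq:deltamax}, $\Delta^{H_{\mathrm{cut}}}_j(z)\ge-\Delta_{\max}$. Combining this with Steps 1 and 2 and with (A1),
\[
\Delta_j(z)\ \ge\ -\Delta_{\max}+\lambda_m\ >\ 0.
\]
Therefore the Stage~2 gate $\Delta_{j^*}(z)<0$ in~\eqref{eq:m3_stage2} rejects every $j\in\mathcal{V}^m$, and in particular every connectivity-improving one.

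\textbf{Main obstacle: the interpretation of tightness.} The argument requires that constraint $m$ be an equality after slack substitution, i.e.\ $b^m_{\min}=b^m_{\max}$. For a genuine hinge inequality with slack room, a unit move could stay inside $[b^m_{\min},b^m_{\max}]$, and Step 2 would fail. I will therefore state the squared/slack-equality reading from the standing notation explicitly.

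A second issue is that $\mathcal{V}^m$ should be read as the indices with $a^m_j\neq 0$; otherwise the change in $S^m$ could be zero. I would restrict to those indices.

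A third issue is the slack bits, which carry coefficients $2^b$. These only strengthen the bound, since $|2^b|\ge 1$. Note also that the cut term does not depend on slack bits, so for those indices the cut change is exactly $0$.
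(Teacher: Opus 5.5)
Your proposal is correct and matches the paper's proof step for step. You use nonnegativity of all penalties at the feasible point $z$, then the integrality bound $\Delta^m_j(z)\ge 1$ in both the squared and hinge forms, then $\Delta_j(z)\ge -\Delta_{\max}+\lambda_m>0$ via (A1). The caveats you flag, reading the hinge form with $b^m_{\min}=b^m_{\max}=b^m$ and restricting $\mathcal{V}^m$ to indices with $a^m_j\neq 0$, are exactly what the paper assumes implicitly when it writes $S^m(z\oplus e_j)=b^m+(1-2z_j)a^m_j\neq b^m$ and evaluates the hinge penalty as $|a^m_j|$.
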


\begin{proof}
\textit{Step 1 (all penalties are already zero).}
By Theorem~\ref{thm:m3_qubo}, Stage~1 termination together with~(A3) gives $z\in\mathcal{F}_Q$,
so $H^n(z) = 0$ for every constraint~$n$.  Because $H^n(z)\ge 0$ everywhere, it follows that
\begin{equation}
  \Delta^n_j(z) \;=\; H^n(z\oplus e_j) \;\geq\; 0
  \qquad \forall\,n,\ \forall\,j.
  \label{eq:m3_fail_all_nonneg}
\end{equation}

\textit{Step 2 (the targeted penalty strictly increases).}
Fix $j\in\mathcal{V}^m$.  Since $z_j\in\{0,1\}$ and $a^m_j\neq 0$,
\begin{equation}
  S^m(z \oplus e_j) \;=\; b^m + (1-2z_j)a^m_j \;\neq\; b^m.
  \label{eq:m3_fail_flip}
\end{equation}
Using $H^m(z)=0$ together with integrality of $a^m_j$ (so $|a^m_j|\ge 1$):
\begin{itemize}
  \item Squared form: $\Delta^m_j(z) = \bigl(S^m(z\oplus e_j)-b^m\bigr)^2 = (a^m_j)^2 \ge 1$.
  \item Hinge form: $\Delta^m_j(z) = \max\!\bigl(0,\,S^m(z\oplus e_j)-b^m_{\max},\,
        b^m_{\min}-S^m(z\oplus e_j)\bigr) = |a^m_j| \ge 1$.
\end{itemize}
In either case,
\begin{equation}
  \Delta^m_j(z) \;\geq\; 1.
  \label{eq:m3_fail_delta_m}
\end{equation}

\textit{Step 3 (total change is strictly positive).}
The full decomposition of $\Delta_j(z)$ is
\begin{equation}
  \Delta_j(z)
  \;=\; \Delta^{H_{\mathrm{cut}}}_j(z) + \sum_{n}\lambda_n\,\Delta^n_j(z).
  \label{eq:m3_fail_full_decomp}
\end{equation}
Dropping the nonnegative terms for $n\neq m$ in~\eqref{eq:m3_fail_full_decomp}
by~\eqref{eq:m3_fail_all_nonneg}, and using $\Delta^{H_{\mathrm{cut}}}_j(z)\ge -\Delta_{\max}$
together with~\eqref{eq:m3_fail_delta_m} and~(A1),
\begin{equation}
\begin{aligned}
  \Delta_j(z)
  &\;\geq\; \Delta^{H_{\mathrm{cut}}}_j(z) + \lambda_m\,\Delta^m_j(z) \\
  &\;\geq\; -\Delta_{\max} + \lambda_m
  \;>\; 0.
\end{aligned}
  \label{eq:m3_fail_delta_total}
\end{equation}
Since the Stage~2 gate requires $\Delta_{j^*}(z) < 0$ and no $j \in \mathcal{V}^m$ satisfies this,
Stage~2 cannot execute any flip in $\mathcal{V}^m$.
\end{proof}

\begin{remark}
All constraints in this QUBO are equality constraints ($S^m(z) = b^m$).  If Stage~1 terminates
with every constraint satisfied, then $H^m(z)=0$ for all~$m$, but any flip $j\in\mathcal{V}^m$
moves $S^m$ off $b^m$, so $H^m(z\oplus e_j)\ge 1$ by integrality of $a^m_j$ (under either the
squared or the hinge form).  Under~(A1) this penalty increase dominates the cut-term change,
while the remaining penalties can only rise from their zero minimum; the two facts together
yield $\Delta_j(z)>0$ for every such~$j$ and prevent Stage~2 from executing any flip
(Theorem~\ref{thm:m3_fail}).
\end{remark}

\subsection{M.~4: Penalty-Relaxation Descent}

Recall from~(\ref{eq:m4_relaxed}) that outer iteration~$r$ uses energy
$Q^{(r)}(z) = H_{\mathrm{cut}}(z) + \lambda^{(r)}\sum_m H^m(z)$
with $\lambda^{(r)} = \lambda/f(r)$.

\begin{theorem}[Within-iteration strict descent]\label{thm:m4_descent}
For any outer iteration~$r$, every flip executed in Stage~1 or Stage~2 of that
iteration strictly decreases $Q^{(r)}$.
\end{theorem}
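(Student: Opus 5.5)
The argument mirrors Theorem~\ref{thm:m3_descent}, with the fixed energy $Q$ replaced by the iteration-$r$ energy $Q^{(r)}$ from~(\ref{eq:m4_relaxed}). Throughout, the outer index $r$ is fixed and all flips executed while the algorithm remains at that $r$ are considered. The first observation is that within iteration $r$ the function $Q^{(r)}$ is a single, fixed function of $z$. Its coefficient $\lambda^{(r)}=\lambda/f(r)$ depends only on $r$, which is not incremented until Stage~2 fails to find an admissible flip. Hence Lemma~\ref{lem:fund} applies with $E=Q^{(r)}$ and gives
\begin{equation*}
  Q^{(r)}(z\oplus e_j)=Q^{(r)}(z)+\Delta^{(r)}_j(z)
\end{equation*}
for every $z$ and $j$ encountered in iteration $r$.

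The proof then splits according to the two kinds of executed flip. For a Stage~1 flip in iteration $r$, the greedy rule~(\ref{eq:greedy_flip}) with $\Delta_j$ replaced by $\Delta^{(r)}_j$ executes $j^*$ only when $\Delta^{(r)}_{j^*}(z^{(t)})<0$. The identity above then yields $Q^{(r)}(z^{(t+1)})<Q^{(r)}(z^{(t)})$. For a Stage~2 flip, Algorithm~\ref{alg:m4} executes $j^*\in\mathcal{J}(z)$ only inside the branch guarded by condition~(\ref{eq:m4_stage2_cond}), namely $\Delta^{(r)}_{j^*}(z)<0$. The selected $j^*=\arg\min_{j\in\mathcal{J}(z)}\Delta^{(r)}_j(z)$ therefore satisfies this inequality, and the same identity gives strict decrease. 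Since every flip executed at iteration $r$ falls into one of these two cases, every such flip strictly decreases $Q^{(r)}$.

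The only point needing care is bookkeeping rather than analysis: after a Stage~2 flip, Stage~1 restarts at the same $r$, so the energy used in the subsequent descent is still $Q^{(r)}$. In the other branch, where no admissible flip exists, no flip is executed and $r$ is incremented. That transition lies outside the scope of the statement, which is precisely why the claim is within-iteration only. Since $\lambda^{(r+1)}\neq\lambda^{(r)}$ in general, $Q^{(r)}$ and $Q^{(r+1)}$ are different functions, and no monotonicity across the increment is asserted. I would make this explicit so the theorem is not misread as global descent, in contrast to Theorem~\ref{thm:m3_descent}.
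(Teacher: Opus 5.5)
Your proposal is correct and follows essentially the same route as the paper: both Stage~1 and Stage~2 flips are gated by $\Delta^{(r)}_{j^*}(z)<0$, and Lemma~\ref{lem:fund} applied to the fixed energy $Q^{(r)}$ turns that gate into strict decrease. Your extra bookkeeping is a useful clarification the paper leaves implicit: the arg-min inherits negativity from the existence test, $r$ stays fixed across a Stage~2 restart, and nothing is claimed across the increment $r\to r+1$.
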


\begin{proof}
Stage~1 flips only when $\Delta^{(r)}_{j^*}(z) < 0$; Stage~2 flips only after the same test.
In both cases Lemma~\ref{lem:fund} gives, for consecutive states $z^-,z^+$ along the trajectory,
\begin{equation}
  Q^{(r)}(z^+)
  = Q^{(r)}(z^-) + \underbrace{\Delta^{(r)}_{j^*}(z^-)}_{<\,0}
  < Q^{(r)}(z^-).
\end{equation}
This holds whenever $\lambda^{(r)} > 0$, equivalently $f(r) < \infty$.
\end{proof}

\begin{corollary}[No cycles within an iteration]
Within outer iteration~$r$, no configuration is visited twice.
\end{corollary}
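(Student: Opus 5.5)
The plan is to mirror the argument of Theorem~\ref{thm:m3_term}, using Theorem~\ref{thm:m4_descent} in place of Theorem~\ref{thm:m3_descent}.

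Fix an outer iteration $r$. Inside it, the coefficient $\lambda^{(r)}=\lambda/f(r)$ is a single positive constant, so $Q^{(r)}$ is a fixed function on $\{0,1\}^N$ throughout the iteration. This is the point to state explicitly. The schedule changes only when $r$ is incremented, which is exactly the boundary excluded from the claim.

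Index the configurations visited during iteration $r$ as $z^{(t)}$, counting both Stage~1 and Stage~2 flips executed at that $r$. Now suppose, toward a contradiction, that $z^{(t_1)}=z^{(t_2)}$ for some $t_1<t_2$ within the iteration.

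By Theorem~\ref{thm:m4_descent}, each of the $t_2-t_1\ge 1$ intervening flips strictly decreases $Q^{(r)}$. Chaining these inequalities gives
\[
Q^{(r)}\bigl(z^{(t_2)}\bigr)<Q^{(r)}\bigl(z^{(t_1)}\bigr).
\]
But equality of the configurations forces equality of the values of the fixed function $Q^{(r)}$, which is a contradiction.

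As a consequence, the within-iteration trajectory visits at most $2^N$ distinct states, so it makes at most $2^N-1$ flips. Hence the inner loop at a fixed $r$ terminates even without any bound on $r$.

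The only delicate step is making sure that no flip inside the iteration is evaluated under a different energy. Two points need checking. First, the Stage~2 gate~\eqref{eq:m4_stage2_cond} uses the same $\Delta^{(r)}$ as Stage~1. Second, the restart after a successful Stage~2 flip stays at the same $r$, as Algorithm~\ref{alg:m4} specifies.

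Once both points are confirmed, the chaining step is purely formal. It also makes clear that the guarantee does not extend across iterations: revisits under different $Q^{(r)}$ are not excluded, which is why $R_{\max}$ is needed.
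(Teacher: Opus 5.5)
Your proposal is correct and matches the paper's proof: the paper fixes $r$, uses Theorem~\ref{thm:m4_descent} to get strict decrease of $Q^{(r)}$ at every executed flip, and then reruns the chaining-contradiction argument of Theorem~\ref{thm:m3_term} with $Q$ replaced by $Q^{(r)}$. Your explicit checks, that the Stage~2 gate uses the same $\Delta^{(r)}_j$ and that the post-repair restart stays at the same $r$, are exactly what makes $Q^{(r)}$ a single fixed function along the within-iteration trajectory; the paper leaves this implicit.
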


\begin{proof}
Fix outer iteration~$r$.  By Theorem~\ref{thm:m4_descent}, along the Stage~1--Stage~2 trajectory in
that iteration,
\begin{equation}
  Q^{(r)}\bigl(z^{(t+1)}\bigr) < Q^{(r)}\bigl(z^{(t)}\bigr)
\end{equation}
at every executed flip.  The proof of Theorem~\ref{thm:m3_term} applies verbatim with $Q$ replaced
by~$Q^{(r)}$, so no configuration is visited twice within iteration~$r$.
\end{proof}

\begin{theorem}[Nontermination across outer iterations without a finite~$R_{\max}$]\label{thm:m4_cycle}
Without an outer bound~$R_{\max}$, M.~4 may cycle indefinitely across outer iterations.
\end{theorem}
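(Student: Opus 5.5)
The claim is existential, so the plan is to exhibit a concrete admissible schedule and instance on which M.~4, run without $R_{\max}$, revisits configurations forever. This mirrors Theorem~\ref{thm:m2_cycle}. The Remark after Algorithm~\ref{alg:m4} already flags the mechanism: when $f$ is constant, the energy $Q^{(r)}$ is the same in every outer iteration. The proof therefore splits into a within-iteration part, which is controlled by the Corollary following Theorem~\ref{thm:m4_descent}, and a cross-iteration part, where no global energy is monotone.

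\textbf{Step 1 (choice of schedule).} I take $f\equiv 1$. This is non-decreasing with $f(1)=1$, so it is admissible under~(\ref{eq:m4_relaxed})'s hypotheses, and it gives $\lambda^{(r)}=\lambda$ and $Q^{(r)}=Q$ for every $r$.

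\textbf{Step 2 (the trajectory repeats each iteration).} Consider a configuration $z_A$ on which an outer iteration ends with the flag \emph{repaired} set false. This means Stage~1 has terminated at $z_A$, $C(z_A)>0$, and no $j\in\mathcal{J}(z_A)$ has $\Delta_j(z_A)<0$. Such a $z_A$ exists by the same kind of instance used in Theorem~\ref{thm:m3_fail}, in which every connectivity-improving flip breaks a tight constraint.

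When $r$ increments, Algorithm~\ref{alg:m4} reruns Stage~1 on $Q^{(r+1)}=Q$ starting from $z_A$. Because $z_A$ is already a local minimum of $Q$, no flip is made. Stage~2 then faces the same $\mathcal{J}(z_A)$ with the same signs of $\Delta_j(z_A)$, so it again admits no flip.

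The state sequence is therefore $z_A, z_A, \dots$ with the outer index increasing without bound. I call this a cycle in the sense of a periodic configuration trajectory that never hits the success exit, which is the exit at $C(z)=0$. Without $R_{\max}$, the outer loop never halts.

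\textbf{Step 3 (a more substantive variant, if wanted).} If a cycle of length greater than one is preferred, I would use a periodic-looking schedule with $f$ constant from some index onward, combined with tie-breaking in the $\arg\min$ that differs deterministically. I expect this to be unnecessary, because the statement only asserts possibility.

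\textbf{Main obstacle.} The delicate point is justifying that such a $z_A$ exists, meaning a QUBO-feasible, disconnected Stage~1 minimum at which every $j\in\mathcal{J}(z_A)$ has $\Delta_j(z_A)\ge 0$. For this I would invoke Theorem~\ref{thm:m3_fail} directly, under~(A1), (A3), and integer coefficients. It suffices that some tight constraint $m$ satisfies $\mathcal{J}(z_A)\subseteq\mathcal{V}^m$. An example is the min-size instance from Theorem~\ref{thm:m2_nodescent}, where $\mathcal{J}(z_A)$ is a singleton whose flip shrinks an island below $N_{\min}$. Theorem~\ref{thm:m3_fail} then gives $\Delta_j(z_A)>0$ for all $j\in\mathcal{J}(z_A)$, which is exactly what Step~2 requires.
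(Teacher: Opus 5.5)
Your proposal is correct and follows the paper's argument. A constant schedule (your $f\equiv 1$ is the only constant choice compatible with $f(1)=1$) makes $Q^{(r)}$ identical in every outer iteration, so a disconnected Stage-1 minimum $z_A$ is re-entered with no flips and $r$ increments forever. The detour through Theorem~\ref{thm:m3_fail} is unnecessary: local minimality of $z_A$ already gives $\Delta_j(z_A)\ge 0$ for every $j$, in particular on $\mathcal{J}(z_A)$, and that is exactly what the paper's proof uses. The only real existence requirement is therefore $C(z_A)>0$, which you state more explicitly than the paper does.
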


\begin{proof}
Let $f$ be constant ($f(r) = c$ for all $r$), so
\begin{equation}
  Q^{(r)} = Q^{(1)}
  \qquad \forall\, r,
\end{equation}
i.e., the relaxed energy is identical across all outer iterations.

Let $z_A$ be the Stage~1 local minimum reached in iteration~1, so $\Delta^{(1)}_j(z_A) \geq 0$
for every~$j$.  In particular, $\Delta^{(1)}_j(z_A) \geq 0$ for every $j \in \mathcal{J}(z_A)$,
so Stage~2 of iteration~1 finds no $j^*$ satisfying~(\ref{eq:m4_stage2_cond}); $r$ is
incremented to~2.

Iteration~2 inherits $z_A$ as its starting point and uses the same energy $Q^{(2)} = Q^{(1)}$.
Since $z_A$ is already a local minimum of $Q^{(2)}$, Stage~1 terminates immediately with no flips
executed.  Stage~2 again finds no improving flip in $\mathcal{J}(z_A)$, and $r$ increments to~3.

By induction, every subsequent iteration begins and ends at $z_A$ with no configuration change.
Without an outer bound $R_{\max}$, the counter $r$ increments indefinitely and the algorithm
does not terminate.
\end{proof}

\begin{theorem}[QUBO feasibility requires $\lambda^{(r)} > \Delta_{\max}$]\label{thm:m4_qubo}
Under condition~(A3) interpreted with $\Delta_j$ replaced by $\Delta^{(r)}_j$,
Stage~1 of outer iteration~$r$ produces a QUBO-feasible configuration if $\lambda^{(r)} >
\Delta_{\max}$.  Conversely, if $\lambda^{(r)} \leq \Delta_{\max}$, Stage~1 may terminate
at a QUBO-infeasible configuration.
\end{theorem}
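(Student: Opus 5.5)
The forward direction is an immediate application of Theorem~\ref{thm:m3_qubo} with $Q$ replaced by $Q^{(r)}$. Stage~1 of iteration~$r$ terminates at some $z$ with $\Delta^{(r)}_j(z)\ge 0$ for all $j$; termination holds by Theorem~\ref{thm:m4_descent} and the no-cycle corollary. Condition~(A3), read with $\Delta^{(r)}_j$ in place of $\Delta_j$, says every $z\notin\mathcal{F}_Q$ admits a flip with $\Delta^{(r)}_j(z)<0$. Hence the terminal configuration lies in $\mathcal{F}_Q$, since $\mathcal{F}_Q$ is defined by the zero sets of the $H^m$ and does not depend on the weight.

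I would also explain why $\lambda^{(r)}>\Delta_{\max}$ is the natural regime for that reading of (A3). Write
\[
\Delta^{(r)}_j(z)=\Delta^{H_{\mathrm{cut}}}_j(z)+\lambda^{(r)}\sum_m\Delta^m_j(z).
\]
When some flip lowers the total penalty by at least $1$ (integer coefficients, as in Step~2 of Theorem~\ref{thm:m3_fail}), we get $\Delta^{(r)}_j(z)\le \Delta_{\max}-\lambda^{(r)}<0$. This mirrors the role of (A1) and makes the hypothesis consistent.

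For the converse, the plan is an explicit instance. Take $\lambda^{(r)}\le\Delta_{\max}$ and find an infeasible configuration $z$ that is a local minimum of $Q^{(r)}$. The first attempt is a minimal example. Start from a feasible $z_0$ and let $j$ be the flip attaining $\Delta_{\max}$, chosen so that the flip \emph{decreases} the cut by $\Delta_{\max}$, for instance by moving a bus with a heavy incident line into its neighbour's island. Arrange that this flip violates exactly one equality constraint by one unit, say the min-size or one-hot constraint, so the penalty rises by exactly $1$. Then
\[
\Delta^{(r)}_j(z_0)=-\Delta_{\max}+\lambda^{(r)}\le 0.
\]
This is strictly negative when $\lambda^{(r)}<\Delta_{\max}$. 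In the boundary case $\lambda^{(r)}=\Delta_{\max}$, I would use the reverse viewpoint instead: at $z=z_0\oplus e_j$, the repairing flip back has change $\Delta_{\max}-\lambda^{(r)}\ge 0$.

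It then remains to check that $z=z_0\oplus e_j$ is a local minimum of $Q^{(r)}$. The repair flip back has nonnegative change, and I would choose the instance, for example with a small star or path graph plus slack bits, so that every other flip either leaves the violated penalty unchanged and raises the cut, or repairs the violation only by incurring a cut or penalty increase of at least $\lambda^{(r)}$. Since the statement only says ``may terminate'', one constructed instance is enough. Greedy Stage~1 reaches $z$ from $z_0$, or from a starting point whose best flip leads to $z$, and then stops. So the output is infeasible.

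The main obstacle is this local-minimality check in the coupled islanding QUBO. Alternative repairs through slack variables or other assignment bits could undercut the intended minimum. I would handle this by making the violated constraint one whose only cheap repair is the reverse flip, for example a one-hot violation where the other island bit carries the heavy cut edge. If that proves awkward, I would fall back to stating the converse abstractly: exhibit weights and a configuration on a toy graph where all $NK$ flips are enumerated directly.
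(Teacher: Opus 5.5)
Your forward direction is the paper's argument. Stage~1 terminates by Theorem~\ref{thm:m4_descent} and the no-cycle corollary, and (A3) read with $\Delta^{(r)}_j$ excludes an infeasible terminal point; as in the paper, the threshold $\lambda^{(r)}>\Delta_{\max}$ is never actually used.

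The gap is in the converse. Your mechanism needs a flip that lowers $H_{\mathrm{cut}}$ by $\Delta_{\max}$ while raising the penalty sum by exactly one unit, so that the reverse flip costs $\Delta_{\max}-\lambda^{(r)}\ge 0$. The islanding QUBO has no such flip. The only bits in $H_{\mathrm{cut}}$ are assignment bits $y_{i,k}$, and each sits at least in the one-hot penalty of bus $i$ and the size penalty of island $k$. The bits that touch a single penalty, the slacks, do not enter $H_{\mathrm{cut}}$ at all. Likewise, ``moving a bus into its neighbour's island'' is two flips under one-hot encoding, not one. From a feasible $z_0$, the flip $y_{i,k'}:0\to1$ therefore raises the penalty by some $p\ge 2$. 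At $z=z_0\oplus e_j$ the reverse flip then has change $\Delta_{\max}-p\lambda^{(r)}<0$ for all $\lambda^{(r)}\in(\Delta_{\max}/p,\Delta_{\max}]$, including the boundary case you single out. In addition, if $s^{(M)}_{k'}$ is even, one slack flip removes the size violation at zero cut cost. The local-minimality check you defer is exactly where the construction fails.

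The paper avoids this by treating the converse as an existence claim about penalty QUBOs $H_{\mathrm{cut}}+\lambda^{(r)}\sum_m H^m$. Its witness is the one-variable QUBO $Q^{(r)}=\lambda^{(r)}z_j+\lambda^{(r)}(1-z_j)^2$. At $z_j=0$ the constraint is violated, the only flip has change $\lambda^{(r)}-\lambda^{(r)}=0$, and $\Delta_{\max}=\lambda^{(r)}$; a cut coefficient $c\ge\lambda^{(r)}$ covers all $\lambda^{(r)}\le\Delta_{\max}$. With a single flip, local minimality is immediate. This is what your fallback should be: fully abstract, not an enumerated $NK$-bit islanding graph. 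The paper concedes this witness is not islanding-faithful.

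If you insist on an islanding instance, the coupling undermines the threshold picture. With uniform weight $\lambda^{(r)}$, take a transit bus assigned to two islands, with the second island's size slack raised by one. This is an infeasible local minimum for every $\lambda^{(r)}\ge\Delta_{\max}/2$:
deleting either copy trades one unit of one-hot penalty for one unit of size penalty and cannot lower the cut;
every other assignment flip adds at least two penalty units against a cut decrease of at most $\Delta_{\max}$;
and slack flips only raise penalties.
That proves the literal converse. But the same configuration is also a local minimum for $\lambda^{(r)}>\Delta_{\max}$, so it says nothing about $\Delta_{\max}$ being the threshold.
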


\begin{proof}
\textit{Sufficiency.}
The hypothesis $\lambda^{(r)} > \Delta_{\max}$ together with~(A3), interpreted with
$\Delta_j$ replaced by $\Delta^{(r)}_j$, guarantees that every QUBO-infeasible configuration admits
some flip $j$ with $\Delta^{(r)}_j(z)<0$: this is exactly the content of
Theorem~\ref{thm:m3_qubo} with $\lambda_m$ replaced by $\lambda^{(r)}$.  The
contrapositive then yields $z\in\mathcal{F}_Q$ at every Stage~1 local minimum of $Q^{(r)}$.

\textit{Necessity (tightness of the bound).}
The threshold $\lambda^{(r)} > \Delta_{\max}$ is tight for the penalty-based
QUBO construction used here.  Concretely, consider any QUBO of the form $Q^{(r)} = H_{\mathrm{cut}} +
\lambda^{(r)}\sum_{m'} H^{m'}$ together with a configuration~$z$ and a single index~$j$
satisfying
\begin{equation}
\begin{aligned}
  H^m(z) &= 1,
  &
  \Delta^m_j(z) &= -1, \\
  \Delta^{n}_j(z) &= 0 \quad \forall n\neq m,
  &
  \Delta^{H_{\mathrm{cut}}}_j(z) &= \lambda^{(r)}.
\end{aligned}
\end{equation}
Such configurations are realisable in any QUBO that mixes an objective term with a
constraint penalty. As a minimal witness, take the single-variable instance
$Q^{(r)} = \lambda^{(r)}\,z_j + \lambda^{(r)}(1-z_j)^2$,
where $H_{\mathrm{cut}} = \lambda^{(r)} z_j$ and $H^m = (1-z_j)^2$.
At $z_j = 0$ the constraint is violated ($H^m = 1$), and the unique repair flip on $j$
gives $\Delta^{H_{\mathrm{cut}}}_j = \lambda^{(r)}$ and $\Delta^m_j = -1$.
For any such instance,
\begin{equation}
  \Delta^{(r)}_j(z)
  = \Delta^{H_{\mathrm{cut}}}_j(z) + \lambda^{(r)}\Delta^m_j(z)
  = \lambda^{(r)} - \lambda^{(r)}
  = 0,
\end{equation}
so Stage~1 has no improving flip and may terminate at the QUBO-infeasible
configuration~$z$.  Hence whenever $\lambda^{(r)} \leq \Delta_{\max}$, QUBO feasibility
of every Stage~1 local minimum cannot be guaranteed by the penalty mechanism alone.

This necessity is structural: it expresses tightness of the penalty hierarchy
$\lambda > \Delta_{\max}$ rather than the existence of a worst-case configuration in any
particular islanding instance.  In the islanding QUBO of Section~\ref{quantum},
slack variables $u^{(\cdot)}_{k,b}$ are constraint-exclusive but do not enter
$H_{\mathrm{cut}}$, while assignment variables $y_{i,k}$ enter $H_{\mathrm{cut}}$ but
are coupled across the one-hot, size, generation, load, and coherency penalties; the
specific worst case above is therefore not a faithful islanding witness, and
empirically QUBO infeasibility at Stage~1 termination is not observed once
$\lambda^{(r)} > \Delta_{\max}$ is enforced (as in M.~3).
\end{proof}

\begin{theorem}[Stage~2 descent condition]\label{thm:m4_stage2}
Consider a flip $j \in \mathcal{J}(z)$ for which
$\Delta^{H_{\mathrm{cut}}}_j(z) < 0$ and $\sum_m \Delta^m_j(z) = \beta > 0$
(connectivity gain at the cost of a QUBO penalty increase).
The Stage~2 condition $\Delta^{(r)}_j(z) < 0$ holds if and only if
\begin{equation}
  \lambda^{(r)} \;<\;
  \frac{\bigl|\Delta^{H_{\mathrm{cut}}}_j(z)\bigr|}{\beta}.
  \label{eq:m4_stage2_threshold}
\end{equation}
\end{theorem}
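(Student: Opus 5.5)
The plan is a direct algebraic computation starting from the decomposition of the per-flip change of the relaxed energy. First, I would apply Lemma~\ref{lem:fund} to $Q^{(r)}=H_{\mathrm{cut}}+\lambda^{(r)}\sum_m H^m$ from~(\ref{eq:m4_relaxed}). Linearity of the finite-difference operator $E\mapsto E(z\oplus e_j)-E(z)$ then gives
\begin{equation}
  \Delta^{(r)}_j(z) \;=\; \Delta^{H_{\mathrm{cut}}}_j(z) + \lambda^{(r)}\sum_m \Delta^m_j(z)
  \;=\; \Delta^{H_{\mathrm{cut}}}_j(z) + \lambda^{(r)}\beta .
\end{equation}
This mirrors the decomposition~\eqref{eq:m1_delta_decomp} used for M.~1, except that a single common weight $\lambda^{(r)}$ multiplies all penalties. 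The common weight is exactly what allows the penalty changes to be collapsed into the scalar $\beta$.

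Next, I would use the sign hypotheses. Since $\Delta^{H_{\mathrm{cut}}}_j(z)<0$, we can write $\Delta^{H_{\mathrm{cut}}}_j(z)=-\bigl|\Delta^{H_{\mathrm{cut}}}_j(z)\bigr|$. The Stage~2 condition~(\ref{eq:m4_stage2_cond}) then becomes
\begin{equation}
  \lambda^{(r)}\beta \;<\; \bigl|\Delta^{H_{\mathrm{cut}}}_j(z)\bigr| .
\end{equation}
Because $\beta>0$, dividing by $\beta$ preserves the direction of the inequality, which yields~(\ref{eq:m4_stage2_threshold}). Each step is an equivalence, namely adding a constant and dividing by a positive number, so both directions of the ``if and only if'' follow at once. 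I would also note that $j\in\mathcal{J}(z)$ plays no role in the algebra. It only places the flip inside the Stage~2 candidate set, so that the derived condition is precisely the gate that M.~4 tests.

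There is no substantive obstacle here. The only point that needs care is to confirm that every penalty in $Q^{(r)}$ carries the same coefficient $\lambda^{(r)}$, as stated in~(\ref{eq:m4_relaxed}), rather than constraint-specific weights $\lambda_m$. With per-constraint weights, the sum would not factor as $\lambda^{(r)}\beta$, and the threshold would instead be a weighted condition. I would close with the interpretation: because $f$ is non-decreasing, $\lambda^{(r)}=\lambda/f(r)$ is non-increasing, so once~(\ref{eq:m4_stage2_threshold}) holds at some $r$ it continues to hold at every later iteration at the same $z$. This illustrates the tension with Theorem~\ref{thm:m4_qubo}, since~(\ref{eq:m4_stage2_threshold}) combined with $\lambda^{(r)}>\Delta_{\max}$ forces $\beta<\bigl|\Delta^{H_{\mathrm{cut}}}_j(z)\bigr|/\Delta_{\max}\le 1$.
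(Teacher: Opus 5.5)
Your proposal is correct and matches the paper's proof: the paper likewise expands $\Delta^{(r)}_j(z)=\Delta^{H_{\mathrm{cut}}}_j(z)+\lambda^{(r)}\sum_m\Delta^m_j(z)=-\alpha+\lambda^{(r)}\beta$ with $\alpha:=\bigl|\Delta^{H_{\mathrm{cut}}}_j(z)\bigr|$ and then divides by $\beta>0$ to get the equivalence. Your closing remarks also hold: $\lambda^{(r)}$ is non-increasing, and $\alpha\le\Delta_{\max}$ forces $\beta<1$ whenever $\lambda^{(r)}>\Delta_{\max}$.
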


\begin{proof}
Write $\alpha := |\Delta^{H_{\mathrm{cut}}}_j(z)| > 0$.  By definition of $\Delta^{(r)}_j$,
\begin{equation}
  \Delta^{(r)}_j(z)
  = \Delta^{H_{\mathrm{cut}}}_j(z) + \lambda^{(r)} \sum_m \Delta^m_j(z)
  = -\alpha + \lambda^{(r)}\beta.
\end{equation}
Therefore
\begin{equation}
\begin{aligned}
  \Delta^{(r)}_j(z) < 0
  &\quad\Longleftrightarrow\quad
  \lambda^{(r)}\beta < \alpha \\
  &\quad\Longleftrightarrow\quad
  \lambda^{(r)} < \frac{\alpha}{\beta}
  = \frac{\bigl|\Delta^{H_{\mathrm{cut}}}_j(z)\bigr|}{\beta},
\end{aligned}
\end{equation}
which is~\eqref{eq:m4_stage2_threshold}.
\end{proof}

\begin{theorem}[Fundamental incompatibility]\label{thm:m4_incompat}
If the threshold $\alpha/\beta$ in~(\ref{eq:m4_stage2_threshold}) satisfies
$\alpha/\beta \leq \Delta_{\max}$, then no single value of $\lambda^{(r)}$ can
simultaneously guarantee QUBO feasibility (Theorem~\ref{thm:m4_qubo}) and
enable the Stage~2 descent flip (Theorem~\ref{thm:m4_stage2}).
\end{theorem}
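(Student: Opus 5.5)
The plan is to combine the two preceding results as interval conditions on $\lambda^{(r)}$ and show the intervals are disjoint. First, I would take the flip $j \in \mathcal{J}(z)$ from Theorem~\ref{thm:m4_stage2}, with $\alpha = |\Delta^{H_{\mathrm{cut}}}_j(z)| > 0$ and $\beta = \sum_m \Delta^m_j(z) > 0$. By that theorem, the Stage~2 descent flip is enabled exactly when
\begin{equation*}
  \lambda^{(r)} \in \Lambda_{\mathrm{rep}} := \bigl(0,\ \alpha/\beta\bigr).
\end{equation*}
Second, the feasibility guarantee from the sufficiency part of Theorem~\ref{thm:m4_qubo} applies on $\Lambda_{\mathrm{feas}} := (\Delta_{\max}, \infty)$. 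For $\lambda^{(r)} \le \Delta_{\max}$, the necessity part of that theorem says the penalty mechanism alone cannot guarantee QUBO feasibility. The phrase ``guarantee QUBO feasibility'' should therefore be read as membership in $\Lambda_{\mathrm{feas}}$.

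The core step is the one-line intersection argument. Suppose some $\lambda^{(r)}$ lies in both sets. Then
\begin{equation*}
  \Delta_{\max} < \lambda^{(r)} < \alpha/\beta \le \Delta_{\max},
\end{equation*}
where the last inequality is the hypothesis. This is a contradiction, so $\Lambda_{\mathrm{feas}} \cap \Lambda_{\mathrm{rep}} = \emptyset$. I would also note a consistency check: because $\alpha \le \Delta_{\max}$ by definition of $\Delta_{\max}$, the hypothesis holds automatically whenever $\beta \ge 1$. With integer coefficients, $\beta$ typically is at least $1$ after a feasible Stage~1 point (Step~2 of Theorem~\ref{thm:m3_fail}). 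This shows the incompatibility is the generic case rather than a vacuous one.

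The main obstacle is interpretive rather than computational. The converse direction of Theorem~\ref{thm:m4_qubo} only exhibits a witness instance showing that feasibility \emph{may} fail below $\Delta_{\max}$; it does not show that it \emph{does} fail in the given instance. I would therefore state the conclusion carefully. For any $\lambda^{(r)} \le \alpha/\beta \le \Delta_{\max}$, the sufficient condition of Theorem~\ref{thm:m4_qubo} fails, so feasibility is no longer guaranteed by the penalty hierarchy. For any $\lambda^{(r)} > \Delta_{\max}$, the Stage~2 inequality fails by Theorem~\ref{thm:m4_stage2}. The statement is then proved in the ``guarantee'' sense, with the tightness example from Theorem~\ref{thm:m4_qubo} cited to show that the guarantee cannot be recovered by a sharper threshold.
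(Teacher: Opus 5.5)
Your proposal is correct and follows the paper's proof: the paper also reads Theorem~\ref{thm:m4_qubo} as requiring $\lambda^{(r)} > \Delta_{\max}$ and Theorem~\ref{thm:m4_stage2} as requiring $\lambda^{(r)} < \alpha/\beta \le \Delta_{\max}$. It then concludes from the disjointness of $(\Delta_{\max},\infty)$ and $(0,\alpha/\beta)$. Your two extras are sound refinements the paper does not spell out: the caution that ``guarantee'' refers only to the sufficient condition, and the observation that $\alpha \le \Delta_{\max}$, so the hypothesis holds whenever $\beta \ge 1$, which is automatic when the unweighted penalties are integer-valued, as in the islanding QUBO.
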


\begin{proof}
Theorem~\ref{thm:m4_qubo} requires
\begin{equation}
  \lambda^{(r)} > \Delta_{\max}.
  \label{eq:m4_incompat_qubo}
\end{equation}
Theorem~\ref{thm:m4_stage2} and the hypothesis $\alpha/\beta \leq \Delta_{\max}$ require
\begin{equation}
  \lambda^{(r)} < \frac{\alpha}{\beta} \leq \Delta_{\max}.
  \label{eq:m4_incompat_stage2}
\end{equation}
The open intervals $(\Delta_{\max},\infty)$ from~\eqref{eq:m4_incompat_qubo} and
$(0,\alpha/\beta)$ from~\eqref{eq:m4_incompat_stage2} are disjoint, so no single $\lambda^{(r)}$
satisfies both.
\end{proof}

\begin{remark}
Theorem~\ref{thm:m4_incompat} isolates a structural conflict in M.~4 arising from the use of a
single relaxation weight~$\lambda^{(r)}$ across both stages.
Stage~1 QUBO feasibility and Stage~2 connectivity repair impose disjoint requirements on
$\lambda^{(r)}$ whenever $\alpha/\beta \leq \Delta_{\max}$, so no universal choice of the weight
reconciles both objectives.
A finite outer bound~$R_{\max}$ guarantees overall termination.  Each outer iteration
terminates in finitely many flips by the no-cycles corollary, and the bound $R_{\max}$
limits the total number of iterations.  However, this does not remove the trade-off, and
terminal configurations need not belong to~$\mathcal{F}^*$ without further assumptions.
\end{remark}

\subsection{M.~5: Unified-Energy Descent}

M.~5 encodes connectivity directly in the energy function $\mathcal{Q}$ and
descends that function with the single greedy rule~(\ref{eq:m5_flip}).
Recall condition~(A5) from~(\ref{eq:m5_hierarchy}):
\begin{equation*}
  \mathrm{(A5)}\quad
  \mu > \max_{z,\,j}\,|\Delta_j(z)|.
\end{equation*}

\begin{lemma}[Connectivity-infeasible implies an improving flip]\label{lem:m5_conn}
Under (A5), if $C(z) > 0$ and there exists a bus assignment flip $j$ with
$\Delta^C_j(z) \leq -1$, then $\Delta^u_j(z) < 0$.
\end{lemma}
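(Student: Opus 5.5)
The argument is a direct one-line estimate built on the decomposition~(\ref{eq:delta_uni}), $\Delta^u_j(z) = \Delta_j(z) + \mu\,\Delta^C_j(z)$. I will bound the two summands separately for the flip $j$ supplied by the hypothesis and then combine them using (A5).

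\textbf{Step 1 (QUBO term).} The QUBO term is controlled without using anything about $z$ beyond the definition of the maximum in (A5). Since $\max_{z',j'}|\Delta_{j'}(z')|$ ranges over all configurations and all indices, it dominates the particular pair $(z,j)$. Hence
\begin{equation*}
  \Delta_j(z) \;\leq\; |\Delta_j(z)| \;\leq\; \max_{z',j'}|\Delta_{j'}(z')| \;<\; \mu .
\end{equation*}
This maximum is finite because $\{0,1\}^{NK}$ is finite, so (A5) is a meaningful condition.

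\textbf{Step 2 (connectivity term).} By hypothesis $\Delta^C_j(z)\le -1$, and $\mu>0$ (it exceeds a nonnegative quantity). Therefore $\mu\,\Delta^C_j(z) \le -\mu$.

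\textbf{Step 3 (combine).} Adding the two bounds gives
\begin{equation*}
  \Delta^u_j(z) \;=\; \Delta_j(z) + \mu\,\Delta^C_j(z) \;<\; \mu - \mu \;=\; 0 .
\end{equation*}
Finally, by Lemma~\ref{lem:fund} applied to $E=\mathcal{Q}$, this says $\mathcal{Q}(z\oplus e_j) < \mathcal{Q}(z)$. So $z$ cannot be a termination point of~(\ref{eq:m5_flip}).

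\textbf{Where care is needed.} There is no real obstacle here, only two points that must be stated correctly.
\begin{itemize}
  \item The hypothesis $C(z)>0$ is not used in the inequality itself. It only guarantees that $z$ is connectivity-infeasible, which is the setting where the lemma is applied.
  \item The integrality $\Delta^C_j(z)\le -1$ is essential. It holds because $C$ counts components and is therefore integer-valued, so any strict decrease is at least $1$. This is what lets the strict inequality of (A5) absorb the QUBO change. A fractional decrease would require a correspondingly larger $\mu$.
\end{itemize}
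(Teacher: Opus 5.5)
Your proof is correct and is essentially the paper's argument. Both use the decomposition $\Delta^u_j(z)=\Delta_j(z)+\mu\,\Delta^C_j(z)$, bound $\Delta_j(z)\le\max_{z',j'}|\Delta_{j'}(z')|<\mu$ by (A5), and bound $\mu\,\Delta^C_j(z)\le-\mu$, which gives $\Delta^u_j(z)<0$; your side remarks that $\mu>0$ follows from (A5) and that $C(z)>0$ is never used in the inequality are accurate.
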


\begin{proof}
By~(A5) and~(\ref{eq:delta_uni}),
\begin{equation}
  \Delta^u_j(z)
  = \Delta_j(z) + \mu\,\Delta^C_j(z)
  \leq \max_{z',j'}\bigl|\Delta_{j'}(z')\bigr| + \mu \cdot (-1)
  < 0,
\end{equation}
where the last inequality uses $\mu > \max_{z',j'}|\Delta_{j'}(z')|$ from~(A5).
\end{proof}

\begin{remark}
The existence of a flip with $\Delta^C_j(z) \leq -1$ whenever $C(z) > 0$ is not guaranteed
by graph connectivity alone; it requires that the disconnected island has a neighboring bus
(not yet assigned to it) whose reassignment bridges two of its components in a single step.
For the power network topologies considered in Section~\ref{sec:numerical}---which are
mesh-like with buses of degree $\geq 2$---this condition is satisfied in all tested instances.
\end{remark}

\begin{theorem}[Full feasibility at termination]\label{thm:m5_full}
Under (A4), (A5), and the flip-existence condition of
Lemma~\ref{lem:m5_conn}, the output $z^*$ of M.~5 satisfies $z^* \in \mathcal{F}^*$.
\end{theorem}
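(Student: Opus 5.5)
The proof has three steps: show that M.~5 terminates at a local minimum $z^*$ of $\mathcal{Q}$, show that $C(z^*)=0$ using Lemma~\ref{lem:m5_conn}, and show that $z^*\in\mathcal{F}_Q$ using (A4).

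\textbf{Termination.} Every executed flip satisfies $\Delta^u_{j^*}(z)<0$. By Lemma~\ref{lem:fund} applied with $E=\mathcal{Q}$, $\mathcal{Q}$ therefore strictly decreases at each step. The argument of Theorem~\ref{thm:m3_term} then carries over verbatim with $Q$ replaced by $\mathcal{Q}$: no configuration is revisited, so the algorithm terminates in finitely many flips. At termination the update rule~(\ref{eq:m5_flip}) fails, which means
\[
\Delta^u_j(z^*)=\Delta_j(z^*)+\mu\,\Delta^C_j(z^*)\ \ge\ 0 \quad \text{for all } j .
\]

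\textbf{Connectivity, by contradiction.} Suppose $C(z^*)>0$. The flip-existence hypothesis gives a bus-assignment flip $j$ with $\Delta^C_j(z^*)\le -1$. Lemma~\ref{lem:m5_conn} then yields $\Delta^u_j(z^*)<0$, which contradicts local minimality. Hence $C(z^*)=0$, that is, $z^*\in\mathcal{F}_C$.

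\textbf{QUBO feasibility, again by contradiction.} Suppose $z^*\notin\mathcal{F}_Q$. Since we have just shown $C(z^*)=0$, condition (A4) applies and supplies a flip $j$ with $\Delta_j(z^*)<0$ and $\Delta^C_j(z^*)=0$. Then
\[
\Delta^u_j(z^*)=\Delta_j(z^*)+\mu\cdot 0<0,
\]
which again contradicts local minimality. So $z^*\in\mathcal{F}_Q$, and altogether $z^*\in\mathcal{F}^*=\mathcal{F}_Q\cap\mathcal{F}_C$.

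The connectivity step is the one that needs care, because it depends on the precise form of the flip-existence condition. Lemma~\ref{lem:m5_conn} is conditioned on $C(z)>0$, so I will read the flip-existence condition as asserting, for every configuration with $C(z)>0$, the existence of a flip with $\Delta^C_j(z)\le -1$. Since $C$ is integer-valued, any strict decrease of $C$ is automatically at least $1$, so no further work is needed there.

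The order of the two contradiction arguments also matters. The (A4) step uses $C(z^*)=0$, so the connectivity argument must come first.
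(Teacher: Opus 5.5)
Your proposal is correct and follows the paper's proof essentially step for step. Termination comes from strict descent of $\mathcal{Q}$, reusing Theorem~\ref{thm:m3_term}; then $C(z^*)=0$ follows from Lemma~\ref{lem:m5_conn} and the flip-existence condition, and $z^*\in\mathcal{F}_Q$ follows from (A4), in the same order, with your universal reading of the flip-existence condition matching how the paper uses it.
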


\begin{proof}
Every flip satisfies $\Delta^u_{j^*}(z) < 0$ at execution.  By Lemma~\ref{lem:fund},
\begin{equation}
  \mathcal{Q}(z^+) = \mathcal{Q}(z^-) + \Delta^u_{j^*}(z^-) < \mathcal{Q}(z^-)
\end{equation}
along the trajectory.  Theorem~\ref{thm:m3_term} applies with $Q$ replaced by~$\mathcal{Q}$, so no
configuration repeats and termination occurs at some~$z^*$ with
\begin{equation}
  \Delta^u_j(z^*) \geq 0
  \qquad \forall\, j.
  \label{eq:m5_full_local_min}
\end{equation}

\textit{Part~I (DFS feasibility).}
Under the flip-existence condition of Lemma~\ref{lem:m5_conn}, $C(z) > 0$ implies
$\Delta^u_j(z) < 0$ for some~$j$, hence no such $z$ can satisfy~\eqref{eq:m5_full_local_min}.  Therefore
\begin{equation}
  C(z^*) = 0,
  \qquad
  z^* \in \mathcal{F}_C.
\end{equation}

\textit{Part~II (QUBO feasibility).}
Since $C(z^*) = 0$, the unified energy reduces to the QUBO energy:
\begin{equation}
  \mathcal{Q}(z^*) = Q(z^*).
\end{equation}
Suppose, toward a contradiction, that $z^* \notin \mathcal{F}_Q$.
Since $C(z^*) = 0$, condition~(A4) applies: there exists~$j$ with
$\Delta_j(z^*) < 0$ and $\Delta^C_j(z^*) = 0$.
Consequently
\begin{equation}
  \Delta^u_j(z^*)
  = \Delta_j(z^*) + \mu\,\Delta^C_j(z^*)
  = \Delta_j(z^*)
  < 0,
\end{equation}
contradicting~\eqref{eq:m5_full_local_min}.
Thus $z^* \in \mathcal{F}_Q$, and with Part~I, $z^* \in \mathcal{F}^* = \mathcal{F}_Q \cap \mathcal{F}_C$.
\end{proof}

\begin{corollary}
Under the conditions of Theorem~\ref{thm:m5_full}, M.~5 is the only method among the five
that guarantees $z^* \in \mathcal{F}^*$.  By contrast, Theorem~\ref{thm:m3_fail} shows that
M.~3 has an irreducible failure path (Stage~2 is blocked whenever Stage~1 satisfies every
constraint exactly), and Theorem~\ref{thm:m4_incompat} shows that no single $\lambda^{(r)}$
simultaneously achieves QUBO feasibility and connectivity repair in M.~4.
\end{corollary}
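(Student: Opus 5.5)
The corollary has two halves, and I would prove them separately.

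\emph{Positive half.} This follows directly from Theorem~\ref{thm:m5_full}. Under (A4), (A5), and the flip-existence condition of Lemma~\ref{lem:m5_conn}, every output $z^*$ of M.~5 lies in $\mathcal{F}^*$. No further argument is needed beyond citing that theorem.

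\emph{Negative half.} For each of M.~1--M.~4, I would exhibit a failure mode that is compatible with the stated hypotheses. This shows that none of those methods has a guarantee of membership in $\mathcal{F}^*$ in the same generality. I would go method by method, reusing the earlier theorems.

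\textbf{M.~1.} Theorem~\ref{thm:m1_nodfs} already gives a QUBO local minimum $z^*$ with $C(z^*)>0$. Since M.~1 stops at any $Q$-local minimum, $z^*\notin\mathcal{F}_C$.

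\textbf{M.~2.} I would use Theorems~\ref{thm:m2_nodescent}--\ref{thm:m2_cycle}. A Stage~2 flip can create a min-size violation, and the iteration can then cycle between $z_A$ and $z_B$. The outer bound $K$ is therefore hit with $C(z)>0$ at termination.

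\textbf{M.~3.} I would invoke Theorem~\ref{thm:m3_fail}. If Stage~1 ends at $z\in\mathcal{F}_Q$ with $C(z)>0$, and every $j\in\mathcal{J}(z)$ lies in some $\mathcal{V}^m$, then every such flip has $\Delta_j(z)>0$. Here $\mathcal{V}^m$ is nonempty, since each assignment variable $y_{i,k}$ enters its one-hot constraint with coefficient $1$. Hence the Stage~2 gate rejects every candidate, and M.~3 returns a configuration with $C>0$.

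\textbf{M.~4.} I would invoke Theorem~\ref{thm:m4_incompat}. When $\alpha/\beta\le\Delta_{\max}$, every $\lambda^{(r)}$ either blocks the repair flip (so the algorithm exits with $C>0$ at $R_{\max}$) or falls below $\Delta_{\max}$. In the latter case, by Theorem~\ref{thm:m4_qubo}, Stage~1 may stop outside $\mathcal{F}_Q$. Either way, membership in $\mathcal{F}^*$ is not guaranteed.

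\emph{Main obstacle.} The delicate point is that ``only method'' must be read as ``only method with a guarantee under these hypotheses.'' The counterexamples must therefore be consistent with (A4), (A5), and the flip-existence condition, and not merely exist in some unrestricted instance. I would argue that these hypotheses constrain only M.~5's energy landscape. They do not prevent the M.~3 situation, because that situation requires only a disconnected $\mathcal{F}_Q$ point with all $\mathcal{J}$-flips hitting a tight constraint. Such a point is compatible with a bridging flip whose $\Delta_j>0$ but whose $\Delta^u_j<0$.

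I would state this compatibility explicitly, noting that the flip-existence condition guarantees $\Delta^C_j\le-1$ but says nothing about the sign of $\Delta_j$. The same flip is therefore blocked in M.~3 and accepted in M.~5. This single observation simultaneously separates M.~5 from M.~3 and, through M.~4's relaxation tension, from M.~4.
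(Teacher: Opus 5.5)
Your proposal is correct and follows the paper's route: the paper gives no separate argument for this corollary beyond pairing Theorem~\ref{thm:m5_full} with the failure results for the other methods (Theorems~\ref{thm:m1_nodfs}, \ref{thm:m2_nodescent}--\ref{thm:m2_cycle}, \ref{thm:m3_fail}, and \ref{thm:m4_incompat}), and your one-hot observation and compatibility remark only sharpen this. One small fix: in M.~2 the configuration returned after the $K$-th outer iteration is the post-Stage-2 state $z_B$, which may have $C(z_B)=0$, so you should conclude $z_B\notin\mathcal{F}^*$ from its min-size violation rather than from $C>0$.
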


\end{document}